\newcommand{\eps}{\epsilon}
\newcommand{\EE}{\mathcal{E}}
\newcommand{\QQ}{\mathcal{Q}}
\newcommand{\CC}{\mathcal{C}}
\def\argmin{\mathop{\rm argmin}}
\def\poly{\mathop{\rm poly}}
\newcommand\TOPSTEAL{\operatorname{TOP-STEAL}}
\newcommand\dq{\operatorname{dq}}
\newtheorem{theorem}{Theorem}[section]
\newtheorem{corollary}[theorem]{Corollary}
\newtheorem{lemma}[theorem]{Lemma}
\newtheorem{proposition}[theorem]{Proposition}
\newtheorem{remark}[theorem]{Remark}
\theoremstyle{remark}
\newtheorem{claim}[theorem]{Claim}
\theoremstyle{definition}
\newtheorem{definition}[theorem]{Definition}
\newcommand{\HuNote}[1]{}
\newcommand{\hlhf}[1]{#1}
\begin{document}
\title{On the Complexity of Computing an Equilibrium in\\ Combinatorial Auctions}

\author{Shahar Dobzinski\thanks{Weizmann Institute of Science.} \and Hu Fu\thanks{Microsoft Research.} \and Robert Kleinberg\thanks{Cornell University.}}

\maketitle

\begin{abstract}
We study combinatorial auctions where each item is sold separately but
simultaneously via a second price auction. We ask whether it is possible to
efficiently compute in this game a pure Nash equilibrium with social welfare
close to the optimal one.

We show that when the valuations of the bidders are submodular, 
in many interesting settings (e.g., constant number of bidders, budget additive bidders) computing an equilibrium with good welfare is essentially as easy as computing, completely ignoring incentives issues, an allocation with good welfare. On the other hand, for subadditive valuations, we show that computing an equilibrium requires exponential communication. Finally, for XOS (a.k.a. fractionally subadditive) valuations, we show that if there exists an efficient algorithm that finds an equilibrium, it must use techniques that are very different from 
our current ones.
\end{abstract}

\thispagestyle{empty}\maketitle\setcounter{page}{0}\newpage

\section{Introduction}

%\subsection{Background}

Combinatorial auctions have received much attention in recent years. The literature 
is quite large, but roughly speaking it is fair to say that most papers take an engineering-like approach, by designing algorithms that find allocations with high welfare (e.g., \cite{Feige09, V08}), or designing algorithms that achieve good approximation ratios when bidders play their dominant strategies.

Recently, several papers (e.g., \cite{BhawalkarR11,CKS10,FFGL13,HassidimKMN11,LB10,PLST12,R12,ST12,ST13})  took a more ``existential'' approach to analyzing combinatorial auctions. Instead of designing algorithms that give a specific recipe for computing an efficient allocation, a simple game is defined; usually it is assumed that each item is sold separately but simultaneously via some kind of an auction. The implicit assumption is that the players somehow reach an equilibrium, and the quality of this equilibrium is analyzed. The literature considers several variants of such games that differ in the single item auction type (first or second price), and in the solution concept (pure nash, mixed nash, etc.).

The goal of the current paper is to somewhat narrow the gap between these two mindsets. That is, given such a game we want to determine whether we can compute a good equilibrium efficiently.

%\subsubsection{The Setting}

\vspace{0.1in}\noindent \textbf{The Setting.} In a combinatorial auction there is a set $M$ of items ($|M|=m$) for sale. There is
also a set $N$ of bidders ($|N|=n$). Each bidder $i$ has a valuation function
$v_i:2^M\rightarrow \mathbb R$ that denotes the value of bidder $i$ for every
possible subset of the items. We assume that the valution functions are monotone
(for all $S\subseteq T$, $v_i(S)\geq v_i(T)$) and normalized
($v_i(\emptyset)=0$). The goal is to find an allocation of the items
$(S_1,\ldots, S_n)$ that maximizes the social welfare: $\Sigma_iv_i(S_i)$. We
are interested in algorithms that run in time $\poly(m,n)$. 

If each valuation is naively represented by $2^m$ numbers, reading the input alone will be too slow for our algorithms. Therefore, the standard approach assumes that the valuations belong to some subclass that can be succinctly represented, or that the valuations are represented by black boxes that can only 
% RDK: the word ``answer'' was missing here
     answer
specific types of queries. The standard queries are value query (given a bundle $S$, what is the value of $v(S)$?) and demand query (given $p_1,\ldots, p_m$, what is a bundle that maximizes $v(S)-\Sigma_{j\in S}p_j$?). For impossibility results we will also consider the communication complexity model, where the queries are not restricted and we only count the number of bits communicated. See \cite{BlumrosenNisan07} for a thorough description of these models.

Christodoulou, Kovacs, and Schapira \cite{CKS10} were the first to analyze the quality of such equilibria in combinatorial auctions. They study simultaneous second price auctions. In this game the strategy of each bidder is to bid one number $b_i(j)$ for every item $j$. Player $i$ gets item $j$ if $b_i(j)\geq b_{i'}(j)$ for all players $i'$ 
% RDK: cleaned up the phrasing of this parenthetical remark
     (if there are several players with the maximal bid for $j$, then $j$ is given to some arbitrary one of them). 
When player $i$ wins the set of items $S_i$, his payment is $\Sigma_{j\in S_i}\max_{i'\neq i}b_{i'}(j)$. To avoid trivial equilibria, they use the standard no overbidding assumption 
% RDK: changed ``that'' to ``which''
     which
states that for every player $i$ and bundle $S$ we have that $v_i(S)\geq \Sigma_{j\in S}b_i(j)$. See 
% RDK: rearranged citations in numerical order
     \cite{BhawalkarR11,CKS10} 
for a discussion.

%\subsubsection*{Our Goal: The Computational Efficiency of Finding a Good Equilibrium}

\vspace{0.1in}\noindent \textbf{Our Goal: The Computational Efficiency of Finding a Good Equilibrium.} In this paper we study the basic task of computing a pure equilibrium when the auction format is a simultaneous second price auction. We start with discussing submodular valuations, where for every item $j$ and bundles $S\subseteq T$ we have that $v_i(j|S)\geq v_i(j|T)$.%
% RDK: added the following explanation.
\footnote{We use the standard notation $v(A|B)$
to denote $v(A \cup B) - v(B)$\hlhf{, the marginal value of bundle $A$ given
bundle~$B$}.}
\hlhf{A special class of submodular valuations also of interest for us
is the \emph{budget additive valuations}, where the valuation is fully
described by a budget~$b$ and the value of each single item, with $v(S)$ given by $\min \{
b, \sum_{j \in S} v(j)\}$. }

In \cite{CKS10} it is proved that if all valuations are submodular then the (pure) price of anarchy is at most $2$. They even show that this bound can be achieved constructively: the greedy algorithm of \cite{LLN06} 
% RDK: changed this and the following ``--'' to ``---''
     --- 
coupled with appropriately chosen prices 
     --- 
actually finds an equilibrium that necessarily has a price of anarchy of $2$. As
better algorithms are known for the general case (an $\frac e
{e-1}$-approximation algorithm \cite{V08}) and for some interesting special
cases (e.g., a $\frac 4 3$-approximation algorithm for \hlhf{budget} additive
bidders \cite{CG10}, and an FPTAS for a constant number of \hlhf{budget} additive bidders \cite{AM04}), they ask whether we can find an equilibrium with an approximation ratio better than $2$ in polynomial time.

Our first set of results partially answers this question by providing a series of black-box reductions that show that in several interesting cases we can exploit an arbitrary approximation algorithm to efficiently find an equilibrium that provides the same (or better) approximation ratio:

\vspace{0.1in}\noindent \textbf{Theorem: } 
% RDK: Reworded the theorem statement to make it clear that 
%      the allocation $(S_1,\ldots,S_n)$ is part of the input.
%%%%% Let $(S_1,\ldots, S_n)$ be an allocation. 
Suppose that we are in one of the following settings: (1) a constant number of
submodular bidders, or (2) an arbitrary number of \hlhf{budget} additive bidders. 
% RDK: I didn't think a paragraph break was warranted here.
Then, 
    given any allocation $(S_1,\ldots,S_n)$,
using only polynomially many value queries it is possible to find a no-overbidding equilibrium with welfare at least $\Sigma_iv_i(S_i)$.

\vspace{0.1in}\noindent 
In fact, we show that the portion of the theorem that deals with a constant number of submodular bidders can be extended to the more general setting of {\em bounded competition}, where the number of bidders may be arbitrary, but the number of bidders that compete for any single item is at most a constant. (See the definition of $t$-restricted instances in Section~\ref{sec:submodular}.)
\hlhf{As an additional consequence of our analysis, we} observe that for general submodular valuations one can modify any $\alpha$-approximation algorithm to compute an equilibrium that provides an 
% RDK: added hyphen...
     $\alpha$-approximation, 
at the cost of an additional 
% RDK: deleted the superfluous words ``number of'' 
pseudo-polynomial number of value queries.

Next, we discuss supersets of submodular valuations such as subadditive valuations, where
for every $S$ and $T$, $v(S)+v(T)\geq v(S\cup T)$). For this class, in \cite{BhawalkarR11} it was observed that sometimes there is no pure equilibrium at all. We show that even if equilibrium is known to exist, it is impossible to efficiently find it:

\vspace{0.1in}\noindent \textbf{Theorem: } It takes exponential communication to find a pure no-overbidding equilibrium in combinatorial auctions with subadditive bidders, even if such equilibrium is known to exist.

\vspace{0.1in}\noindent This is the first computational impossibility result for finding a pure equilibrium in combinatorial auctions.\footnote{Related is \cite{DNO14} which shows that a Bayesian equilibrium is hard to find if the distributions are correlated. Independently from and concurrently with our work, Cai and Papadimitriou \cite{CP14} showed computational complexity of Bayesian Nash in similar settings.  Our work focuses on pure Nash and gives both algorithmic results and \emph{communication} complexity lower bounds.  We consider the two works incomparable and complementing each other.}

Of particular interest is the class of XOS valuations that lies between submodular and subadditive valuations in the complement free hierarchy \cite{LLN06}. In \cite{CKS10} it was shown that the price of anarchy is at most $2$ also if the valuations are XOS. They furthermore provide a natural dynamic that always finds an equilibrium, but the communication complexity of this dynamic is exponential.

Whether this dynamic can be altered to converge in polynomial time is unknown.
In fact, in \cite{CKS10} even the more basic question of whether this dynamic
converges in polynomial time if the valuations are known to be submodular (and
not just XOS) was mentioned as open. We provide a negative answer to this question by
showing an instance of combinatorial auctions with two submodular bidders in
which the dynamic takes exponential time to end. This example leads to
a more general impossibility result --- in fact our most technically involved
result --- we therefore take this opportunity to dive into a more technical discussion.

%\subsubsection*{The Surprsing Flexibility of XOS Oracles}

\vspace{0.1in}\noindent\textbf{The Surprising Flexibility of XOS Oracles.} Recall the definition of an XOS valuation. A valuation $v$ is called XOS if
there exist additive valuations (clauses) $a_1,\ldots, a_t$ such that
$v(S)=\max_ka_k(S)$. An \emph{XOS oracle} receives as input a bundle~$S$ and returns the \emph{maximizing clause} of~$S$ --- an additive valuation $a_k$ from $a_1, \ldots, a_t$, such that $v(S)=a_k(S)$.

Let us now describe the exponential time algorithm 
% RDK: moved ``of \cite{CKS10}'' here
      of \cite{CKS10}
for finding 
% RDK: added ``an equilibrium with''
     an equilibrium with
XOS valuations. For simplicity, assume that there are only two players. Start with some arbitrary allocation $(S_1,S_2)$. The algorithm queries the XOS oracle of player $1$ for the maximizing clause $a$ of $S_1$. Player $1$ bids $b_1(j)=a(j)$ for each $j\in S_1$ and $b_1(j)=0$ otherwise. Player $2$ now calculates his demand $S'_2$ at price $b_1(j)$ for each item $j$ and is allocated $S'_2$. The algorithm continues similarly: we query the XOS oracle of player $2$ for the maximizing clause $a'$ of $S'_2$. Player $2$ bids $b_2(j)=a'(j)$ for each $j\in S_2$ and $b_2(j)=0$ otherwise. Player $1$ now calculates his demand $S'_1$ at price $b_2(j)$ for each item $j$ and is allocated $S'_1$.

This is essentially a game with two players that are both following best reply strategies. Will this process ever end? First, it is not hard to prove that at every step the sum of prices 
% RDK: added ``i.e. winning bids''
    (i.e., winning bids)
goes up. Since the sum of prices of an allocation is at most the welfare of an allocation (by the definition of XOS) and since the number of all bundles is finite, the process ends after finitely many steps. The proof that the algorithm ends with an allocation that is a 
% RDK: added hyphen
     two-approximation 
essentially follows from the earlier algorithm of \cite{DNS05} that is identical to the exponential-time algorithm above except that in the algorithm of \cite{DNS05} each player responds exactly once. As noted above, the dynamic may take exponentially many steps to end. However, we observe that the dynamic is not well defined as there are many possible different XOS oracles for a single valuation $v$, each may lead to a different path; some paths may potentially end after polynomially many steps.

To see that there are many possible different XOS oracles for a single valuation $v$, consider a submodular valuation $v$ (recall that every submodular valuation is also XOS). It is known \cite{DNS05} that the following algorithm finds a maximizing clause $a$ of a bundle $S$: arbitrarily order the items, and rename them for convenience to $1,2,\ldots, |S|$. Now for every $j\in S$, let $a(j)=v(j|\{1,\ldots, j-1\})$. Observe that different 
% RDK: changed ``ordering'' to ``orderings''
     orderings 
of the items 
% RDK: changed ``results'' to ``result''
     result 
in different maximizing clauses.

This observation may seem useless, but in fact our positive results for submodular bidders are all based on it. More generally, our algorithms can be seen as a variant of the best-reply algorithm of \cite{CKS10}, where instead of best-reply strategies we use better-reply strategies.%
% RDK: moved the period before the footnote
\footnote{A strategy is a \emph{better reply} for a player if it increases his utility over the current strategy that he is playing (but does not necessarily maximize it).} We show that specific implementations of the XOS oracle can guarantee a fast termination of the better-reply 
% RDK: moved the period before the parenthetical sentence.
     algorithm. (The implementation of the XOS oracle also depends on the valuations of the other players.)

We are unable to extend our algorithms for submodular bidders to XOS bidders,
nor to prove that finding an equilibrium with XOS
bidders requires exponential communication. However, we do observe
that all \hlhf{existing} algorithms for general XOS valuations that use XOS oracles (including pure approximation algorithms that do not take incentives issues into account, e.g., \cite{DS06}) work with \emph{any} XOS oracle and do not assume a specific implementation\footnote{In a sense, every reasonable algorithm must work with any implementation. Otherwise, for example, the choice of an ``unfair'' implementation as to which clause to return among several possibilities may be correlated with global information on the valuation, and hence may convey ``illegal'' information.}. For this kind of algorithms we can prove an impossibility result.

\vspace{0.1in}\noindent \textbf{Definition: } A no-overbidding equilibrium $(S_1, S_2)$ is called \emph{traditional} with respect to some XOS oracles $\mathcal O_{v_1}$ and $\mathcal O_{v_2}$ if for each bidder $i$ and item $j$, if $j\in S_i$ then $b_i(j)$ equals the price of $j$ in $\mathcal O_{v_i}(S_i)$, and if $j\notin S_i$ then $b_i(j)=0$.

\vspace{0.1in}\noindent In other words, an equilibrium is called traditional if the prices are consistent with the XOS oracles of the players. Notice that the equilibrium obtained by the algorithm of \cite{CKS10} is indeed traditional.

\vspace{0.1in}\noindent \textbf{Theorem: } Let $A$ be a deterministic algorithm that always produces a traditional equilibrium with respect to some XOS oracles $\mathcal O_{v_i}$. If $A$ is allowed to make only XOS queries to the oracles $\mathcal O_{v_i}$'s in addition to demand and value queries, then, $A$ makes an exponential number of queries. 

\vspace{0.1in}\noindent The proof is inspired by a proof of \cite{NisanBlog} for the hardness of finding an equilibrium in games using only queries that are analogous to value queries. Our proof differs in several technical aspects: first, it holds also for the stronger and more complicated demand queries and not just value queries (for that we use techniques from \cite{BDO12}). Second, the additional structure of our setting implies a more subtle construction, e.g., we have to prove a new isoperimetric inequality for odd graphs. 

% \subsection*{Related Works}

%\subsubsection*{Organization}

%In Section~\ref{sec:submodular}, we present algorithms that, given (approximately) efficient allocations, return equilibria in simultaneous second-price auctions with weakly better social welfare (than that of the input), for submodular valuations.  We bound the algorithms' running time for several scenarios.  In Section~\ref{sec:subadd}, we present a communication complexity lower bound showing that finding or deciding the existence of equilibrium cannot be done with polynomial communication for subadditive valuations.  In Section~\ref{sec:xos}, we present our impossibility result for XOS valuations, showing that existing types of algorithms for finding equilibria for these valuations can be forced to run for exponential time. Finally, since our algorithms are essentially local search ones, in Appendix \ref{app-PLS} we discuss the connection between the PLS completeness of finding an allocation that locally maximizing the welfare and finding an equilibrium.

\section{Algorithms for Bidders with Submodular Valuations}
\label{sec:submodular}

We now provide algorithms that find good \hlhf{equilibria} when the valuations are submodular. Our goal is to take an allocation and convert it with ``little'' computational overhead to an equilibrium that has at least the same welfare as the initial allocation. 
% Our main working horse is the following generic process. We will show that
\hlhf{We first present a generic process that will be our main workhorse,
after which we show that}
different implementations of this process allow us to compute an equilibrium
with little computational overhead for several different settings. In
particular, this will allow us to prove the main result of this section: one can
take any allocation of combinatorial auctions with a constant number of
submodular bidders and find an equilibrium with at least the welfare of the given
allocation, using only polynomially many value queries. We note that a similar auction was considered in \cite{FKL12} in a slightly different context.

\vspace{0.075in}\noindent\textbf{The Iterative Stealing Procedure}
\begin{enumerate}
\item Start with an arbitrary allocation $(S_1, S_2, \ldots, S_n)$.

\item\label{step-update} Each bidder $i$ arbitrarily orders the items in $S_i$.
For every $j\in S_i$, \hlhf{set $b_i(j)=v_i(j|\{1,...,j-1\}\cap S_i)$}. For $j\notin S_i$ let $b_i(j)=0$.

%\item For every bidder $i$ let $p_j=\max_ib_i(j)$.

\item\label{step-steal} If there \hlhf{exist} some players $i,i'$ and item $j\in
S_{i'}$ such that $v_i(j|S_i)>b_{i'}(j)$ then let $S_{i'}=S_{i'}-\{j\}$ and
$S_i=S_{i}+\{j\}$. We say that $i$ \emph{steals} the item from $i'$. Return to
Step (\ref{step-update}). If there are no such players $i,i'$ and no item $j$, the process ends.
\end{enumerate}

\begin{remark}
It is interesting to note the relationship of the iterative stealing procedure to the exponential-time algorithm of \cite{CKS10} (which was 
% also
% Hu: removed "also"
described in the introduction). For simplicity assume that there are only two
bidders. Step~\ref{step-update} can be essentially seen as a call to an XOS
oracle \cite{DNS05}. In the XOS algorithm each bidder ``steals'' from the other
\hlhf{bidders} the set of \hlhf{items} that maximizes his profit, i.e., best-responds
given the bids of the others. In contrast, in our procedure whenever a player
steals an item he essentially plays a ``better reply'', i.e. plays a strategy better than his current one. To
see that, observe that since the valuations are submodular, if there exists a
set of items that a player can steal to maximize the profit, \hlhf{then} there
exists a single item that the player can steal to improve his profit.  Note that
an XOS valuation does not necessarily have this property.
\end{remark}

To give a full specification of the process, one has to specify how each player
orders his items (Step~\ref{step-update}) and which player steals which item if
there are multiple possible steals in Step \ref{step-steal}. The implementation
of Step \ref{step-steal} will turn out to be less important to us, and we will
focus on considering different implementations of Step \ref{step-update}. Before
\hlhf{this},
%considering different implementations, 
we show that stealing can only improve the welfare:

\begin{claim}\label{claim-increase-welfare}
Every steal increases the welfare.
\end{claim}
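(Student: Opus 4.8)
The plan is to show that when bidder $i$ steals item $j$ from bidder $i'$, the gain to $i$ strictly exceeds the loss to $i'$, so the total welfare $\sum_k v_k(S_k)$ strictly increases. Fix the allocation $(S_1,\dots,S_n)$ and the bids just before the steal, and suppose the steal is triggered by $v_i(j\mid S_i) > b_{i'}(j)$. The change in welfare is exactly $v_i(j\mid S_i) - v_{i'}(j\mid S_{i'}\setminus\{j\})$: bidder $i$ gains the marginal value of $j$ on top of $S_i$, and bidder $i'$ loses the marginal value of $j$ within his old bundle $S_{i'}$. So it suffices to show
\[
v_{i'}(j\mid S_{i'}\setminus\{j\}) \le b_{i'}(j),
\]
since then the welfare change is at least $v_i(j\mid S_i) - b_{i'}(j) > 0$ by the triggering condition.

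The key step is the inequality $v_{i'}(j\mid S_{i'}\setminus\{j\}) \le b_{i'}(j)$, and this is where submodularity of $v_{i'}$ enters. Recall how $b_{i'}(j)$ was set in Step~\ref{step-update}: bidder $i'$ fixed an ordering of $S_{i'}$, renamed the items $1,2,\dots,|S_{i'}|$, and set $b_{i'}(j) = v_{i'}(j\mid \{1,\dots,j-1\}\cap S_{i'})$, i.e., the marginal value of $j$ given the items preceding it in $i'$'s ordering. Now $\{1,\dots,j-1\}\cap S_{i'}$ is a subset of $S_{i'}\setminus\{j\}$, so by submodularity of $v_{i'}$ the marginal value of $j$ over the smaller set $\{1,\dots,j-1\}\cap S_{i'}$ is at least its marginal value over the larger set $S_{i'}\setminus\{j\}$; that is, $b_{i'}(j) = v_{i'}(j\mid\{1,\dots,j-1\}\cap S_{i'}) \ge v_{i'}(j\mid S_{i'}\setminus\{j\})$, which is precisely what we need.

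Putting the two pieces together: the welfare after the steal minus the welfare before is $v_i(j\mid S_i) - v_{i'}(j\mid S_{i'}\setminus\{j\}) \ge v_i(j\mid S_i) - b_{i'}(j) > 0$, where the last inequality is the steal-triggering condition in Step~\ref{step-steal}. Hence every steal strictly increases the social welfare. I do not anticipate a real obstacle here; the only thing to be careful about is bookkeeping the marginal-value identity for the welfare change (only bidders $i$ and $i'$ change bundles, and all other $v_k(S_k)$ are untouched) and making sure the subset relation $\{1,\dots,j-1\}\cap S_{i'} \subseteq S_{i'}\setminus\{j\}$ is invoked with submodularity in the correct direction.
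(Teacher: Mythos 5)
Your proof is correct and follows essentially the same route as the paper's: decompose the welfare change as $v_i(j\mid S_i) - v_{i'}(j\mid S_{i'}\setminus\{j\})$, bound the loss term by $b_{i'}(j)$ using submodularity (since the bid is the marginal value of $j$ over a subset of $S_{i'}\setminus\{j\}$), and conclude via the steal-triggering condition. No gaps.
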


\begin{proof}
Consider bidder $i$ stealing item $j$ from bidder $i'$. Let $(S_1,\ldots,S_n)$
be the allocation before the steal and $(S'_1,\ldots, S'_n)$ be the allocation
after the steal. We show that $v_i(S_i)+v_{i'}(S_{i'}) > v_i(S'_i)+v_{i'}(S'_{i'})$ which suffices to prove the claim since the allocation of bidders other than $i,i'$ did not change. To see this, observe that 
%\begin{align*}
$v_i(S'_i)+v_{i'}(S'_{i'}) = v_i(S_i) + v_i(j|S_i) + v_{i'}(S_{i'}) - v_{i'}(j|S_{i'}-\{j\})$.   
%\end{align*}

By submodularity $v_{i'}(j|S_{i'}-\{j\})\leq b_{i'}(j)$ (since
$b_{i'}(j)$ is the marginal value of $j$ given some subset of $S_{i'}-\{j\}$).
On the other hand, since bidder $i$ steals item $j$, $v_i(j|S_i)> b_{i'}(j)$,
and therefore \hlhf{$v_i(j|S_i) > v_{i'}(j|S_{i'}-\{j\})$}. This finishes the proof.
\end{proof}

%\subsection{Specific Implementations}

Next we show that the iterative stealing procedure ends in pseudo-polynomial time for any implementation of Step (\ref{step-update}) . We then provide two specific
implementations of Step (\ref{step-update}): in one, if all valuations
are \hlhf{budget} additive then regardless of the initial allocation the
 procedure terminates in polynomial time, and in the other the number of steals is at most $m^n$. We get that:

\begin{theorem} There exist implementations of the iterative stealing procedure such that: 
\begin{itemize}
\setlength{\itemsep}{0pt}
\item If there exists an $\alpha$-approximation algorithm for combinatorial
auctions with submodular bidders then a no-overbidding equilibrium for the
simultaneous second-price auction with the same approximation guarantee can be found \hlhf{by} running the approximation algorithm and then an additional pseudo-polynomial number of value queries.

\item If there is an $\alpha$-approximation algorithm for combinatorial auctions
with \hlhf{budget} additive bidders then a no-overbidding equilibrium with the same approximation guarantee can be found \hlhf{by} running the approximation algorithm and an additional polynomial number of value queries.

\item If there exists an $\alpha$-approximation algorithm for combinatorial
auctions with submodular bidders then a no-overbidding equilibrium with the same approximation guarantee can be found \hlhf{by} running the approximation algorithm and then
an additional $\poly(m,n)\cdot m^n$ value queries.
\end{itemize}
\end{theorem}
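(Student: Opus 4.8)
The plan is to establish three things in sequence: (1) the iterative stealing procedure, under \emph{any} implementation of Step~\ref{step-update}, always terminates in pseudo-polynomial time and ends at a no-overbidding equilibrium; (2) there is a specific implementation of Step~\ref{step-update} that, for budget additive bidders, terminates after only polynomially many steals; and (3) there is an implementation for which the number of steals is at most $m^n$. Combined with Claim~\ref{claim-increase-welfare}, which guarantees that welfare never decreases during the procedure, each of these yields the corresponding bullet: run the approximation algorithm to get an allocation of welfare at least $\alpha\cdot\OPT$, feed it as the initial allocation, and run the procedure; the output is a no-overbidding equilibrium whose welfare is at least that of the input, hence at least $\alpha\cdot\OPT$.

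\textbf{Termination and equilibrium (first bullet).} First I would verify that the terminal state is indeed a no-overbidding equilibrium. No-overbidding is immediate: for each $i$, the bids $b_i(j)=v_i(j\mid\{1,\dots,j-1\}\cap S_i)$ telescope by submodularity to $\sum_{j\in S}b_i(j)\le v_i(S)$ for any $S\subseteq S_i$, and marginals only get larger on subsets, so the inequality extends to all bundles $S$. For the equilibrium property: if the procedure has halted, no bidder $i$ wants any single item $j$ currently held by $i'$, i.e. $v_i(j\mid S_i)\le b_{i'}(j)=\max_{k\ne i}b_k(j)$ (the items $i$ does not hold are priced by their current owners; items held by nobody have price $0$ and $v_i(j\mid S_i)\le 0$ would be needed, but monotonicity handles the degenerate unallocated case). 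Because each $v_i$ is submodular, the marginal-value-per-item condition being violated nowhere implies $i$ has no profitable deviation to \emph{any} bundle — this is exactly the ``single item suffices'' observation in the Remark, and it upgrades ``no profitable single steal'' to ``best response.'' For the running time, Claim~\ref{claim-increase-welfare} shows welfare strictly increases with every steal; if valuations are integer-valued and bounded by some $V$, the welfare is an integer in $[0,nV]$, so there are at most $nV$ steals, each costing $\poly(m,n)$ value queries to detect (scan all pairs $i,i'$ and items $j$, recomputing the $b$'s). This is pseudo-polynomial.

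\textbf{Budget additive, polynomial termination (second bullet).} Here I would choose the ordering in Step~\ref{step-update} carefully. For a budget additive bidder with per-item values $v_i(j)$ and budget $b_i$, order the items in $S_i$ in \emph{decreasing} order of $v_i(j)$. Then $b_i(j)=v_i(j)$ for every item until the running sum would exceed $b_i$, after which the ``boundary'' item gets the residual and all later (smaller) items get bid $0$. The key invariant to track is a potential function — for instance, the sum over bidders of (number of items receiving a nonzero bid), or more robustly a lexicographic potential recording the multiset of positive bids. I would argue that a steal either moves an item to a bidder who then gives it a nonzero bid while the loser's configuration changes in a controlled monotone way, or it strictly advances the potential; the decreasing-order rule prevents the kind of ``shuffling'' that causes exponential behavior. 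The main obstacle, and the part I expect to need the most care, is ruling out long cycles of steals where the same item bounces between bidders: one must show that the decreasing-order tie-breaking, together with the strict welfare increase, forces the potential to make measurable progress (polynomial in $m$ and $n$) on every steal, not just eventually.

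\textbf{The $m^n$ bound (third bullet).} For general submodular bidders I would use an implementation where each bidder maintains a \emph{fixed} global order on all of $M$ and, in Step~\ref{step-update}, orders $S_i$ consistently with it; similarly fix a deterministic rule for which steal to perform in Step~\ref{step-steal}. The claim is that the allocation never repeats, and since there are at most $m^n$ allocations (each of $m$ items goes to one of $n$ bidders, or is unallocated — at most $(n+1)^m$, but one gets $m^n$ by a more careful count or by noting items start allocated and stay so), this bounds the number of steals. Non-repetition follows from Claim~\ref{claim-increase-welfare}: welfare strictly increases, so no allocation is visited twice. Thus the bound is really just ``number of distinct allocations,'' and the content is simply combining strict welfare monotonicity with finiteness of the allocation space; the fixed-order implementation is what makes the per-step bookkeeping polynomial. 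Across all three bullets the single genuine difficulty is the polynomial termination for budget additive bidders — the other two are corollaries of the welfare-monotonicity claim plus a counting argument.
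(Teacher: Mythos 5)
Your first bullet is essentially the paper's argument: welfare strictly increases with every steal (Claim~\ref{claim-increase-welfare}), so with integer (or $\Delta$-granular) valuations there are at most $n\cdot v_{\max}/\Delta$ steals, and your verification that the terminal state is a no-overbidding equilibrium (telescoping for no-overbidding, and the submodular ``single item suffices'' upgrade from no-profitable-steal to best response) is correct and matches the paper's Remark. The problems are in the other two bullets. For the third bullet you have a genuine error: the number of allocations of $m$ items to $n$ bidders is $n^m$, not $m^n$, and for constant $n$ these are exponentially far apart ($2^m$ versus $m^2$ for two bidders). Your ``no allocation repeats because welfare strictly increases'' argument therefore only yields an exponential-in-$m$ bound, and no ``more careful count'' of allocations can rescue it. The paper's actual proof of the $m^{n}$ bound is the recursive $\TOPSTEAL$ procedure built around the notion of a \emph{top competitor}: once an item reaches a bidder who values it at least as much as every other bidder and that bidder places it first in his order, no one will ever steal it again, so it can be removed from the instance; the recursion $f_m(t)\le 1+f_m(t-1)+f_{m-1}(t)$ with $f_m(2)\le m$ and $f_1(t)\le 1$ is what gives $m^{n-1}$. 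None of this structure appears in your proposal, and it is the technical heart of the section.

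For the second bullet you correctly identify the difficulty (ruling out long cycles where items bounce between bidders) but do not resolve it, and your proposed implementation --- ordering each $S_i$ by decreasing per-item value --- is not the one the paper uses and does not obviously have the property the paper's argument hinges on. The paper orders so that a freshly stolen item goes \emph{last} in both the thief's and the victim's orders; this choice is precisely what makes the price $p_j=\max_i b_i(j)$ of every item monotone nondecreasing over the whole run. With that monotonicity in hand, the paper classifies held items as tight, weakly loose, or strongly loose, shows each bidder has at most one weakly loose item, that a tight item once stolen is never re-stolen by its former owner (so at most $nm$ tight steals total), that a strongly loose item is strongly loose at most once, and that every window of $O(nm)$ steals must contain a tight steal --- yielding $O(n^2m^2)$. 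Under your decreasing-value ordering, a steal can shift the ``budget boundary'' item of the victim and change his bids non-monotonically, so the price-monotonicity invariant is not available and your potential-function sketch has no engine to drive it. You would need either to switch to the paper's ordering or to supply a genuinely different monotone potential; as written, the second bullet is an honest statement of the problem rather than a proof.
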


Next we prove the third bullet, and the interested reader can look for the proof of the first two bullets in the appendix. Notice that it is clear that the number of value queries that the iterative stealing procedure makes is $poly(m,n)\cdot(\hbox{number of steals})$.

%\subsubsection*{Constant Number of Submodular Bidders}\label{subsect-constant-submodular}

%% \authornote{Bobby}{Maybe we should change the section title to ``Submodular Bidders with Bounded Competition'', or do something else (here? in the intro?) to indicate that the result applies to any setting where the number of bidders competing for any item is bounded, whether or not the overall number of bidders is bounded. The 'bounded competition' assumption strikes me as making a great deal of sense, even in settings with a large number of bidders. For example, items are keywords (in some hypothetical world where ad auctions are treated as combinatorial auctions) and the assumption says that for any particular keyword, there aren't too many advertisers that bid for it.}

\vspace{0.1in}\noindent\textbf{Constant Number of Submodular Bidders.} The main result of this section concerns
combinatorial auctions with a constant number of submodular bidders --- or, more
generally, those in which the number of bidders competing for any
single item is bounded above by a constant.
To get some intuition, it will
be instructive to consider instances with two bidders. Let $(S_1,S_2)$ be
some allocation. Set bids $b_i(j)$ as in the process using some
arbitrary order. If we happen to arrive at equilibrium, then we are already
done. Else, call bidder $i$ a top competitor of \hlhf{item~$j$ if}
$v_i(\{j\})\geq v_{i'}(\{j\})$, where $i'$ is the other bidder. If there is some
bidder $i$ and item $j\in S_i$ such \hlhf{that} $i$ is a top competitor for $j$,
then if we use an order where $j$ is first, player $i'$ does not want to steal
item $j$ from player $i$. Hence we can imagine that player $i$'s valuation is
$v_i(\cdot |\{j\})$ (which is still submodular), and look for \hlhf{an} equilibrium using the set of items $M-\{j\}$.

Suppose now that all items are not held by their top competitors.
Since we are not at equilibrium, some player $i$ can steal some item $j$
\hlhf{from} the other player $i'$. Observe that $i$ is a top competitor for item
$j$ (since either bidder $i'$ or bidder $i$ is the top competitor of $j$ and we
assumed that $i'$ is not the top competitor). Hence, after stealing the item we
are back to the previous case: bidder $i$ puts the item first in the order, and
we recurse again.  It is not too hard to see that the process ends after at most
$m$ steals with an equilibrium. This will be rigorously shown in
Lemma~\ref{lemma-t=2}.

We start the formal description of the implementation with a few definitions. Fix valuations $v_1,\ldots, v_n$. For every item $j$ let its \emph{competitors} be $C_j=\{i|v_i(\{j\})>0\}$. Bidder $i$ is a \emph{top competitor} for item $j$ if $v_i(\{j\})\geq v_{i'}(\{j\})$, for every other bidder $i'$. An instance is \emph{$t$-restricted} if for every item $j$ we have that $|C_j|\leq t$. Notice that every instance is $n$-restricted.
\hlhf{We will use $v_{i, |S}$ to denote the valuation $v_i(\cdot |S)$, and the terms ``prices'' and ``bids'' interchangeably when there is no confusion.}

We now give a recursive implementation of the process for $t$-restricted
instances with $n$ bidders and $m$ items. Let the maximum number of steals
that this procedure makes be $f_m(t)$. We show that $f_m(t)\leq
1+f_{m}(t-1)+f_{m-1}(t)$, $f_m(2)\leq m$ for every $m$, and that $f_1(t)\leq 1$
for every $t$. Hence, $f_m(n)\leq m^{n-1}$, as
needed.%
\footnote{Formally, one can prove by induction that $f_m(t)\leq \binom{m+t-1}{t-1} - 1$.}  
We now describe the process, \hlhf{$\TOPSTEAL((v_1,\ldots, v_n), (S_1, \cdots, S_n), M, t)$}. The procedure takes in a profile of valuations $(v_1, \ldots, v_n)$, an initial allocation $(S_1, \cdots, S_n)$ on a set of items~$M$ where each item is $t$-restricted, and returns an allocation $(S'_1,\ldots,S'_n)$ and \hlhf{bids $\{b_i(j)\}$} for each bidder~$i$ and item~$j$. We show that it produces a no-overbidding equilibrium whose social welfare is at least that of the initial allocation.  The implementation itself consists of the definition of $\TOPSTEAL$ in three disjoint cases.

In every procedure we assume that no
item belongs to a non-competitor in the initial allocation (we can check this condition and move items from non-competitors to competitors if necessary -- this only increases welfare). Observe that items with no competitors can be ignored.

\noindent\makebox[\linewidth]{\rule{8cm}{0.4pt}}
\noindent\textbf{Procedure $\TOPSTEAL((v_1,\ldots, v_n), (S_1, \cdots, S_n),M,t)$ (for $|M|=1$)}

If the item does not belong to a top competitor, then one top competitor steals the item.
%\begin{itemize}
%\item[] If the item does not belong to a top competitor, then one top competitor steals the item.
%\end{itemize}

\noindent\makebox[\linewidth]{\rule{8cm}{0.4pt}}

\noindent\textbf{Procedure \hlhf{$\TOPSTEAL((v_1,\ldots, v_n), (S_1, \cdots, S_n), M,t)$} (for $|M|>1$, $t=2$)}
\begin{enumerate}
\setlength{\itemsep}{0pt}
%\item[] We divide into two cases:
\item \emph{If there \hlhf{exist an} item $j$ and \hlhf{a} bidder $i$ such that
\hlhf{$j\in S_i$} and $i$ is \hlhf{a} top competitor for $j$}:
\begin{enumerate}
\setlength{\itemsep}{0pt}
% \item Let $v'_i(S)=v_i(S|\{j\})$.
\item Let $(S'_1,\ldots,S'_n)$ and \hlhf{$\{b'_{i'}(j')\}_{i' \in N, j' \in M}$} be the allocation and prices
returned by\\ \hlhf{ $\TOPSTEAL((v_{i, |\{j\}}, v_{-i}), (S_i \setminus \{j\},
S_{-i}),  M\setminus\{j\},t)$}. 
\item Let $b_{i'}(j')=b'_{i'}(j')$ for every $j'\neq j$.  Let
$b_i(j)=v_i(\{j\})$ and $b_{i'}(j)=0$ \hlhf{for each} $i'\neq i$. Return
$(S'_i\cup \{j\},S'_{-i})$ and the prices \hlhf{$\{b_i(j)\}_{i \in N, j \in M}$}.
\end{enumerate}

\item \label{step:no-top-comp-t=2}
\emph{Else, there is no item $j$ and bidder $i$ such that \hlhf{$j\in
S_i$} and $i$ is \hlhf{a} top competitor for $j$:}
\begin{enumerate}
\setlength{\itemsep}{0pt}
\item Each bidder $i$ arbitrarily orders his items. For every $j \in S_i$, set $b_i(j)=v_i(j|\{1,...,j-1\}\cap S_i)$. For $j\notin S_i$, set $b_i(j)=0$.
\item If $(S_1,\ldots, S_n)$ and the prices \hlhf{$\{b_i(j)\}_{i \in N, j \in
M}$} are an equilibrium, return them. 
\item Otherwise, find some bidder $i$ that can steal some item $j$ from bidder
$i'$, such that $i$ is a top competitor for~$j$.
\item Define $(S'_1,\ldots,S'_n)$ to be the allocation after $i$ steals $j$ from $i'$; thus $S'_k = S_k$ for $k \neq i,i'$, $S'_i = S_i \cup \{j\}$, $S'_{i'} = S_{i'} \setminus \{j\}$.
\label{step:no-top-comp-t=2-steal}
\item Return: \\ $\TOPSTEAL((v_1, \ldots, v_n),(S'_1, \ldots, S'_n),M,t)$.
\label{step:no-top-comp-t=2-recurse}
%% \item Run \hlhf{$\TOPSTEAL((v_{i, |\{j\}}, v_{-i}), (S_i \setminus \{j\},
%% S_{-i}), M\setminus \{j\},t)$} and obtain an allocation $(S'_1,\ldots,S'_n)$ and prices
%% \hlhf{$\{b'_{i'}(j')\}_{i' \in N, j' \in M}$}. 
%% \item Let $b_{i'}(j')=b'_{i'}(j')$ for every $j'\neq j$. Let $b_i(j)=v_i(\{j\})$
%% and $b_{i'}(j)=0$ \hlhf{for each} $i'\neq i$. Return
%% $(S'_i \cup \{j\},S'_{-i})$ and the prices \hlhf{$\{b_{i'}(j')\}_{i' \in N, j' \in M}$}.
\end{enumerate}
\end{enumerate}
\noindent\makebox[\linewidth]{\rule{8cm}{0.4pt}}
\noindent\textbf{Procedure \hlhf{$\TOPSTEAL((v_1,\ldots, v_n), (S_1, \cdots, S_n), M,t)$} (for $|M|>1$, arbitrary $t> 2$)}
\begin{enumerate}
\setlength{\itemsep}{0pt}
\item \emph{If there \hlhf{exist an} item $j$ and \hlhf{a} bidder $i$ such that
\hlhf{$j\in S_i$} and $i$ is \hlhf{a} top competitor for $j$}:
\begin{enumerate}
\setlength{\itemsep}{0pt}
% \item Let $v'_i(S)=v_i(S|\{j\})$.
\item Let $(S'_1,\ldots,S'_n)$ and \hlhf{$\{b'_{i'}(j')\}_{i' \in N, j' \in M}$}
be the allocation and prices returned by\\ \hlhf{$\TOPSTEAL((v_{i, |\{j\}},v_{-i}), (S_i \setminus \{j\}, S_{-i}), M \setminus \{j\},t)$}.  
\item Let $b_{i'}(j')=b'_{i'}(j')$ for every $j'\neq j$. Let $b_i(j)=v_i(\{j\})$
and $b_{i'}(j)=0$ \hlhf{for each} $i'\neq i$. Return
$(S'_i\cup\{j\},S'_{-i})$ and the prices \hlhf{$\{b_i(j)\}_{i \in N, j \in M}$}.
\end{enumerate}

\item \emph{Else, there is no item $j$ and bidder $i$ such that \hlhf{$j\in
S_i$} and $i$ is \hlhf{a} top competitor for $j$:}
\label{step:no-top-comp-t>2}
\begin{enumerate}
\setlength{\itemsep}{0pt}
\item For each bidder $i$, let $T_i$ be the set of all items for which $i$
is a top competitor.  Define $v'_i(S)=v_i(S-T_i)$ for any bundle~$S$. 
\item Let $(S'_1,\ldots,S'_n)$ and \hlhf{ $\{b_{i}(j)\}_{i \in N, j \in M}$} be the allocation and prices
returned by \\
$\TOPSTEAL((v'_1,\ldots,v'_n),(S_1, \cdots, S_n), M,t-1)$.
\label{step:no-top-comp-t>2-firstrecurse}
\item If $(S'_1,\ldots,S'_n)$ and the prices \hlhf{$\{b_{i}(j)\}_{i \in N, j \in
M}$} are an equilibrium with respect to $v_1,\ldots, v_n$ then return this allocation and prices.
\item Otherwise, find some bidder $i$ that can steal some item $j$ from bidder
$i'$, \hlhf{such that} $i$ is a top competitor for $j$.
\item Define $(S'_1,\ldots,S'_n)$ to be the allocation after $i$ steals $j$ from $i'$; thus $S'_k = S_k$ for $k \neq i,i'$, $S'_i = S_i \cup \{j\}$, $S'_{i'} = S_{i'} \setminus \{j\}$.
\label{step:no-top-comp-t>2-steal}
\item Return: \\ $\TOPSTEAL((v_1,\ldots,v_n),(S'_1,\ldots,S'_n),M,t)$
\label{step:no-top-comp-t>2-recurse}
%% \item Run \hlhf{$\TOPSTEAL((v_{i, |\{j\}},v_{-i}), (S_{i'} \setminus \{j\},
%% S_{-i'}), M \setminus \{j\},t)$} and obtain an allocation $(S'_1,\ldots,S'_n)$ and prices
%% \hlhf{$\{b'_{i'}(j')\}_{i' \in N, j' \in M}$}. 
%% \item Let $b_{i'}(j')=b'_{i'}(j')$ for every $j'\neq j$. Let $b_i(j)=v_i(\{j\})$
%% and $b_{i'}(j)=0$ \hlhf{for each} $i'\neq i$. Return
%% $(S'_i\cup \{j\},S'_{-i})$ and the prices \hlhf{$\{b_i(j)\}_{i \in N, j \in M}$}.
\end{enumerate}
\end{enumerate}
%\noindent\makebox[\linewidth]{\rule{8cm}{0.4pt}}
We now analyze the running time of $\TOPSTEAL$  (see the appendix for the omitted proof). The base case will be proved first followed by the key claim of this section.

\begin{claim}\label{claim-base-case}
When the set~$M$ has only one element, $\TOPSTEAL((v_1,\ldots,v_n), (S_1, \cdots, S_n), M,t)$ reaches a no-overbidding equilibrium after one steal (i.e., $f_1(t)\leq 1$).
\end{claim}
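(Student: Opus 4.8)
The plan is to verify the base case of the recursion directly from the definition of $\TOPSTEAL$ when $|M|=1$. Write $M=\{j\}$, and recall the standing assumption that in the initial allocation no item belongs to a non-competitor; hence $j$ is held by some bidder $i'$ with $v_{i'}(\{j\})>0$. There are two cases. If $i'$ is already a top competitor for $j$, then I claim the allocation is already a no-overbidding equilibrium: the sole bid in play is $b_{i'}(j)=v_{i'}(\{j\})$ (by Step~\ref{step-update} with a one-element order), which satisfies no-overbidding with equality; no other bidder $i$ wants to steal $j$, since stealing would require $v_i(\{j\})>b_{i'}(j)=v_{i'}(\{j\})$, contradicting that $i'$ is a top competitor. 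So zero steals suffice, which is certainly at most one.

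If $i'$ is not a top competitor, the procedure instructs one top competitor $i$ to steal $j$. After this single steal, bidder $i$ holds $j$ and sets $b_i(j)=v_i(\{j\})$ (again a one-item order), all other bids being $0$; this is no-overbidding. I then need to check this is an equilibrium, i.e., no further steal is possible. The only conceivable steal is some bidder $i''$ taking $j$ from $i$, which needs $v_{i''}(\{j\})>b_i(j)=v_i(\{j\})$; but $i$ is a top competitor, so $v_i(\{j\})\geq v_{i''}(\{j\})$ for all $i''$, ruling this out. Hence exactly one steal is made and the process halts at a no-overbidding equilibrium, giving $f_1(t)\leq 1$. Finally, welfare does not decrease: by Claim~\ref{claim-increase-welfare} the single steal (strictly) increases welfare, and in the zero-steal case welfare is trivially unchanged.

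There is essentially no obstacle here; the only point requiring a little care is making sure the standing preprocessing assumption (no item held by a non-competitor) is invoked so that $j$'s holder has positive value, which is what lets us conclude the top-competitor inequality is the binding constraint ruling out further steals. I would state this cleanly and note that the bid chosen, $v_{i'}(\{j\})$ or $v_i(\{j\})$ respectively, is exactly the marginal value of $j$ over the empty set, so the no-overbidding condition $v(S)\geq\sum_{k\in S}b(k)$ holds for every bundle $S$ by monotonicity.
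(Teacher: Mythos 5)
Your proof is correct and follows essentially the same argument as the paper's: once the single item is held by a top competitor $i$ bidding $b_i(j)=v_i(\{j\})$, no other bidder $i''$ can profitably steal since $v_{i''}(\{j\})\leq v_i(\{j\})=b_i(j)$. Your version is just more explicit about the case split and the no-overbidding check, which the paper leaves implicit.
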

\begin{proof}
When the item $j$ reaches a top competitor $i  $ no other bidder $i'$ wants to steal it since $v_i(j)\geq v_{i'}(j)$ by the definition of a top competitor.
\end{proof}

\begin{claim}\label{claim-compose-top}
Let $j$ be an item and let $i$ be a top competitor for this item. Let
$v'_i(S)=v_i(S|j)$. Let $S'=(S'_1,\ldots,S'_n)$ be a no-overbidding equilibrium
with respect to the valuations $(v'_i,v_{-i})$ where bidder $i'$ bids for item~$j'$ at $b'_{i'}(j')$. Then $S=(S'_i\cup \{j\},S'_{-i})$ is an equilibrium
with no overbidding with respect to $v_1,\ldots, v_n$ when for item $j'\neq j$
each bidder $i'$ bids \hlhf{$b_{i'}(j')=b'_{i'}(j')$}, and for item $j$ bidder $i$ bids $b_i(j)=v_i(j)$ and every other bidder $i'$ bids $b_i(j)=0$.
\end{claim}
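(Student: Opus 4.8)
The plan is to check directly the three defining properties of a no-overbidding equilibrium for the new profile $(S_1,\ldots,S_n)$ together with the bids $\{b_i(j)\}$: \emph{(i)} no overbidding; \emph{(ii)} consistency of the allocation with the bids; and \emph{(iii)} the best-response condition, i.e.\ that for each bidder $k$ its bundle maximizes $v_k(T)-\sum_{j'\in T}p^{(k)}_{j'}$ over $T\subseteq M$, where $p^{(k)}_{j'}=\max_{k'\ne k}b_{k'}(j')$ is the price $k$ faces on $j'$. Write $M'=M\setminus\{j\}$, so that the given equilibrium $S'$ partitions $M'$ and uses bids $\{b'_i(j')\}_{j'\in M'}$. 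Two structural observations drive everything. First, on every item $j'\in M'$ the new bids coincide with the old ones for all bidders, so the price of $j'$ faced by any bidder is the same in the old and the new instance. Second, on item $j$ the price faced by bidder $i$ is $\max_{k'\ne i}b_{k'}(j)=0$, while the price faced by any $k\ne i$ is $\max_{k'\ne k}b_{k'}(j)=b_i(j)=v_i(\{j\})$.

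For no overbidding: if $k\ne i$ then for any bundle $T$ we have $\sum_{j'\in T}b_k(j')=\sum_{j'\in T\cap M'}b'_k(j')\le v_k(T\cap M')\le v_k(T)$, using old no overbidding and monotonicity; if $j\in T$ then $\sum_{j'\in T}b_i(j')=v_i(\{j\})+\sum_{j'\in T\setminus\{j\}}b'_i(j')\le v_i(\{j\})+v'_i(T\setminus\{j\})=v_i(T)$ by old no overbidding and the identity $v'_i=v_i(\cdot\mid j)$; and if $j\notin T$ then $\sum_{j'\in T}b_i(j')\le v'_i(T)\le v_i(T)$, the last step being submodularity of $v_i$ (the marginal of $j$ over $T$ is at most $v_i(\{j\})$, so $v'_i(T)=v_i(T\cup\{j\})-v_i(\{j\})\le v_i(T)$). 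Consistency is immediate: on items of $M'$ the bids and the allocation are exactly the old, consistent ones, and on $j$ bidder $i$ bids $v_i(\{j\})\ge 0=b_{k'}(j)$ and wins (the only delicate subcase $v_i(\{j\})=0$ forces every $v_k(\{j\})=0$, so $j$ has no competitors and the tie may be broken toward $i$).

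For the best-response condition I would treat the two kinds of bidders separately, in each case reducing to the fact that $S'$ is an equilibrium for $(v'_i,v_{-i})$. For bidder $i$: since $p^{(i)}_j=0$, by monotonicity an optimal deviation may be taken to contain $j$; writing it as $W'\cup\{j\}$ with $W'\subseteq M'$, its utility is $v_i(W'\cup\{j\})-\sum_{j'\in W'}p^{(i)}_{j'}=v_i(\{j\})+\bigl(v'_i(W')-\sum_{j'\in W'}p^{(i)}_{j'}\bigr)$, and since prices on $M'$ are unchanged and $S'$ is an equilibrium, the parenthesized quantity is maximized over $W'\subseteq M'$ at $W'=S'_i$; this is exactly the utility of $S_i=S'_i\cup\{j\}$, so bidder $i$ has no profitable deviation. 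For a bidder $k\ne i$: on item $j$ it would pay $v_i(\{j\})$, and for any $W'$ submodularity of $v_k$ together with the top-competitor hypothesis give $v_k(j\mid W')\le v_k(\{j\})\le v_i(\{j\})$, so an optimal deviation may be taken to omit $j$; restricted to $W\subseteq M'$ the utility $v_k(W)-\sum_{j'\in W}p^{(k)}_{j'}$ is identical to the old instance, hence maximized at $S'_k=S_k$.

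The only genuinely delicate points are, first, the no-overbidding inequality for bidder $i$ on bundles not containing $j$, which is precisely where submodularity of $v_i$ is needed; and second, the two reductions in \emph{(iii)} — ``bidder $i$ may as well take $j$'' and ``bidder $k\ne i$ may as well leave $j$ alone'' — which together rely on $p^{(i)}_j=0$, $p^{(k)}_j=v_i(\{j\})$, submodularity, and the hypothesis that $i$ is a top competitor for $j$. Everything else is bookkeeping exploiting the invariance of prices on $M'$ between the old and the new instance.
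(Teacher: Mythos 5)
Your proof is correct and follows essentially the same route as the paper's: split the no-overbidding check according to whether the bundle contains $j$ (using $v_i(T)=v_i(\{j\})+v'_i(T\setminus\{j\})$ for $j\in T$), and reduce every deviation to a deviation in the old instance via the unchanged prices on $M\setminus\{j\}$, with the top-competitor hypothesis killing any attempt on $j$. The one place you are actually more careful than the paper is the no-overbidding check for bidder $i$ on bundles omitting $j$: the paper asserts $v'_i(T)=v_i(T)$ there, whereas the correct (and sufficient) statement is the submodularity inequality $v'_i(T)\le v_i(T)$ that you use.
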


\begin{proof}
We first show that there is no overbidding: this is certainly true for every
bidder $i'\neq i$ since \hlhf{the allocation and bidding of $i'$} 
% allocation and bidding
did not change from the first equilibrium \hlhf{$S'$} in which there was no overbidding. In addition, bidder $i$
does not overbid \hlhf{either}. This is certainly true for every bundle $S$ where $j\notin S$ (since $v'_i(S)=v_i(S)$). For $j\in S$ we have that: 
\begin{align*}
v_i(S) &= v_i(\{j\})+v_i(S\setminus \{j\}|j)=v_i(\{j\})+v'_i(S\setminus\{j\}) \\
& \geq b_i(j)+\Sigma_{j'\neq j,j'\in S}b'_i(j') = \Sigma_{j'\in S}b_i(j').
\end{align*}
To see that $S$ is an equilibrium, observe that for every item $j'\notin S'_i
\cup \{j\}$ it holds that $v_i(j'|S'_i \cup \{j\})=v'_i(j'|S'_i)$, so bidder~$i$
will not steal any item from any other bidder $i'$. Similarly, since the
valuation of every other bidder $i'\neq i$ did not change and the prices of all
items except $j$ are the same, the only possible steal may involve some bidder
$i'$ stealing item $j$ from bidder $i$. However, $i$ is a top competitor for $j$
so $b_i(j)=v_i(\{j\})\geq v_{i'}(\{j\})\geq v_{i'}(j|S'_{i'})$, hence this steal is not profitable for $i'$. 
\end{proof}

\begin{lemma}\label{lemma-t=2}\footnote{One could in principle incorporate the next two lemmas into one by starting the induction from $t=1$. We explicitly prove the case of $t=2$ since it provides an explicit simple proof for the important setting of two bidders.}
When each item is competed by at most two bidders, i.e., $t = 2$, the procedure
\\ $\TOPSTEAL((v_1,\ldots, v_n), (S_1, \cdots, S_n), M,2)$ returns a no-overbidding equilibrium after $f_m(2)\leq m$ steals.
\end{lemma}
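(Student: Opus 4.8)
\textbf{Overall approach.} I would prove by induction on $m = |M|$ that $\TOPSTEAL((v_1,\ldots,v_n),(S_1,\ldots,S_n),M,2)$ terminates after at most $m$ steals with a no-overbidding equilibrium whose welfare is at least that of $(S_1,\ldots,S_n)$. The base case $m=1$ is exactly Claim~\ref{claim-base-case} (one steal suffices, and the welfare claim is immediate since a steal only moves an item from a non-competitor or a non-top-competitor to a top competitor, which by Claim~\ref{claim-increase-welfare} does not decrease welfare). For the inductive step I would split according to which of the two cases in the procedure applies at the top level.

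\textbf{Case~1 (some item $j \in S_i$ with $i$ a top competitor for $j$).} Here the procedure recurses on $M \setminus \{j\}$ with $v_i$ replaced by $v_{i,|\{j\}}$. That instance is still $2$-restricted (removing $j$ does not change competitor sets of other items, and $v_{i,|\{j\}}$ is submodular and has value at most $v_i$ on single items, so competitor sets can only shrink), and $v_{i,|\{j\}}(\cdot)$ is submodular. By the induction hypothesis the recursive call makes at most $m-1$ steals and returns a no-overbidding equilibrium for $(v_{i,|\{j\}},v_{-i})$ on $M\setminus\{j\}$; the bids on the returned allocation are the only steals counted, so the total number of steals is at most $m-1 \le m$. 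Then Claim~\ref{claim-compose-top} tells us that adjoining $j$ back to $S'_i$ with $b_i(j)=v_i(\{j\})$ and $b_{i'}(j)=0$ for $i'\neq i$ yields a no-overbidding equilibrium for $(v_1,\ldots,v_n)$ on $M$. For the welfare bound: the welfare of the recursive output under $(v_{i,|\{j\}},v_{-i})$ is at least that of the input allocation $(S_i\setminus\{j\},S_{-i})$ under the same valuations, and adding $j$ back to bidder $i$ adds exactly $v_i(\{j\})$ to both sides (since $v_i(S'_i\cup\{j\}) = v_i(\{j\}) + v_{i,|\{j\}}(S'_i)$), so the welfare of the final allocation under $(v_1,\ldots,v_n)$ is at least $v_i(\{j\}) + v_{i,|\{j\}}(S_i\setminus\{j\}) + \sum_{i'\neq i} v_{i'}(S_{i'}) = \sum_k v_k(S_k)$.

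\textbf{Case~2 (no item is held by a top competitor).} Set the bids $b_i(j) = v_i(j \mid \{1,\ldots,j-1\}\cap S_i)$ as in Step~2a. If this is already an equilibrium, we return with $0$ steals and the welfare is unchanged, done. Otherwise we find a profitable steal; the key observation from the introduction's intuition is that the stealing bidder $i$ must be a top competitor for the stolen item $j$: since $j \in S_{i'}$ and (being in the initial allocation) $j$ belongs to a competitor, one of $i'$ or $i$ is the top competitor for $j$; it is not $i'$ (Case~2 hypothesis), and if the top competitor were some third bidder $i''$, then $t=2$ forces $i'' \in \{i,i'\}$, so it must be $i$ (and $v_i(\{j\}) \ge v_{i'}(\{j\}) > 0$, so $i\in C_j$; I should also note that the step is well-defined — some such steal with $i$ a top competitor exists, because \emph{any} profitable steal has this property). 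Now after this one steal, item $j$ is held by a top competitor $i$, so the resulting allocation $(S'_1,\ldots,S'_n)$ satisfies the Case~1 hypothesis. The recursive call $\TOPSTEAL((v_1,\ldots,v_n),(S'_1,\ldots,S'_n),M,2)$ then falls into Case~1 and, by the analysis above, makes at most $m-1$ further steals and returns a no-overbidding equilibrium of welfare at least that of $(S'_1,\ldots,S'_n)$, which by Claim~\ref{claim-increase-welfare} is at least that of $(S_1,\ldots,S_n)$. Counting the one steal done here plus at most $m-1$ in the recursion gives $f_m(2) \le m$.

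\textbf{Main obstacle.} The delicate point is verifying that everything stays consistent across the two interleaved cases — in particular that in Case~2 a profitable steal with the stealing bidder being a top competitor always exists when we are not at equilibrium (so Step~2c/Step~(iv) is well-defined), and that after that steal we are genuinely in Case~1 on the \emph{same} item set $M$ with the \emph{same} valuations, so that the recursion terminates without circularity. One must be careful that Case~2's one-step recursion lands in Case~1 (not back in Case~2), which is exactly what the "held by a top competitor" property guarantees; chaining this with the $m\mapsto m-1$ drop inside Case~1 is what closes the induction. A secondary but routine point is the $t$-restrictedness and submodularity bookkeeping for the contracted valuation $v_{i,|\{j\}}$ used when invoking the induction hypothesis.
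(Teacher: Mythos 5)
Your proposal is correct and follows essentially the same route as the paper: induction on $m$ with base case Claim~\ref{claim-base-case}, Case~1 handled by the induction hypothesis on $M\setminus\{j\}$ with the contracted valuation $v_{i,|\{j\}}$ plus Claim~\ref{claim-compose-top}, and Case~2 reduced to Case~1 after one steal by observing that the stealing bidder must be a top competitor since $|C_j|\le 2$ and the holder $i'\in C_j$ is not one. The extra bookkeeping you supply (welfare accounting, $2$-restrictedness of the contracted instance, well-definedness of the steal step) is consistent with, and slightly more explicit than, the paper's argument.
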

\begin{proof}
We prove this by induction on the number of items $m$, where the case of $m=1$ is proved in Claim \ref{claim-base-case}. We assume correctness for $m-1$ and prove for $m$. We divide into two cases:

\vspace{0.05in} \noindent \textbf{Case 1:} \emph{There \hlhf{exist an} item $j$ and \hlhf{a} bidder $i$ such that
\hlhf{$j\in S_i$} and $i$ is \hlhf{a} top competitor for $j$:} the fact that this is a no-overbidding equilibrium
follows since by the induction hypothesis \hlhf{$\TOPSTEAL((v_{i,
|\{j\}},v_{-i}), (S_i \setminus \{j\}, S_{-i}), M\setminus \{j\},2)$} returns a
no-overbidding equilibrium with respect to \hlhf{$(v_{i, |\{j\}}, v_{-i})$}
and then \hlhf{we can apply} Claim~\ref{claim-compose-top}. Notice that by the
induction hypothesis, \hlhf{$\TOPSTEAL((v_{i, |\{j\}},v_{-i}), (S_i \setminus
\{j\}, S_{-i}), M\setminus\{j\},2)$} makes at most $m-1<m$ steals.

\vspace{0.05in} \noindent \textbf{Case 2:} \emph{There is no item $j$ and bidder $i$ such that \hlhf{$j\in S_i$} and
$i$ is \hlhf{a} top competitor for $j$:} if we reached an equilibrium in the
first step we are done. Else, some bidder $i$ steals item~$j$
from another bidder~$i'$. Since $|C_j|=2$ and $i'\in C_j$, by assumption we have that $i$ is a top competitor of~$j$. The recursive call to
$\TOPSTEAL$ in Step 2(e)
will satisfy Case 1, so by the first part of this proof the recursive call 
will reach a no-overbidding equilibrium after at most $m-1$ steals. Combining
this with the one steal in 
Step~2(d)
we have at most $m$ steals in total, as claimed.
%% The claim follows since by the induction
%% hypothesis, \hlhf{$\TOPSTEAL((v_{i, |\{j\}},v_{-i}), (S_{i'} \setminus \{j\},
%% S_{-i'}), M \setminus \{j\},2)$} returns a no-overbidding equilibrium with
%% respect to \hlhf{$(v_{i, |\{j\}},v_{-i})$} and then \hlhf{we can
%% apply} Claim~\ref{claim-compose-top}.  The total number of steals is at most $m$: at most $m-1$ by the recursive use of $\TOPSTEAL$ and one additional steal.
\end{proof}

\begin{lemma}\label{lemma-topsteal-last} For general values of~$t$, the
procedure
$\TOPSTEAL((v_1,\ldots, v_n), (S_1, \cdots, S_n),M,t)$ returns a no-overbidding equilibrium after $f_m(t)\leq 1+f_{m}(t-1)+f_{m-1}(t)$ steals.
\end{lemma}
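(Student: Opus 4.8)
The plan is to prove the lemma together with two companions, by induction on $(|M|,t)$ ordered lexicographically with the number of items dominating: that $\TOPSTEAL$ always terminates and returns a no-overbidding equilibrium for its input valuations; that its output allocation has social welfare at least that of its input allocation; and that it makes at most $f_{|M|}(t)$ steals, where $f_1(t)\le 1$, $f_m(2)\le m$, and $f_m(t)\le 1+f_m(t-1)+f_{m-1}(t)$. The base cases $|M|=1$ and $t=2$ are Claim~\ref{claim-base-case} and Lemma~\ref{lemma-t=2} (welfare preservation in the latter following from Claims~\ref{claim-increase-welfare} and~\ref{claim-compose-top}), so fix $|M|=m>1$ and $t>2$ and assume all three statements for every lexicographically smaller instance.

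Case~1 of the procedure (some item $j$ lies with a top competitor $i$) is the short case: we replace $v_i$ by the still-submodular $v_{i,|\{j\}}$ and recurse on $M\setminus\{j\}$, which remains $t$-restricted because conditioning only shrinks competitor sets ($v_{i,|\{j\}}(\{j'\})\le v_i(\{j'\})$). By the inductive hypothesis on $m-1$ items the inner call makes at most $f_{m-1}(t)\le 1+f_m(t-1)+f_{m-1}(t)$ steals and returns a no-overbidding equilibrium for $(v_{i,|\{j\}},v_{-i})$; Claim~\ref{claim-compose-top} converts this into a no-overbidding equilibrium for $(v_1,\dots,v_n)$, and the identity $v_i(S_i)=v_i(\{j\})+v_{i,|\{j\}}(S_i\setminus\{j\})$ together with $b_i(j)=v_i(\{j\})$ shows that welfare does not drop. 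Since Case~1 performs no steal of its own, the count is as required.

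The substance is Case~2 (no item lies with its top competitor). I would first verify that $v'_i(S)=v_i(S\setminus T_i)$ is submodular --- its marginals are those of $v_i$ on $M\setminus T_i$ and zero on $T_i$ --- and that $(v'_1,\dots,v'_n)$ is $(t-1)$-restricted, since every item $j$ with $C_j\neq\emptyset$ has at least one top competitor and $C'_j=C_j\setminus\{\text{top competitors of }j\}$; also $S_i\cap T_i=\emptyset$ in Case~2, so the recursive instance is legal (every item's owner is still a competitor). The inductive hypothesis at parameter $t-1$ then yields, after at most $f_m(t-1)$ steals, a no-overbidding equilibrium $(S'_1,\dots,S'_n)$ with bids $\{b_i(j)\}$ for $(v'_1,\dots,v'_n)$; because $v_i\ge v'_i$ pointwise and $v'_i(S_i)=v_i(S_i)$, it is also no-overbidding for $(v_1,\dots,v_n)$ and has $v$-welfare at least that of $(S_1,\dots,S_n)$. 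If it is already an equilibrium for $(v_1,\dots,v_n)$, return it (at most $f_m(t-1)$ steals). Otherwise there is a profitable single-item steal under $v$, and I claim the stealer can be taken to be a top competitor of the stolen item: if $i$ profitably steals $j$ from its owner $i'$ under $v$ (so $v_i(j\mid S'_i)>b_{i'}(j)$) but not under $v'$, then $v'_i(j\mid S'_i)<v_i(j\mid S'_i)$, and since $v'_i(j\mid S'_i)=v_i(j\mid S'_i\setminus T_i)\ge v_i(j\mid S'_i)$ whenever $j\notin T_i$, we must have $j\in T_i$. Performing this one steal produces an allocation in which $j$ lies with its top competitor $i$, so the procedure's final call $\TOPSTEAL((v_1,\dots,v_n),\cdot,M,t)$ immediately enters Case~1 and, by the Case~1 analysis above, terminates with a no-overbidding equilibrium for $(v_1,\dots,v_n)$ after at most $f_{m-1}(t)$ further steals. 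This gives a total of $f_m(t-1)+1+f_{m-1}(t)$ steals, as claimed, provided welfare does not drop at the corrective steal or during the final recursion.

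The step I expect to be the main obstacle is exactly this last welfare check: the corrective steal's threshold $b_{i'}(j)$ is a bid produced by a recursion run on the modified valuations $v'$, so applying Claim~\ref{claim-increase-welfare} (which wants $b_{i'}(j)\ge v_{i'}(j\mid S'_{i'}\setminus\{j\})$) is not immediate. To handle it I would prove, by induction on the recursion tree, that every $\TOPSTEAL$ call returns bids that are, on each owner's bundle, marginal values of that owner's (possibly conditioned) valuation over prefixes of the bundle, and that no bidder ever comes to hold an item on which its active valuation has zero marginal value --- in particular $S'_{i'}\cap T_{i'}=\emptyset$, so the $T_{i'}$-restriction is vacuous on $S'_{i'}$ and $b_{i'}(j)=v'_{i'}(j\mid\text{prefix})=v_{i'}(j\mid\text{prefix})\ge v_{i'}(j\mid S'_{i'}\setminus\{j\})$ by submodularity, which is precisely the hypothesis Claim~\ref{claim-increase-welfare} needs. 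With this in hand the corrective steal does not decrease welfare, the final recursion's welfare is governed by the inductive hypothesis, and the same observation --- that one corrective steal suffices to drop the final recursion into Case~1, so it is charged $f_{m-1}(t)$ rather than $f_m(t)$ --- is what makes the recurrence $f_m(t)\le 1+f_m(t-1)+f_{m-1}(t)$ close.
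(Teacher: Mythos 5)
Your proof is correct and takes essentially the same route as the paper's: the same induction with base cases Claim~\ref{claim-base-case} and Lemma~\ref{lemma-t=2}, the same use of Claim~\ref{claim-compose-top} in Case~1, and in Case~2 the same key observation that any steal profitable under $v$ but not under $v'$ must be by a top competitor of the stolen item (since $v'_i(j\mid S)\ge v_i(j\mid S)$ whenever $j\notin T_i$), so a single corrective steal drops the final call into Case~1 and the recurrence $f_m(t)\le 1+f_m(t-1)+f_{m-1}(t)$ closes. Your final paragraph on welfare preservation addresses a point outside the lemma as stated (which claims only a no-overbidding equilibrium and the steal bound); the paper treats welfare separately and somewhat informally, and your proposed invariant on the returned bids is a sound way to discharge that extra obligation.
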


\begin{proof}
We again prove using induction on the number of items and on the size of $t$ where the base case $m=1$ was proven in Claim \ref{claim-base-case} and the case of arbitrary $m$ and $t=2$ was proven in Lemma \ref{lemma-t=2}. We divide again into two cases:

\vspace{0.1in} \noindent \textbf{Case 1.} \emph{There \hlhf{exist an} item $j$ and \hlhf{a} bidder $i$ such that
\hlhf{$j\in S_i$} and $i$ is \hlhf{a} top competitor for $j$:} the fact that this is a no-overbidding equilibrium
follows since by the induction hypothesis, \hlhf{$\TOPSTEAL((v_{i,
|\{j\}},v_{-i}), (S_i \setminus \{j\}, S_{-i}),  M\setminus \{j\},t)$} returns a
no-overbidding equilibrium with respect to \hlhf{$(v_{i, |\{j\}},v_{-i})$}
and then \hlhf{we can apply} Claim~\ref{claim-compose-top}. Notice that by the
induction hypothesis, \hlhf{$\TOPSTEAL((v_{i, |\{j\}},v_{-i}), (S_i \setminus
\{j\}, S_{-i}),  M\setminus \{j\},t)$} makes at most $f_{m-1}(t)$ steals.

\vspace{0.1in} \noindent \textbf{Case 2.} \emph{There is no item $j$ and bidder $i$ such that \hlhf{$j\in S_i$} and
$i$ is \hlhf{a} top competitor for $j$:} first, observe that if $v_1,\ldots,
v_n$ is a $t$-restricted instance, then $(v'_1, \ldots, v'_n)$ is a $(t-1)$-restricted
instance. If running $\TOPSTEAL$ on this $(t-1)$-restricted instance results
in an equilibrium \hlhf{S'}, the lemma is proved and we made only  at most $f_m(t-1)$ steals.

Otherwise, $S'$ is not an equilibrium since some bidder $i$ can
steal some item $j$ from bidder $i'$, and $i$ is a top competitor for $j$. This
is true since for every bidder $i$ and item $j$ we have 
$v_i(j|S'_i)=v'_i(j|S'_i)$ unless $i$ is a top competitor for $j$. 
The recursive call to
$\TOPSTEAL$ in Step~2(f)
will satisfy Case 1, so by the first part of this proof the recursive call 
will reach a no-overbidding equilibrium after at most $f_{m-1}(t)$ steals. 
Combining this with the $f_m(t-1)$ steals in 
Step~2(e)
and the one steal in 
Step~2(b)
we have at most $1+f_m(t-1)+f_{m-1}(t)$ steals in total, as claimed.
\end{proof}

\section{An Impossibility Result for Subadditive Valuations}
\label{sec:subadd}

Unlike submodular valuations, with subadditive valuations we have no hope to find an equilibrium.

\begin{theorem}
Exponential communication is needed to determine whether there is a no-overbidding equilibrium in combinatorial auction with subadditive bidders. Furthermore, even when an equilibrium is guaranteed to exist, exponential communication is needed to find it. The results hold even if there are only two players and even if the value of every bundle is in $\{0,1,2\}$.
\end{theorem}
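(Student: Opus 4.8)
The plan is to prove both statements by reductions from a hard two-party communication problem --- concretely, a variant of \textsc{Set-Disjointness} on a universe $U$ of size $2^{\Theta(m)}$, for which even randomized protocols need $\Omega(|U|)=2^{\Theta(m)}$ bits. Alice holds $X\subseteq U$, Bob holds $Y\subseteq U$, and from $(X,Y)$ I build a two-bidder instance in which $v_1$ depends only on $X$, $v_2$ only on $Y$, every bundle is worth $0$, $1$, or $2$, and --- the design goal --- the simultaneous second-price auction has a no-overbidding pure equilibrium if and only if $X\cap Y\neq\emptyset$. A protocol deciding equilibrium existence then decides disjointness, which gives the first part of the theorem. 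Observe that in the value range $\{0,1,2\}$ monotonicity, together with the mild condition that every nonempty bundle is worth at least~$1$, already forces subadditivity (two nonempty bundles each of value $\ge 1$ have total value $\ge 2\ge v(S\cup T)$, and the cases involving $\emptyset$ are equalities); so essentially all of the work goes into shaping incentives rather than into checking subadditivity.

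Now the gadget. The key leverage is that no overbidding is extremely restrictive here: $\sum_j b_i(j)\le v_i(M)\le 2$, so in any no-overbidding bid profile only a constant number of items carry a positive bid and every other item is won on a $0$--$0$ tie; for such an item the losing bidder must have zero marginal value, or an infinitesimal overbid is a profitable deviation. Hence a candidate equilibrium is essentially a partition $(S_1,S_2)$ of the items plus $O(1)$ positive bids, subject to ``each bidder's bundle is a demand at the opponent's prices'' and ``neither bidder wants a $0$-bid item held by the other''. I would design $v_1,v_2$ so that the partitions that can be completed to an equilibrium are exactly those encoding an element $u\in U$ in a fixed way (say, $S_1$ restricted to a designated block of $\Theta(m)$ ``coordinate'' items spells out $u$), with two incentive constraints wired in through a constant number of auxiliary ``reward/penalty'' items and a constant-size intended bidding pattern: $v_1$ makes bidder~$1$ strictly prefer to re-select unless $u\in X$, and $v_2$ makes bidder~$2$ strictly want to deviate unless $u\in Y$.

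For the two directions: in the ``yes'' case, fix $u\in X\cap Y$, let $(S_1,S_2)$ encode $u$, install the $O(1)$ designated bids, and verify by a finite check that no single-item steal and no bundle deviation helps either bidder (a few lines of marginal-value bookkeeping dispatch the steals). The ``no'' case, $X\cap Y=\emptyset$, is the crux and the step I expect to be the main obstacle: I must show that \emph{every} no-overbidding bid profile admits a profitable deviation. Using the structural reduction above I would split candidate profiles into those whose induced partition fails to encode any $u\in U$ --- killed directly by the gadget --- and those encoding some $u$, which, since $u\notin X$ or $u\notin Y$, trigger the built-in deviation of bidder~$1$ or bidder~$2$. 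Making this exhaustive --- ruling out degenerate bid patterns, and handling whatever (possibly adversarial) tie-breaking rule the game uses on the $0$--$0$ items --- is where the real difficulty lies.

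Finally, to strengthen to hardness of \emph{finding} an equilibrium when existence is promised, I would reduce from the search version rather than the decision version: arrange that an equilibrium always exists and that its encoded partition exhibits an element $u\in X\cap Y$ (finding a common element is still $2^{\Theta(m)}$-hard in communication), or, alternatively, glue on an always-available low-welfare ``dummy'' equilibrium while engineering every \emph{other} equilibrium to encode the disjointness bit, so that any protocol outputting an equilibrium followed by a local test of ``is this the dummy?'' decides disjointness. Either route transfers the $2^{\Theta(m)}$ bound. The remaining obligations --- that each $v_i$ is monotone, normalized, $\{0,1,2\}$-valued (hence subadditive), and that the demand/steal conditions behave exactly as the gadget promises --- are routine, and I would isolate them in a lemma.
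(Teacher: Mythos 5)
Your high-level strategy coincides with the paper's: a reduction from set disjointness over an exponentially large universe, two bidders, $\{0,1,2\}$-valued (hence subadditive) valuations, with ``equilibrium exists iff the sets intersect.'' But the proposal stops short of a proof precisely where the paper does its real work. First, the gadget is never constructed: you say you ``would design $v_1,v_2$ so that the partitions that can be completed to an equilibrium are exactly those encoding an element $u\in U$,'' but producing such valuations is the whole difficulty. The paper does it with a \emph{good set-pair system} $\{(S^r_1,S^r_2)\}_r$ of exponentially many disjoint pairs of size $m/4$ with pairwise intersections $0<|S^r_1\cap S^l_2|\le m/8$, and sets $v_i(S)=2$ iff $|S|\ge 3m/4+1$ or $S\supseteq S^r_i$ for some $r$ with $A^i_r=1$, else $1$. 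The cross-intersection property is what guarantees that at most one bidder can attain value $2$ when the instances are disjoint, and no analogue of it appears in your sketch. Second, the ``no'' direction --- which you yourself flag as ``the crux'' and ``where the real difficulty lies'' --- is left unresolved. The paper's argument there is not a partition-encoding case analysis but an \emph{unprotected set} argument: assuming WLOG $v_1(S_1)\le 1$, any set $U$ with $v_1(U)=2$ and $\sum_{j\in U}b_2(j)<1$ yields a profitable no-overbidding deviation for bidder $1$; one then shows such a $U$ always exists, because if some $T$ with $v_1(T)=2$, $|T|=m/4$ has $\sum_{j\in T}b_2(j)=1$, then no-overbidding (using $v_2(T+\{j\})=1$) forces $b_2(j)=0$ off $T$, whence $M-\{j'\}$ is unprotected for any $j'$ with $b_2(j')>0$.

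There is also a concrete false step in your structural reduction: from $\sum_j b_i(j)\le v_i(M)\le 2$ you conclude that ``only a constant number of items carry a positive bid.'' That does not follow --- a bidder may place a bid of $\epsilon$ on linearly many items, and indeed the paper's equilibrium in the ``yes'' case does exactly that ($\epsilon$ on all $m/4$ items of $S^k_i$). Since your planned exhaustive analysis of candidate equilibria in the ``no'' case rests on this constant-support claim, that analysis would not go through as stated. The reduction for the promise version (equilibrium guaranteed to exist) is fine in spirit and matches the paper's remark that it follows easily from the decision version, but as it stands the proposal is a plan rather than a proof.
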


We prove only the first statement, \hlhf{and} the second one is an easy
corollary of the first. We prove the theorem for two players; extending the
theorem for more than two players is straightforward by adding bidders with
valuations that are identically zero. For the proof we use a set-pair system:
\begin{definition}
A set-pair system $\mathcal S=\{(S^r_1,S^r_2)\}_r$ is \emph{good} if the following holds:
\begin{enumerate}
\setlength{\itemsep}{0pt}
\item For every $(S^r_1,S^r_2)\in \mathcal S$ it holds that $|S^r_1|=|S^r_2|=\frac m 4$ and $S^r_1\cap S^r_2=\emptyset$.
\item For every $r\neq l$ we have that $0 < |S^r_1\cap S^l_2|\leq \frac m 8$.
\end{enumerate}
\end{definition}
% \HuNote{Strictly speaking, should this be called a ``set-pair system''?}

It is well known that good set-pair systems with exponentially many
allocations exist (e.g., \cite{N02}).  \hlhf{Given a good set-pair system, we reduce the
disjointness problem to the problem of finding a no-overbidding equilibrium in
the simultaneous second-price auction involving valuations to be defined below.
The disjointness problem involves two players, each player $i$ having} a string
$A^i=\{0,1\}^t$, and the problem is to determine whether there is some $k$ such
that $A^1_k=A^2_k=1$. It is known that deciding disjointness (or finding $k$ if
we are given that such $k$ exists) requires $\Omega(t)$ bits of communication \cite{KN97}.
\hlhf{In our reduction, the size~$t$ will be equal to the size of the good set
system, which is exponential in~$m$.}  We construct the following valuation
functions ($i=1,2$) \hlhf{given a bit vector $A^i \in \{0, 1 \}^t$.  As assumed by normalization, $v_i(\emptyset) = 0$; for $S \neq \emptyset$,}
\begin{align*}
v_i(S) = 
\begin{cases}
\setlength{\itemsep}{0pt}
2, & \mbox{if }|S|\geq \frac{3m} 4 +1; \\
2, & \mbox{if $\exists r$ such that $S^r_i \subseteq S$ and $A^i_r=1$};\\
1, & \mbox{otherwise.} 
\end{cases} 
\end{align*}
\hlhf{It is straightforward to verify that these value functions are
subadditive.} The theorem is a corollary of the following two claims:

\begin{claim}\label{claim-subadditive-exists-eq}
If there exists $k$ such that $A^1_k=A^2_k=1$ then there is a no-overbidding equilibrium.
\end{claim}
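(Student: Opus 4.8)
The plan is to write down one explicit allocation together with one explicit bid profile and check, essentially by inspection, that it satisfies no-overbidding and that neither player has an improving deviation. The only structural facts I will use are that every value lies in $\{0,1,2\}$, that every non-empty bundle has value at least $1$, and that $S^k_1\cap S^k_2=\emptyset$ by property~1 of a good set-pair system.

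First I would fix an index $k$ with $A^1_k=A^2_k=1$ and take the allocation $S_1:=S^k_1$ and $S_2:=M\setminus S^k_1$. Since $S^k_1$ and $S^k_2$ are disjoint we have $S^k_2\subseteq S_2$, and hence, by the second case in the definition of the valuations together with $A^1_k=A^2_k=1$, both $v_1(S_1)=2$ and $v_2(S_2)=2$. For the bids, let each player $i$ bid a tiny uniform amount $\varepsilon=1/m$ on every item of $S_i$ and $0$ on every other item. On each item exactly one player bids something positive and the other bids $0$, so this profile induces precisely the allocation $(S_1,S_2)$ with no ties to break.

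Next I would verify the two equilibrium conditions. No-overbidding is immediate: for every player $i$ and every non-empty bundle $S$, $\sum_{j\in S}b_i(j)\le\varepsilon|S|\le\varepsilon m=1\le v_i(S)$. For the Nash condition, observe that player $i$ currently wins $S_i$ and pays $\sum_{j\in S_i}b_{-i}(j)=0$ (the other player bids $0$ on all of $S_i$), so his utility is $v_i(S_i)=2$; if he deviates in any way and ends up winning a bundle $W$, his utility is $v_i(W)-\sum_{j\in W}b_{-i}(j)\le v_i(W)\le 2$, so no deviation strictly helps. Note this argument does not care whether deviations are restricted to no-overbidding strategies, since the bound $v_i(W)\le 2$ holds for all $W$.

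I do not anticipate a real obstacle: this is the ``easy direction'' of the reduction. The one substantive point is checking that the chosen allocation actually gives each player value exactly $2$, which is precisely where the hypothesis $A^1_k=A^2_k=1$ and the disjointness $S^k_1\cap S^k_2=\emptyset$ enter; everything else is bookkeeping that exploits the coarseness of the value range $\{0,1,2\}$. (Property~2 of the good set-pair system and the genuinely hard work are reserved for the converse statement, that a lack of a common $k$ forces the absence of any no-overbidding equilibrium.)
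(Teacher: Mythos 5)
Your proof is correct and follows essentially the same route as the paper's: both players bid a small positive amount on (a superset of) their witness bundle $S^k_i$, so each wins a value-$2$ bundle at zero payment, and since $v_i(W)\le 2$ for every $W$ no deviation can help. The only (immaterial) difference is that you have bidder $2$ claim all of $M\setminus S^k_1$ to avoid ties on unbid items, whereas the paper bids only on $S^k_1$ and $S^k_2$ and lets the remaining items be allocated arbitrarily.
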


\begin{proof}
Consider the following strategies: each bidder $i$ bids some small $\epsilon>0$
for every $j\in S_i^k$ and $0$ otherwise. Denote the outcome \hlhf{allocation in
the simultaneous second-price auction} of these two strategies by $(S_1,S_2)$.
We claim that these strategies form an equilibrium (observe that there is no
overbidding in these strategies). First, observe that no item gets positive bids
from both bidders simultaneously (by the properties of a good set-pair system) and that each positively bids on all items of some bundle that has a value of $2$ for him. Thus, for every bidder $i$, $v_i(S_i)=2$. Since $v_i(M)=2$ as well and since the total payment of each bidder is $0$, there is no strategy that gives bidder $i$ a profit larger than $2$ (regardless of the strategy of the other bidder).
\end{proof}

\begin{claim}
If there is no $k$ such that $A^1_k=A^2_k=1$ then there is no equilibrium with no overbidding.
\end{claim}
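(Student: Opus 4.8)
The plan is to reduce the \emph{disjointness} problem to deciding whether the induced simultaneous second‑price auction has a no‑overbidding equilibrium, so that the $\Omega(t)$ lower bound of \cite{KN97} transfers; since in the ``yes'' case an equilibrium already exists (Claim~\ref{claim-subadditive-exists-eq}), the very same reduction also shows that \emph{finding} an equilibrium, when one is promised to exist, needs exponential communication, giving the second statement. It therefore suffices to show: if there is no $k$ with $A^1_k=A^2_k=1$, then the auction has no no‑overbidding equilibrium. Restrict the disjointness instances so that $A^1,A^2$ are nonzero (the standard hard instances have large Hamming weight, so this is free), and write $R_i=\{r:A^i_r=1\}$, so each $R_i\neq\emptyset$ and $R_1\cap R_2=\emptyset$. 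First I would collect elementary facts about the valuations, immediate from $|S^r_i|=m/4$ and the good set‑pair properties: every nonempty bundle has value $1$ or $2$; $v_i(M)=2$; $v_i(S^r_i)=2$ and $v_{-i}(S^r_i)=1$ for $r\in R_i$; any value‑$2$ bundle either has size $\geq 3m/4+1$ or contains some $S^r_i$ with $r\in R_i$; and, crucially, there are no two disjoint bundles $T_1,T_2$ with $v_1(T_1)=v_2(T_2)=2$ (if either is ``via size'' the other has fewer than $m/4$ items, too few to be value‑$2$; if both are ``via set'' then $S^r_1\subseteq T_1$, $S^\ell_2\subseteq T_2$ force $S^r_1\cap S^\ell_2=\emptyset$ with $r\neq\ell$, contradicting $|S^r_1\cap S^\ell_2|>0$). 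Now suppose $(b_1,b_2)$ is a no‑overbidding equilibrium with allocation $(S_1,S_2)$, $S_1\sqcup S_2=M$, and set $u_i=v_i(S_i)-\sum_{j\in S_i}b_{-i}(j)$. Deviating to barely outbid $b_{-i}$ on all of some $S^r_i$ ($r\in R_i$) wins bidder $i$ the bundle $S^r_i$ at price $\sum_{j\in S^r_i}b_{-i}(j)\leq v_{-i}(S^r_i)=1$ for value $2$, so $u_i\geq 1$; hence $v_i(S_i)\geq 1$, $S_i\neq\emptyset$, and (with the ``no disjoint value‑$2$ bundles'' fact) the equilibrium welfare $v_1(S_1)+v_2(S_2)$ is $2$ or $3$.

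Next I would rule out welfare $2$, i.e.\ $v_1(S_1)=v_2(S_2)=1$. Then $u_1=1$ (as $u_1\leq v_1(S_1)=1$), which forces $\sum_{j\in S_1}b_2(j)=0$, so $b_2$ is supported on $S_2$. The deviation to $S^r_1$ above gives utility $2-\sum_{j\in S^r_1}b_2(j)\leq u_1=1$, and no‑overbidding gives $\sum_{j\in S^r_1}b_2(j)\leq v_2(S^r_1)=1$, hence $\sum_{j\in S^r_1}b_2(j)=1$ for every $r\in R_1$. Since $b_2$ vanishes on $S_1$ and $\sum_{j\in S_2}b_2(j)\leq v_2(S_2)=1$, it must vanish on $S_2\setminus S^r_1$ for each $r\in R_1$; so $b_2$ is supported on $Y:=S_2\cap\bigcap_{r\in R_1}S^r_1$ with $\sum_{j\in Y}b_2(j)=1$ and $1\leq |Y|\leq m/4$ (as $Y\subseteq S^{r_0}_1$ for any $r_0\in R_1$). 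Finally bidder $1$ deviates to outbid $b_2$ on a bundle $T$ of size $3m/4+1$ that avoids $Y$ as much as possible: if $|Y|\leq m/4-1$ take $T=M\setminus Y$ (cost $0$); if $|Y|=m/4$ take $T=M\setminus(Y\setminus\{y^\ast\})$ with $b_2(y^\ast)=\min_{y\in Y}b_2(y)\leq 4/m$ (cost $\leq 4/m$). Either way bidder $1$ gets value $2$ at cost $<1$, so $u_1>1$, a contradiction. Hence the welfare is $3$.

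So $\{v_1(S_1),v_2(S_2)\}=\{2,1\}$; say $v_1(S_1)=2$, $v_2(S_2)=1$. As before, $u_2=1$, $b_1$ vanishes on $S_2$, and $\sum_{j\in S^\ell_2}b_1(j)=\sum_{j\in S^\ell_2\cap S_1}b_1(j)=1$ for every $\ell\in R_2$. But now no‑overbidding only bounds $\sum_{j\in S_1}b_1(j)$ by $v_1(S_1)=2$, not by $1$, so $b_1$ need not be concentrated on $\bigcap_\ell S^\ell_2$ and the clean argument above does not close. Here the plan is to exploit bidder $2$'s full deviation menu: bidder $2$ can reach value $2$ by (i) outbidding $b_1$ on some $S^\ell_2$ (cost exactly $1$); (ii) grabbing the cheapest $3m/4+1-|S_2|$ items of $S_1$, i.e.\ at cost $\sum_{j\in S_1}b_1(j)$ minus the sum of the $m/4-1$ largest values of $b_1$ on $S_1$; or (iii) hybrids that drop a few expensive items of some $S^\ell_2$ and repad with cheap items up to size $3m/4+1$. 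Equilibrium forces each to give utility at most $1$; one then combines these inequalities with $\sum_{j\in S_1}b_1(j)\leq 2$, with $\sum_{j\in S^\ell_2\cap S_1}b_1(j)=1$ for all $\ell\in R_2$, and with bidder $1$'s own equilibrium conditions (distinguishing whether $v_1(S_1)=2$ is realized via size or via containing some $S^r_1$, where $r\neq\ell$ by disjointness and hence $|S^r_1\cap S^\ell_2|\leq m/8$), to produce a strictly profitable deviation for bidder $2$. Showing that $b_1$ cannot simultaneously be ``heavy'' on every $S^\ell_2\cap S_1$ and ``spread out'' enough to survive the size‑$(3m/4+1)$ deviation of bidder $2$ --- which is where the two‑sided intersection bound $0<|S^r_1\cap S^\ell_2|\leq m/8$ of the good set‑pair system is used --- is the main technical obstacle.

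Once welfare $2$, $3$, and $4$ are all excluded, the assumed equilibrium cannot exist, which completes the reduction and hence the theorem; the statement about hardness ``even when an equilibrium is guaranteed to exist'' and the extension to more players follow as remarked in the text. To reiterate, the crux is the welfare‑$3$ case: unlike welfare $2$, no‑overbidding alone does not pin down the support of the relevant bid vector, so one must juggle several deviations of the value‑$1$ bidder at once and bring in both the value‑$2$ bidder's equilibrium constraints and the quantitative intersection properties of the set‑pair system.
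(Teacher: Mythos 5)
Your reduction framework and the ``yes''-case are fine, and your welfare-$2$ analysis is essentially correct, but the proposal has a genuine gap exactly where you flag one: the case $v_1(S_1)=2$, $v_2(S_2)=1$ is not closed, and it is the heart of the claim. Your stated plan for it --- enumerating bidder $2$'s deviation menu (outbid on some $S^\ell_2$, grab the cheapest $3m/4+1-|S_2|$ items, hybrids) and hoping the resulting inequalities are jointly infeasible --- is far more complicated than necessary, and as written it is not clear it terminates in a contradiction. The idea you are missing is that no-overbidding should be applied not to $S_1$ (which, as you observe, only gives $\sum_{j\in S_1}b_1(j)\leq 2$) but to the \emph{small} sets $S^\ell_2\cup\{j\}$ for single items $j\notin S^\ell_2$. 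Fix $\ell\in R_2$. As in your welfare-$2$ case, equilibrium forces $\sum_{j\in S^\ell_2}b_1(j)=1$ (otherwise bidder $2$ grabs $S^\ell_2$ cheaply). Now for any $j\notin S^\ell_2$ the set $S^\ell_2\cup\{j\}$ has size $m/4+1$ and contains no $S^r_1$ (since $|S^r_1\cap S^\ell_2|\leq m/8$ for $r\neq\ell$ and $S^\ell_1\cap S^\ell_2=\emptyset$), so $v_1(S^\ell_2\cup\{j\})=1$; no-overbidding on this set forces $b_1(j)=0$. Hence $b_1$ is supported on $S^\ell_2$ with total mass exactly $1$ --- even though $v_1(S_1)=2$ --- and bidder $2$ can deviate to outbid $b_1$ on $M\setminus\{j'\}$ for any $j'$ with $b_1(j')>0$: this set has value $2$ to him, costs $1-b_1(j')<1$, and the deviation bids sum to less than $1$, so it is a feasible, strictly profitable deviation. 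This closes your hard case in a few lines.

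It is worth noting that this same argument (phrased via ``unprotected sets'': bundles $U$ with $v_i(U)=2$ and $\sum_{j\in U}b_{-i}(j)<1$, which immediately yield profitable deviations) is how the paper proves the claim, and it needs no case split on the equilibrium welfare at all: one simply takes the bidder $i$ with $v_i(S_i)\leq 1$ (which exists since there are no two disjoint value-$2$ bundles, as you showed), and runs the argument above with $T=S^r_i$. Your welfare-$2$ case then becomes a special instance of the same computation, so the decomposition into welfare levels --- and the intersection $Y=S_2\cap\bigcap_r S^r_1$ --- can be dropped entirely. One last small point: your opening claim that $u_i\geq 1$ for both bidders is not quite justified as stated, because when $\sum_{j\in S^r_i}b_{-i}(j)=1$ with the mass concentrated on a proper subset, the deviation ``bid $b_{-i}(j)+\epsilon$ on all of $S^r_i$'' can itself violate no-overbidding on that subset (whose value is $1$); you only ever need the contrapositive form (if the sum is strictly below $1$ the deviation is feasible and profitable), so phrase it that way.
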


\begin{proof}
Suppose that there is some \hlhf{no-overbidding} equilibrium, and denote by
$b_i(j)$ the bid of bidder $i$ for item $j$ in that equilibrium. Let the
allocation in \hlhf{the} equilibrium be $(S_1,S_2)$. Observe that by the
properties of a good set-pair system and the definition of the valuations it either holds that $v_1(S_1)\neq 2$ or that $v_2(S_2)\neq 2$. Thus, we may assume without loss of generality that $v_1(S_1)\leq 1$.

%% We start by showing that $v_1(S_1)=1$. Since we already assume that
%% $v_1(S_1)\leq 1$, the only possibility that we have to \hlhf{exclude} is that
%% $v_1(S_1)=0$. Suppose this is the case, and observe that by no overbidding there
%% must be one item $j$ such that $b_2(j)\leq \frac 2 m$, since $v_2(M)=2$. Now
%% observe that if bidder $1$ bids $0$ on every item but $j$ and $\frac 2
%% m+\epsilon$ on $j$ his profit increases \hlhf{from~$0$} to at least $1-\frac 2 m$, a contradiction. Hence from now on we assume that $v_1(S_1)=1$.

% Without loss of generality we assume that $|S_1|=1$ or that 
% Hu: This seems a redundant sentence.

%% If $|S_1|\geq 2$ this implies that $b_2(j)=0$ for every $j\in S_1$. This is true
%% since for every single item $j$ we have that $v_1(\{j\})=1$. In particular, if
%% there is a single item $j\in S_1$ with $b_2(j)>0$ then $|S_1|=1$, since any
%% other additional item with a positive cost reduces the profit of bidder~$1$, and
%% if there is an item with a zero cost in $S_1$, taking only this item is more profitable than any bundle that contains~$j$.  

% In particular, 

Let us call a set of items $U$ \emph{unprotected} if $v_1(U)=2$ 
and $\Sigma_{j \in U} b_2(j) < 1$. Note that if $U$ is unprotected and bidder
1 bids $b_2(j)+\epsilon$ for every $j \in U$ and 0 for every $j \not\in U$,
then for sufficiently small positive $\epsilon$ this satisfies
no-overbidding%
% \footnote{\hlhf{Strictly speaking, no-overbidding only requires that the strategies employed in the equilibria be no-overbidding, and allow deviations to be overbidding.  In our argument, disallowing overbidding deviations only possibly enlarges the set of equilibria, and makes our lower bounds even stronger.}}
 (as the sum of all bids is less than 1) and strictly
increases bidder 1's profit. Thus, existence of an unprotected set
contradicts our hypothesis that $(b_1,b_2)$ is an equilibrium.

Let $T$, $|T|=m/4$, be such that $v_1(T)=2$. By the properties of a
good set-pair system it holds that $v_2(T)=1$. If $\Sigma_{j \in T} b_2(j)<1$
then $T$ is an unprotected set, so suppose henceforth 
that $\Sigma_{j \in T} b_2(j)=1$. For every $j \notin T$ we have
$v_2(T+\{j\})=1$ by the properties of a good set system,
since $v_1(T)=2$ and $T + \{j\}$ has only one element that
does not belong to $T$. Now the assumption that bidder 2
does not overbid implies $b_2(j)=0$ for all $j \notin T$.
Consequently $\Sigma_{j \in M} b_2(j) = \Sigma_{j \in T} b_2(j) = 1$.
Now choose any $j'$ such that $b_2(j')>0$ and observe that
$M - \{j'\}$ is an unprotected set.
\end{proof}
% \HuNote{Strictly speaking, no-overbidding equilibria only require that the strategies in the proof be no-overbidding, and allow deviations to be overbidding.  I didn't change the proof (which can be much shorter in this case), but this may be a conceptual subtlety worth making clear for the reader in any case.}   

\section{An Impossibility Result for XOS Valuations}
\label{sec:xos}

This section proves an impossibility result for XOS algorithms: we
essentially show that every efficient algorithm for XOS valuations 
must use techniques very different \hlhf{from} traditional algorithms for XOS
valuations. Section \ref{subsec-xos-impossibility} proves the impossibility
result itself, but we start by negatively answering a related open
question from \cite{CKS10}: does the XOS algorithm of \cite{CKS10} \hlhf{end} in polynomial time if the valuations are submodular? 

\vspace{0.1in}\noindent\textbf{Exponential Convergence for the XOS Algorithm with Submodular Bidders.} We show that if the bidders have submodular valuations, then the algorithm
of \cite{CKS10} may terminate after exponentially many steps, even when there
are only two bidders. The key for this example is the XOS oracle: we
choose the prices in a legitimate way according to the algorithm, but one
that will lead to a long convergence path. The dependence of this lower
bound on the implementation of the XOS oracle is unavoidable; 
as discussed earlier,
the stealing procedure presented in Section~\ref{sec:submodular} is in fact a special case of the algorithm of \cite{CKS10} that essentially differs only in the implementation of the XOS oracle. The example itself appears in Appendix \ref{app-xos-example}.

\subsection{The Impossibility Result}\label{subsec-xos-impossibility}
Recall that an XOS oracle for an XOS valuation $v$ returns
for each bundle $S$ % \hlhf{a vector of} supporting prices 
a maximizing clause for $S$ in $v$. We refer to the coefficients
of this maximizing clause as ``prices.''
In this section we denote an XOS oracle for a valuation $v$ by $\mathcal O_{v}$. We now prove a lower bound on deterministic algorithms that find a \emph{traditional} equilibrium. For simplicity, we restrict ourselves to the case where there are only two players (the impossibility result trivially extends to any number of players by adding players with valuations that are identically zero).

\begin{definition}
A no overbidding equilibrium $(S_1, S_2)$ is called \emph{traditional} with
respect to some XOS oracles $\mathcal O_{v_1}$ and $\mathcal O_{v_2}$ if, for
each bidder $i$ and item $j$, if $j\in S_i$ then $b_i(j)$ equals the price of $j$ in $\mathcal O_{v_i}(S_i)$, and if $j\notin S_i$ then $b_i(j)=0$.
\end{definition}

It is easy to verify that the equilibrium obtained by the algorithm of \cite{CKS10} is traditional.

\begin{theorem}
Let $A$ be a deterministic algorithm that always produces a traditional equilibrium with respect to some XOS oracles $\mathcal O_{v_i}$. Suppose that $A$ is only allowed to make demand and value queries, as well as XOS queries to the oracles $\mathcal O_{v_i}$. Then, $A$ makes an exponential number of queries in the worst case. The theorem holds even if there are only two bidders with identical valuations.
\end{theorem}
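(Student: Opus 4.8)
The plan is to prove a distributional lower bound and finish via Yao's principle. I will build a distribution over instances — two \emph{identical} bidders holding an XOS valuation $v$ together with a fixed XOS oracle $\mathcal{O}$ for it — such that (a) every instance in the support has a \emph{unique} traditional no-overbidding equilibrium with respect to $\mathcal{O}$, located at a hidden partition $E(\theta)=(E_1(\theta),E_2(\theta))$ indexed by a secret $\theta$ drawn uniformly from a large family $\Theta$, with $\theta\mapsto E(\theta)$ injective; and (b) any deterministic algorithm making $2^{o(m)}$ value, demand and XOS queries fails, on all but a $o(1)$ fraction of the support, to recover $E(\theta)$. Since the only admissible output is the unique traditional equilibrium, this proves the theorem. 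This follows the template of the value-query hardness argument of \cite{NisanBlog} for games, with two extra ingredients forced by our setting: demand queries (handled via \cite{BDO12}) and a combinatorial estimate that amounts to an isoperimetric inequality for odd graphs.

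\textbf{The hard instances.} Start from a symmetric ``base'' XOS valuation $v_0$ with its natural XOS oracle $\mathcal{O}_0$ — think of a saturated additive function such as $v_0(S)=\min(|S|,\lceil m/2\rceil)$, for which $\mathcal{O}_0$ returns the indicator clause of a half-sized set containing, or contained in, $S$. For such a symmetric $v_0$ essentially every balanced partition is already a traditional equilibrium, so the work lies in the perturbation: for each secret $\theta$ plant a small family of additive ``special clauses'' $\{a^{\ast}_{\theta,\ell}\}_\ell$ and set $v_\theta=\max\bigl(v_0,\max_\ell a^{\ast}_{\theta,\ell}\bigr)$, with $\mathcal{O}_\theta$ equal to $\mathcal{O}_0$ except that it returns a special clause on the bundles where one dominates. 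The clauses are tailored so that $(E_1(\theta),E_2(\theta))$ is a traditional no-overbidding equilibrium of $(v_\theta,\mathcal{O}_\theta)$, every other candidate partition is destabilized (so the equilibrium is unique), and — crucially — $(v_\theta,\mathcal{O}_\theta)$ agrees with $(v_0,\mathcal{O}_0)$ on every bundle, and at every price vector, outside a small \emph{revealing region} $N(\theta)$ near the secret (bundles Hamming-close to, or in the up/down-set of, $E_1(\theta)$ or $E_2(\theta)$, and prices at which some $a^{\ast}_{\theta,\ell}$ becomes a maximizing clause).

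\textbf{The adversary.} The adversary answers all queries using the base instance: a value (resp.\ XOS) query at a bundle outside $N(\theta)$ is answered by $v_0$ (resp.\ $\mathcal{O}_0$), and a demand query at a price vector outside the revealing region is answered by a base demanded bundle. The delicate point is the demand queries: one must calibrate the magnitudes and supports of the special clauses so that a base demanded bundle remains a demanded bundle of $v_\theta$ for all prices outside a small region — i.e.\ the perturbation never silently changes an optimal bundle — which is exactly the sort of bookkeeping enabled in the spirit of \cite{BDO12} (a demand query's answer can be taken from a short list of canonical bundles, and consistency is maintained by always returning the canonical one). Given this, after $q$ queries the transcript equals the base transcript unless some queried point $x$ lies in $N(\theta)$; so if every point (bundle or price vector) lies in $N(\theta)$ for at most $\Delta$ secrets, then $q$ queries ``touch'' at most $q\Delta$ secrets. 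When $q\Delta=o(|\Theta|)$, the algorithm's transcript, hence its output partition, is $\theta$-independent for a $1-o(1)$ fraction of $\theta$, and a fixed partition can equal $E(\theta)$ for at most one such $\theta$ by injectivity; hence the algorithm errs on all but an $o(1)$ fraction of $\Theta$, and since $|\Theta|=2^{\Omega(m)}$ this rules out $q=2^{o(m)}$.

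\textbf{The main obstacle.} Everything reduces to the bound ``each point is revealing for at most $\Delta=2^{o(m)}$ secrets.'' The requirements that $v_\theta$ be genuinely XOS and that $(E_1(\theta),E_2(\theta))$ be a \emph{traditional} (oracle-consistent) equilibrium force the eligible equilibrium sides to have a fixed cardinality and pairwise-controlled intersections, so that $\Theta$ is, in effect, the vertex set of an odd graph $O_k$ — the Kneser graph $K(2k-1,k-1)$ — with $m=\Theta(k)$. Counting the secrets $\theta$ with a fixed bundle $S\in N(\theta)$ then becomes a question about the number of vertices of $O_k$ in the shadow/neighborhood of $S$, a vertex-isoperimetric quantity for odd graphs that does not follow from Kruskal--Katona or Harper-type results on the Boolean cube. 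Proving a version of this isoperimetric inequality on odd graphs strong enough to give $\Delta=2^{o(m)}$ is the technically hardest step; the remaining parts — verifying that $v_\theta$ is XOS, that $(E_1(\theta),E_2(\theta))$ is its unique traditional equilibrium, and the routine conversion of ``output is $\theta$-independent on most of $\Theta$'' into a query lower bound via Yao — are comparatively mechanical.
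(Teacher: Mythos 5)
Your proposal correctly identifies two of the right ingredients --- taming demand queries by BDO12-style simulation, and an isoperimetric inequality on the odd graph $O_{m'+1}$ --- but the overall architecture (a planted secret $\theta$ with a \emph{unique} traditional equilibrium, plus a Yao/transcript-counting argument) has a structural flaw that I do not see how to repair. You require simultaneously that (a) $(v_\theta,\mathcal O_\theta)$ agrees with the base $(v_0,\mathcal O_0)$ on every bundle and price vector outside a small revealing region $N(\theta)$, and (b) every partition other than $E(\theta)$ is destabilized. These are in direct tension: you yourself note that for the symmetric base essentially every balanced partition is a traditional equilibrium, and whether a given partition $(S_1,S_2)$ is a traditional no-overbidding equilibrium is determined by the oracle's clauses on $S_1,S_2$ and the values/demands at the induced prices --- data that, for partitions far from $\theta$, lie entirely outside $N(\theta)$ and hence coincide with the base. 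So those partitions remain equilibria of $v_\theta$, uniqueness fails, and an algorithm that never touches $N(\theta)$ can simply output a base equilibrium and be \emph{correct}. You cannot escape by choosing a base with no equilibrium either: every XOS instance with two identical bidders admits a traditional equilibrium (the best-response dynamic of \cite{CKS10} always terminates at one), so the ``needle in a haystack'' framing is wrong for this problem. This is why the problem is not a hidden-input problem but a local-search problem.

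The paper's route reflects this: it first characterizes traditional equilibria of a family of ``sensitive'' valuations as $j$-local maxima of an assignment $S\mapsto k_S$ on the vertices of the odd graph (Proposition~\ref{prop:eq-char}), and then runs an \emph{adaptive} adversary (no Yao needed, since $A$ is deterministic): each queried vertex $Q_l$ is assigned a value larger than everything assigned so far, with its XOS clause $M_{Q_l,j}$ pointing at a still-uncolored neighbor, so that $Q_l$ is never certifiably a local maximum; small components cut off from the unqueried region are ``colored'' with decreasing values so consistency is preserved. Termination is only possible once some queried vertex has its entire odd-graph neighborhood colored, and the isoperimetric inequality (every vertex set of size $k$ in $O_n$ has at least $(n-\tfrac43\log k)\tfrac kn$ neighbors) converts that into a $\tfrac{2^{0.75m'-1}}{m'}$ lower bound on the number of queries. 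Your proposal also leaves both hard steps (the isoperimetric inequality and the demand-query locality claim) as assertions; but even granting them, the counting argument you build on top of them does not yield a contradiction, for the reason above. To fix the proof you would need to abandon uniqueness and instead argue, as the paper does, that the adversary can keep \emph{all} equilibria of the (still undetermined) instance away from everything the algorithm has certified.
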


The proof consists of three parts. First, we present a family of XOS valuations (sensitive valuations) and give conditions for equilibrium with this family. We then show that demand queries to sensitive valuations are almost useless: every demand query can be simulated by polynomially many value queries. Finally, we prove that any algorithm for finding equilibrium when bidders have sensitive valuations that uses value queries makes exponentially many value queries in the worst case. Our bound will then follow.

\subsubsection{Proof part I: conditions for equilibrium}

Let $m>20$ be an odd integer and let $m'=\lfloor m/2 \rfloor$. Given
non-negative $k_S$'s such that for every bundle~$S$, $|S|=m'+1$, we have $\frac 1 4>k_S >0$, consider the valuation $v$ that is the maximum of the following additive valuations (and hence XOS by definition):
\begin{enumerate}
\setlength{\itemsep}{0pt}
\item There is an additive valuation $C_j$ for every item $j$ that gives a value of $m'-20$ for $j$ and value of $0$ for every other item.
% \HuNote{I suggest we use a different letter for this valuation.}
\item For every bundle $S$, $|S|=m'$, there is an additive valuation $A_S$ that gives a value of $1$ to every item $j\in S$ and value $0$ for any other item.
\item For every bundle $S$, $|S|= m'+1$, there is an additive valuation
$M_{S,j}$, for one item $j\in S$. $M_{S,j}$ gives a value of $\tfrac 1 4+k_S$
for item $j$, a value of $1$ for every item $j'\in S \setminus \{j\}$, and $0$ for every item that is not in $S$.
\item For every bundle $S$, $|S|= m'+10$, there is an additive valuation $B_S$ that gives a value of $\frac {m'+1} {m'+10}$ for every item in $S$ and $0$ otherwise.
\end{enumerate}

We call such valuations \emph{sensitive}. An XOS oracle for a sensitive
valuation is \emph{standard} if for every bundle $S$ it returns an additive
valuation that is specified in the above definition. It will sometimes be easier
to work with an explicit description of a sensitive valuation (see proof in the appendix):

\begin{proposition}\label{prop-xos-equiv}
Let $v'$ be a sensitive valuation defined by $k_S$'s. Let $v_i$ be the following valuation:
$$
v_i(S)=
\begin{cases}
\setlength{\itemsep}{0pt}
m'-20 &  |S|\leq m'-20,\\
|S| &   m'-20 < |S| \leq m',\\
m'+\frac 1 4+\max_{S' \subset S} k_{S'} &  m' < |S| < m'+10,\\
m'+1 & |S| \geq m'+10.\\
\end{cases}
$$
For every $S$ we have that $v(S)=v'(S)$.
\end{proposition}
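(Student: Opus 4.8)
The plan is to show $v(S) = v_i(S)$ by a direct case analysis on $\lvert S\rvert$, in each case identifying which of the four families of additive clauses ($C_j$, $A_{S'}$, $M_{S',j}$, $B_{S'}$) achieves the maximum over all sub-bundles of $S$, and verifying that the maximum equals the stated closed form. Throughout I will use that an additive clause $a$ restricted to its support $T$ contributes $a(S) = \sum_{j \in S \cap T} a(j)$, so that to maximize $a(S)$ one wants $S$ to contain as much of the support as possible; in particular $C_j$ contributes $m'-20$ iff $j \in S$, $A_{S'}$ contributes $\lvert S \cap S'\rvert$ (maximized at $\min\{\lvert S\rvert, m'\}$ when $S' \subseteq S$ or $S \subseteq S'$ appropriately), $M_{S',j}$ contributes $\lvert S \cap S'\rvert - [\,j \in S\cap S'\,]\cdot(\tfrac34 - k_{S'})$ adjusted for the special item, and $B_{S'}$ contributes $\frac{m'+1}{m'+10}\lvert S \cap S'\rvert$, maximized at $\frac{m'+1}{m'+10}\cdot\min\{\lvert S\rvert, m'+10\}$.

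The four cases are as follows. \textbf{Case $\lvert S\rvert \le m'-20$:} the clause $C_j$ for any $j \in S$ gives $m'-20$; every $A_{S'}$ gives at most $\lvert S\rvert \le m'-20$; every $M_{S',j}$ gives at most $\lvert S\rvert - (\tfrac34-k_{S'}) < \lvert S\rvert \le m'-20$; every $B_{S'}$ gives at most $\frac{m'+1}{m'+10}\lvert S\rvert < \lvert S\rvert \le m'-20$. So the max is $m'-20$. \textbf{Case $m'-20 < \lvert S\rvert \le m'$:} choosing $S' \subseteq S$ with $\lvert S'\rvert = \lvert S\rvert$ gives $A_{S'}(S) = \lvert S\rvert$, which beats $C_j = m'-20 < \lvert S\rvert$; the $M$-clauses lose at least $\tfrac34 - k_{S'} > \tfrac14 > 0$ on their special item so give strictly less than some $A$-clause can; the $B$-clauses give at most $\frac{m'+1}{m'+10}\lvert S\rvert < \lvert S\rvert$. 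So the max is $\lvert S\rvert$. \textbf{Case $m' < \lvert S\rvert < m'+10$:} now $A_{S'} \le m'$; pick $S' \subset S$ with $\lvert S'\rvert = m'+1$ maximizing $k_{S'}$ and the special item $j$ chosen for $M_{S',j}$ inside $S$, giving $M_{S',j}(S) = m' + \tfrac14 + k_{S'} = m' + \tfrac14 + \max_{S'\subset S, \lvert S'\rvert=m'+1} k_{S'}$; since $\lvert S'\rvert = m'+1 > \lvert S\rvert - 9$ need not hold in general, but $\lvert S\rvert \le m'+9$ guarantees such $S'$ exists, and this value exceeds $m'$ and exceeds $\frac{m'+1}{m'+10}\lvert S\rvert \le \frac{m'+1}{m'+10}(m'+9) < m'+1$; one must also check no $M_{S'',j''}$ with a larger support-overlap structure can do better — but $\lvert S\rvert < m'+10$ forces $\lvert S\cap S''\rvert \le \lvert S\rvert$ and the "$+\tfrac14 + k$" bonus is attached to a single item, so the bound $m' + \tfrac14 + \max k_{S'}$ is tight. (I should double-check here whether $\max_{S'\subset S}k_{S'}$ in the statement ranges only over $\lvert S'\rvert = m'+1$; the definition only introduces $k_S$ for $\lvert S\rvert = m'+1$, so this is forced and the closed form is unambiguous.) \textbf{Case $\lvert S\rvert \ge m'+10$:} pick $S' \subseteq S$ with $\lvert S'\rvert = m'+10$, giving $B_{S'}(S) = m'+1$; every $M_{S'',j}$ gives at most $(m'+1) - 1 + (\tfrac14 + k_{S''}) = m' + \tfrac14 + k_{S''} < m'+1$ since $k_{S''} < \tfrac14$; $A$-clauses give at most $m'$; $C$-clauses give $m'-20$. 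So the max is $m'+1$.

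The main obstacle I expect is \textbf{Case~3} ($m' < \lvert S\rvert < m'+10$): one has to (a) confirm that a bundle $S'$ of size exactly $m'+1$ with $S' \subset S$ exists (it does, since $\lvert S\rvert \ge m'+1$), (b) verify that the $M_{S',j}$-clause genuinely beats all $B$-clauses on such $S$ — this is where the specific constant $\frac{m'+1}{m'+10}$ matters, and one needs $\frac{m'+1}{m'+10}\lvert S\rvert < m' + \tfrac14$, i.e. roughly $\lvert S\rvert < (m'+\tfrac14)\frac{m'+10}{m'+1}$, which for $\lvert S\rvert \le m'+9$ and $m > 20$ holds with room to spare, and (c) make sure no $M$-clause can "stack" two bonuses — it cannot, because each $M_{S',j}$ has exactly one bonus item by definition. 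Once these inequalities are pinned down with the hypothesis $m > 20$ (so $m' \ge 10$), the closed form in each regime matches $v_i$, completing the proof. The remaining three cases are routine once the contribution formula for each clause family is written out, so I would present Case~3 in full and the others briefly.
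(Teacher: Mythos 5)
Your proposal is correct and follows essentially the same route as the paper's proof: a case analysis on $|S|$ over the four regimes, identifying the dominant clause family in each and bounding the others, with the key comparison being $B_{S'}(S)\leq \tfrac{m'+1}{m'+10}(m'+9)\leq m'+\tfrac14$ in the regime $m'<|S|<m'+10$ (the paper uses exactly this inequality, which, like your version, really needs $m'$ somewhat larger than the stated $m>20$ to hold). The only cosmetic slip is in Case~2, where the witnessing clause is $A_{S'}$ for some $S'\supseteq S$ with $|S'|=m'$ rather than $S'\subseteq S$ with $|S'|=|S|$; this does not affect the argument.
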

%\begin{proof}
%We divide the analysis into cases, according to the size of $S$:
%\begin{enumerate}
%\item $|S|\leq m'-20$: $C_{j}(S)=m'-20$, while for every bundle $S'$ and item $j$, %$M_{S',j}(S)=A_{S'}(S)\leq |S|\leq m'-20$ and $B_{S'}(S)=\frac {m'+1} {m'+10}\cdot |S|<m'-20$.

%\item $m'-20 < |S|\leq m'$: $A_{S}(S)=|S|$. On the other hand, for every other bundle $S'$ and item $j'$, %$C_{j'}(S)=m'-20$, $B_{S'}(S)=\frac {m'+1} {m'+10}\cdot |S|<|S|$, and $M_{S',j'}(S)\leq A_S(S)$.

%\item $m'+1 \leq |S| \leq m'+9$: let $T\subseteq S$ be a maximum value bundle of
%size $m'+1$. Observe that $v_i(T)=v'_i(T)$. There exists some $j\in T$
%such that $M_{T,j}(S)=m'+\frac 1 4+k^i_S$. However, for every other bundle $S'$
%and item $j'$, $A_{S'}(S)\leq M_{S',j'}(S)\leq v_i(T)$, $C_{j'}(S)=m'-20$ and
%\hlhf{$B_{S'}(S) \leq \tfrac{m' + 1}{m' + 10} \cdot (m' + 9) = m' + 1 -
%\tfrac{m' + 1}{m' + 10} \leq m' + \tfrac 1 4 \leq v_i(S)$}.

%\item $|S|\geq m'+10$: for some bundle $S'\subseteq S$, $|S'|=m'+10$ we have that $B_{S'}(S)=m'+1$, and %this is larger than the maximum value of every other additive valuation in the support of $v'_i$.
%\end{enumerate}
%\end{proof}

\begin{definition}
A bundle $S$ of size $m'+1$ \hlhf{and a corresponding XOS clause $M_{S,j}$} are called a
\emph{$j$-local maximum} of a valuation $v$ with respect to \hlhf{an} item $j\in
S$ if $v(S)\geq v(M-S+\{j\})$.
\end{definition}

The definition of $j$-local maximum and of $M_{S,j}$ implies that the XOS
clause of~$S$ puts a weight strictly smaller than~$1$ on~$j$.  The condition
$v(S) < v(M - S + \{j\})$ alone does not imply that $S' = M - S + \{j\}$ is a
$j$-local maximum, because even though $v(S') > v(M - S' + \{j\}) = v(S)$, the
XOS clause for~$S'$ may be $M_{S', j'}$ for an item $j' \neq j$.  The reference
to the XOS clause is important for traditional equilibria, in which a bidder
allocated a $j$-local maximum necessarily bids strictly less than~$1$ on~$j$. The proof of the next proposition is in the appendix.

\begin{proposition}
\label{prop:eq-char}
Let $v$ be a sensitive valuation and let $(S_1, S_2)$ be a traditional equilibrium of two bidders with the same valuation $v$ with respect to the standard XOS representation. Then, either $|S_1|=m'$, $|S_2|=m-m'=m'+1$, and $S_2$ is a $j$-local maximum of $v$ (for some item $j\in S_2$), or $|S_2|=m'$, $|S_1|=m-m'$, and $S_1$ is a $j$-local maximum of $v$ (for some item $j\in S_1$).
\end{proposition}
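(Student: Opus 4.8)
\textbf{The plan} is to determine $|S_1|$ and $|S_2|$ first, and then read off the $j$-local-maximum condition from a single ``steal'' deviation by the bidder holding the smaller bundle. Throughout I use the explicit description of a sensitive valuation from Proposition~\ref{prop-xos-equiv}; the facts I need are that $v(R)\ge|R|$ whenever $1\le|R|\le m'$, that $v(R)\ge m'+\tfrac14$ whenever $|R|>m'$, and that $v(R)=m'+\tfrac14+k_R$ when $|R|=m'+1$. Since the simultaneous second-price auction allocates every item, $(S_1,S_2)$ is a partition of $M$, so $|S_1|+|S_2|=m=2m'+1$; by symmetry assume $|S_1|\le m'$, hence $|S_2|\ge m'+1$ (the case $|S_2|\le m'$ is identical with the roles of the two bidders swapped). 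As the equilibrium is traditional, each bidder bids $0$ on items outside his bundle, so $b_2(j)=0$ for every $j\in S_1$; therefore bidder~$1$ pays nothing and his equilibrium utility equals $v(S_1)$. I will use one deviation gadget: for a set $F\subseteq S_2$ on which bidder~$2$ bids strictly less than~$1$, let bidder~$1$ switch to bidding $\epsilon$ on each item of $S_1$, $b_2(j)+\epsilon$ on each item of $F$, and $0$ elsewhere. For $\epsilon$ small enough all these bids are $<1$, and a routine case check (using the facts about $v$ above, and the fact that $\sum_{j\in F}b_2(j)\le v(S_2)\le m'+1$) shows this is a no-overbidding strategy; moreover, whenever bidder~$2$ bids positively on every item of $S_2\setminus F$, bidder~$1$ then wins exactly $S_1\cup F$ and pays $\sum_{j\in F}b_2(j)$.

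\textbf{Part I: $|S_1|=m'$.} Suppose instead $|S_1|\le m'-1$, so $|S_2|\ge m'+2$. By Proposition~\ref{prop-xos-equiv} and inspection of the list of clauses of a sensitive valuation, identify the standard oracle's clause for $S_2$: if $m'+2\le|S_2|\le m'+9$ it is some $M_{S',j'}$ with $S'\subseteq S_2$ and $|S'|=m'+1$; if $|S_2|>m'+10$ it is some $B_{S'}$ with $|S'|=m'+10$; if $|S_2|=m'+10$ it is $B_{S_2}$. In the first case $b_2=0$ on the $|S_2|-m'-1\ge1$ items of $F:=S_2\setminus S'$ and $b_2>0$ on $S_2\setminus F=S'$; stealing $F$ for free, bidder~$1$ reaches a bundle of size $|S_1|+|S_2|-m'-1=m'$, of value $m'>v(S_1)$. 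In the second case $b_2=0$ on the $|S_2|-m'-10\ge1$ items of $F:=S_2\setminus S'$ and $b_2>0$ on $S'$; stealing $F$ for free, bidder~$1$ reaches size $m'-9>|S_1|$, of value $m'-9>v(S_1)$. In the third case $|S_1|=m'-9$ and $b_2=\tfrac{m'+1}{m'+10}$ on all of $S_2$; stealing a single item $\ell\in S_2$ at cost $\tfrac{m'+1}{m'+10}<1$, bidder~$1$ reaches size $m'-8$, so his utility becomes $(m'-8)-\tfrac{m'+1}{m'+10}=(m'-9)+\tfrac{9}{m'+10}>v(S_1)$. Each case contradicts equilibrium, so $|S_1|=m'$ and $|S_2|=m'+1$.

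\textbf{Part II: $S_2$ is a $j$-local maximum.} Now $v(S_2)=m'+\tfrac14+k_{S_2}$, and inspecting the clause list shows that the only maximizing clauses of $S_2$ are the $M_{S_2,j}$ with $j\in S_2$; say the oracle returns $M_{S_2,j^*}$, so $b_2(j^*)=\tfrac14+k_{S_2}$ and $b_2(j)=1>0$ for every $j\in S_2\setminus\{j^*\}$. Apply the deviation gadget with $F=\{j^*\}$: bidder~$1$ wins $S_1\cup\{j^*\}=M-S_2+\{j^*\}$, a bundle of size $m'+1$ and value $m'+\tfrac14+k_{S_1\cup\{j^*\}}$, paying $\tfrac14+k_{S_2}$, so his utility becomes $m'+k_{S_1\cup\{j^*\}}-k_{S_2}$, whereas his current utility is $v(S_1)=m'$. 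Since $(S_1,S_2)$ is an equilibrium this deviation is not profitable, so $k_{S_1\cup\{j^*\}}\le k_{S_2}$, i.e.\ $v\bigl(M-S_2+\{j^*\}\bigr)\le v(S_2)$; together with the clause $M_{S_2,j^*}$ this is exactly the statement that $S_2$ is a $j^*$-local maximum of~$v$. Exchanging the roles of the two bidders gives the other alternative in the proposition, completing the proof.

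\textbf{Expected main obstacle.} The only genuinely delicate part is making the no-overbidding verification of the deviation gadget airtight for \emph{every} sub-bundle, not just the obvious one; the clean route is the observation that all bids involved are strictly below~$1$, which is precisely where the numerical structure of sensitive valuations ($k_S<\tfrac14$, the floor $m'-20$ on the $C_j$ clauses, the jump to $m'+1$ at size $m'+10$) is being exploited. Conceptually the substantive point is simply that for the size-$(m'+1)$ bidder the ``steal $j^*$'' deviation is always \emph{available}, so whether it pays off depends only on the sign of $k_{S_1\cup\{j^*\}}-k_{S_2}$ --- which is the $j^*$-local-maximum test --- while the rest is bookkeeping on bundle sizes.
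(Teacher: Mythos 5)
Your proof is correct and follows essentially the same route as the paper's: identify the standard oracle's maximizing clause for the larger bundle, rule out every case except $|S_2|=m'+1$ with an $M_{S_2,j}$ clause via cheap/free stealing deviations, and then read off the $j$-local-maximum condition from the single-item steal of $j^*$. If anything, your Part~I is slightly more explicit than the paper's (which dismisses the $m'+2\le|S_2|\le m'+9$ range with a ``without loss of generality'' and folds the $|S_2|=m'+10$ boundary into the general $B_S$-clause case), but the decomposition and the deviations are the same.
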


\subsubsection{Proof part II: simulating demand queries by value queries}

We now show how to simulate a demand query using value queries, by adapting techniques from~\cite{BDO12}. Since we show an impossibility result and since the values of all bundles of sizes other than $m'+1$ are known, we may consider only demand queries that may return a bundle of size $m'+1$. % Recall that the input for a demand query is a price for each item, and the output is a profit-maximizing bundle. 

\begin{definition}
Let $\dq$ be a demand query. A bundle $S$, $|S|=m'+1$, is \emph{covered} by
$\dq$ if there is a sensitive valuation $v$ such that when querying $v$
for $\dq$ a profit-maximizing bundle is $S$. 
\end{definition}

The main point here is that the number of bundles covered by a query is bounded
(see proof in the appendix):  

\begin{lemma}\label{lemma-bounded-cover}
Fix a demand query $\dq$. Let $\mathcal S=\{S|\hbox{$S$ is covered by
$\dq$}\}$. Then, $|\mathcal S|\leq \poly(m)$. 
\end{lemma}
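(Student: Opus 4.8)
### Proof Plan for Lemma~\ref{lemma-bounded-cover}

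The plan is to analyze what it means for a bundle $S$ of size $m'+1$ to be the profit-maximizing answer to a fixed demand query $\dq$ with prices $p_1,\ldots,p_m$. By Proposition~\ref{prop-xos-equiv}, the value $v(S)$ of such a bundle in any sensitive valuation is $m'+\tfrac14+\max_{S'\subset S, |S'|=m'+1}k_{S'}$, which lies strictly between $m'+\tfrac14$ and $m'+\tfrac12$; in particular $v(S)=m'+\tfrac14+k_S$ after renaming (a size-$m'+1$ bundle has itself as its only size-$(m'+1)$ subset), and the only freedom the adversary has is to choose the $k_S$'s in $(0,\tfrac14)$. So for $S$ to be covered, there must be a choice of the $k_{S}$ parameters making $S$ the demand. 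The profit of $S$ is $v(S)-\sum_{j\in S}p_j = m'+\tfrac14+k_S-\sum_{j\in S}p_j$; the profit of any competing bundle $T$ of size $m'+1$ is at most $m'+\tfrac12-\sum_{j\in T}p_j$ (taking $k_T$ as large as possible), and the profit of bundles of every other size is a fixed known quantity independent of the $k$'s.

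First I would dispense with the bundles of size $\ne m'+1$: since their values are pinned down by Proposition~\ref{prop-xos-equiv}, there is a single threshold $\tau(\dq) = \max_{|T|\ne m'+1}(v(T)-\sum_{j\in T}p_j)$, and for $S$ to be a demand we need $m'+\tfrac14+k_S-\sum_{j\in S}p_j \ge \tau(\dq)$, i.e. $\sum_{j\in S}p_j \le m'+\tfrac14+k_S-\tau(\dq) < m'+\tfrac12-\tau(\dq)$. Hence every covered bundle $S$ satisfies $\sum_{j\in S}p_j < m'+\tfrac12-\tau(\dq)$; call bundles meeting this the \emph{cheap} bundles. Among the cheap bundles, the competition is only among themselves, and here is the key point: if $S$ and $T$ are both covered, the adversary who wants $S$ to win sets $k_S$ near $\tfrac14$ and $k_T$ near $0$, so a sufficient and essentially necessary condition for $S$ to be realizable as a demand, given that it is cheap, is that $\sum_{j\in S}p_j$ is (weakly) the smallest among cheap bundles, OR $S$ is cheap enough that even with $k_S\to 0$ it beats the known threshold and with $k_T\to0$ all strictly-cheaper-or-equal competitors it still can win. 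I would make this precise: sort the size-$(m'+1)$ bundles by their price $\sum_{j\in S}p_j$; a bundle can be a demand only if its price is within $\tfrac14$ of the minimum price over size-$(m'+1)$ bundles (because a bundle with $k_S$ forced into $(0,\tfrac14)$ cannot overcome a price gap of $\tfrac14$ against the cheapest bundle, whose $k$ can be pushed toward $\tfrac14$ as well — more carefully, $S$ beats $T$ requires $k_S - k_T > \sum_{j\in S}p_j - \sum_{j\in T}p_j$, and the left side is $<\tfrac14$).

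So the combinatorial core reduces to: \emph{how many size-$(m'+1)$ subsets of $[m]$ can have $\sum_{j\in S}p_j$ within an additive $\tfrac14$ of the minimum?} I would bound this by a swapping/exchange argument. Let $S^\star$ be a minimum-price size-$(m'+1)$ bundle. Any other near-minimum bundle $S$ is obtained from $S^\star$ by swapping out some items and swapping in others; each single swap (remove $j\in S^\star$, add $j'\notin S^\star$) changes the price by $p_{j'}-p_j$. Order the prices of items in $S^\star$ from largest to smallest and the prices of items outside $S^\star$ from smallest to largest; a swap of the $\ell$-th largest inside-item for the $\ell$-th smallest outside-item is the cheapest possible $\ell$-swap, and for the total increase to stay below $\tfrac14$ the number of "active" swaps must be bounded and, for each number $s$ of swaps, the choices are constrained to items whose price is close to a boundary value. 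The cleanest route: partition items by price into "buckets"; show only $O(\log m)$ or $O(\mathrm{poly}(m))$ many items lie in the price-band that can participate in a sub-$\tfrac14$ swap, and then the number of bundles reachable is polynomial.

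The main obstacle I anticipate is exactly this last combinatorial counting step: a naive bound gives exponentially many near-minimum bundles when many item prices are nearly equal. The resolution must exploit that the $k_S$ gap is bounded by the \emph{fixed constant} $\tfrac14$ (not scaling with $m$) while all other relevant value differences — the differences $v(S)-v(T)$ for $|S|=|T|=m'+1$ across the adversary's choices — are confined to the tiny interval of length $\tfrac14$, \emph{and} crucially that for $S$ to actually be a demand it must beat not just other size-$(m'+1)$ bundles but also the rigidly-valued bundles of size $m'$ and $m'+10$, which forces $\sum_{j\in S}p_j$ to be small in absolute terms and, combined with $v(S)\le m'+\tfrac12$, pins the per-item average price of $S$ near a fixed value — so the items of $S$ cannot be too "spread out" in price. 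I would formalize: if $S$ is covered then removing any $j\in S$ leaves a size-$m'$ set whose value is exactly $m'$, so profit $m'-\sum_{j\in S\setminus\{j\}}p_j = m'-(\sum_{j\in S}p_j - p_j)$ must not exceed the profit of $S$, giving $p_j \le v(S)-m' \le \tfrac12$ for every $j\in S$; similarly, by adding items one checks $p_j \ge$ (something) for $j\notin S$ relative to items in $S$, i.e. $\min_{j\notin S}p_j \ge \max_{j\in S}p_j - (\text{const})$, which is precisely the kind of two-sided pinning that limits the candidate bundles to a polynomial-size family. Filling in the exact constants and turning "pinned near a value" into a $\mathrm{poly}(m)$ count is the technical heart, but it is bounded-constant arithmetic of the type the paper handles routinely, so I would present it as a short case analysis on which of the competing bundle-sizes ($m'$, $m'+1$, $m'+10$) provides the binding constraint.
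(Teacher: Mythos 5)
There is a genuine gap, and it sits exactly where you defer to ``bounded-constant arithmetic.'' Your reduction to counting size-$(m'+1)$ bundles whose price $\sum_{j\in S}p_j$ is within $\tfrac14$ of the minimum is a correct necessary condition, but that family can be exponentially large (take all item prices equal: every size-$(m'+1)$ bundle ties for the minimum). Your proposed rescue --- the per-item bound $p_j\le \tfrac12$ for $j\in S$ from the size-$m'$ comparison, plus the swap-based pinning $\min_{j\notin S}p_j\ge \max_{j\in S}p_j-\tfrac14$ --- does not break this degeneracy either: both conditions hold trivially in the equal-price example. The constraints you list (sizes $m'$, $m'+1$, $m'+10$) are simply not enough to force a polynomial count, and the exchange/bucketing argument you sketch from a minimum-price reference bundle has no way to succeed in general.

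The missing ingredient is the clause family you never invoke: the singletons $C_j$ with value $m'-20$. Comparing a covered bundle $S$ against the cheapest singleton yields a \emph{constant} bound on the \emph{total} price, $\sum_{j\in S}p_j<21$ (your size-$m'$ comparison only gives per-item bounds summing to $(m'+1)/2$, which grows with $m$). The paper combines this with the size-$(m'+10)$ comparison, which shows that at most $8$ items outside $S$ can have price below $\tfrac1{18}$. Together: fix one covered bundle $S$; every other covered bundle $G$ contains at most $18\cdot 21=378$ items of price $\ge\tfrac1{18}$ (else its total price exceeds $21$), at most $9$ of the cheap items outside $S$, and otherwise consists of items of $S$ --- so $G$ is determined by constantly many choices relative to $S$, giving roughly $m^{765}$ candidates. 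Note this rules out the equal-price scenario entirely (either all items are expensive, so no bundle of $m'+1$ of them can have total price below $21$, or more than $8$ items outside $S$ are cheap, contradicting the size-$(m'+10)$ constraint); your framework cannot see this because it never bounds the total price by a constant independent of $m$.
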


\begin{corollary}
Any demand query $\dq$ for a sensitive valuation $v$ can be simulated by polynomially many value queries.
\end{corollary}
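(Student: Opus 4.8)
The corollary follows almost immediately from Lemma~\ref{lemma-bounded-cover} together with the observation that outside of bundles of size $m'+1$ the valuation $v$ is completely determined. The plan is as follows: given a demand query $\dq$ with prices $p_1,\ldots,p_m$, we must output a profit-maximizing bundle for $v$. By Proposition~\ref{prop-xos-equiv} we know $v(S)$ exactly as a function of $|S|$ for every $S$ with $|S|\neq m'+1$, so the best achievable profit $v(S)-\sum_{j\in S}p_j$ over all such $S$ can be computed with no queries at all: for each target size $s\neq m'+1$, the optimal choice is to pick the $s$ cheapest items, and $v$ restricted to size-$s$ bundles is a known constant. This gives a candidate bundle $S^\ast_{\neq}$ and a candidate profit value in closed form.

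It remains to compare against bundles of size exactly $m'+1$. Here I would invoke Lemma~\ref{lemma-bounded-cover}: let $\mathcal S=\{S : |S|=m'+1,\ S \text{ covered by }\dq\}$, which has size $\poly(m)$. The key point is that the only size-$(m'+1)$ bundle that can possibly be the demand-query answer is one in $\mathcal S$ — this is exactly the definition of ``covered,'' since if some size-$(m'+1)$ bundle $S$ is profit-maximizing for $\dq$ under the sensitive valuation $v$ we are querying, then by definition $S$ is covered by $\dq$. So for the simulation we enumerate $\mathcal S$, issue one value query $v(S)$ for each $S\in\mathcal S$ (polynomially many), compute $v(S)-\sum_{j\in S}p_j$ for each, and take the best such bundle; call it $S^\ast_{=}$ with its profit. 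Finally we output whichever of $S^\ast_{\neq}$ and $S^\ast_{=}$ yields the larger profit (and we have, in fact, verified it is a global maximizer since every size class has been accounted for: for sizes $\neq m'+1$ exhaustively by the closed form, and for size $m'+1$ by restricting to the provably-sufficient set $\mathcal S$).

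The total number of value queries is $|\mathcal S|\le\poly(m)$ from Lemma~\ref{lemma-bounded-cover}, plus a constant number to read off the known values if desired, which is $\poly(m)$ as claimed. The one subtlety to state carefully is why it suffices to restrict attention to $\mathcal S$ rather than all $\binom{m}{m'+1}$ bundles of size $m'+1$: this is purely definitional — ``covered by $\dq$'' means ``is a profit-maximizing bundle for $\dq$ for \emph{some} sensitive valuation,'' which in particular includes the actual valuation $v$ being queried, so no size-$(m'+1)$ bundle outside $\mathcal S$ can ever be the correct answer. There is no real obstacle here beyond making this reduction-style argument precise; all the technical work lives in Lemma~\ref{lemma-bounded-cover}, which we assume.
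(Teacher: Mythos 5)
Your argument is correct and is essentially the paper's own proof: query the polynomially many size-$(m'+1)$ bundles covered by $\dq$ (Lemma~\ref{lemma-bounded-cover}), compare against the remaining bundles whose values are determined in advance, and return a profit-maximizer, with the definitional observation that any size-$(m'+1)$ answer to $\dq$ under the actual valuation $v$ is by definition covered. One small imprecision worth noting (the paper glosses over it too): for $m'+1 < |S| < m'+10$ the value $v(S)=m'+\tfrac{1}{4}+\max_{S'\subset S}k_{S'}$ is \emph{not} a known constant on that size class, but any such bundle is weakly dominated in profit by its size-$(m'+1)$ subset attaining the maximum $k_{S'}$ (same value, no larger price), and that subset is then covered by $\dq$, so your procedure still outputs a correct answer.
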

\begin{proof}
Query all the polynomially many sets covered by~$\dq$. Among these sets and all the others (whose values are already known), return a profit-maximizing one.
% Let $S$ be a bundle of size $m'+1$ that maximizes the profit among all bundles covered by $\dq$. Notice that $S$ can be found by polynomially many value queries. Let $S'$ be the value of size different than $m'+1$ that maximizes the profit. If $v$ is sensitive then $S'$ can be found without making any queries at all, since the values of those bundles are fixed. Observe that to simulate $\dq$ it is enough to return the profit-maximizing bundle among $S$ and $S'$.
  \end{proof}

\subsubsection{Proof part III: the power of value queries}

We now show that exponentially many value queries are needed to find a $j$-local maximum of a sensitive valuation $v$. The adversary will construct the hard valuation $v$ by following the algorithm: whenever the algorithm queries some bundle $S$, $|S|=m'+1$ the adversary will make sure that $S$ is not a local maximum. Obviously, since $v$ is sensitive there is no use for the algorithm to query bundles of size different from $m'+1$, since the values of those bundles are known in advance. We will assume that whenever the algorithm makes a value query to some bundle $S$, it also makes an XOS query to the same bundle (this only makes our bound stronger). Specifically, we show that:

\begin{lemma}\label{lemma-xos-part3}
Any deterministic algorithm for finding a $j$-local maximum in sensitive
valuations makes at least \hlhf{$\frac {2^{0.75 m'-1}} {m'}$} queries in the worst case.
\end{lemma}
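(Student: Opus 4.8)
This is an adversary argument against deterministic algorithms, and the plan is to reduce it to a local‑search lower bound on an odd graph. First I would translate the goal into a clean combinatorial problem. By Proposition~\ref{prop:eq-char} together with the definition of a traditional equilibrium, finding a traditional equilibrium for two bidders with a common sensitive valuation $v$ amounts to exhibiting a bundle $S$ with $|S|=m'+1$ that is a $j$‑local maximum for the item $j$ that the standard XOS oracle $\mathcal O_v$ returns on $S$; since the adversary controls $\mathcal O_v$ and the algorithm only learns the returned clause through an XOS query, the adversary will always return the worst such $j$, so the algorithm is in fact forced to produce an $S$ with $v(S)\ge v(M-S+\{j\})$ \emph{simultaneously for every} $j\in S$. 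Recalling that $v(S)=m'+\tfrac14+k_S$ for $|S|=m'+1$, this says exactly $k_S\ge k_{M-S+\{j\}}$ for all $j\in S$. Identify the $(m'+1)$‑bundles with the vertices of the odd graph $O=O_{m'+1}$ (the Kneser graph on $m'$‑subsets of $[m]$, or equivalently the graph on $(m'+1)$‑subsets in which $S$ is adjacent to $M-S+\{j\}$ for each $j\in S$ — these complement‑plus‑one‑element sets are precisely the $(m'+1)$‑bundles meeting $S$ in a single point, i.e.\ the neighbors). Then the condition ``$S$ is a $j$‑local maximum for all $j\in S$'' says exactly that $S$ is a local maximum of the potential $k\colon V(O)\to(0,\tfrac14)$. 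So the problem is: find a local maximum of an adversarial potential on $O$ using value queries.

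Next I would reduce all the query types to value queries on $O$, losing only a polynomial factor. Demand queries are already dealt with: by the corollary following Lemma~\ref{lemma-bounded-cover} each demand query is simulated by $\poly(m)$ value queries. XOS queries add nothing: on a bundle of size $m'+1$ the standard clause $M_{S,j}$ reveals $k_S$ (no more than a value query there, and the adversary already keeps control of the returned $j$), and on other sizes the clause is essentially determined by the size. Value queries to bundles of size $\ge m'+10$ return $m'+1$ regardless of the $k_S$'s (the $B_S$ clauses), hence are useless; a value query to a bundle $R$ with $m'+1<|R|<m'+10$ returns $m'+\tfrac14+\max_{S'\subseteq R}k_{S'}$, a maximum over only $\binom{|R|}{m'+1}\le\binom{m'+9}{8}=\poly(m)$ vertices of $O$. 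So every query the algorithm issues can be answered by the adversary after revealing (or merely upper‑bounding) at most $\poly(m)$ entries of $k$. Consequently a lower bound of $Q$ value queries for finding a local maximum on $O$ yields a lower bound of $Q/\poly(m)$ in the original model; a careful accounting of these factors is what turns the pure graph bound into the stated $\tfrac{2^{0.75m'-1}}{m'}$.

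The core is then a ``hidden monotone staircase'' argument. The adversary maintains a family of mutually consistent sensitive valuations whose unique local maximum sits on a hidden, long self‑avoiding walk $P=(v_0,\dots,v_L)$ in $O$ along which $k$ strictly increases, with $k$ decreasing toward a common floor as one moves away (in graph distance) from $P$, so that $v_L$ is the unique peak. Each query is answered as if $P$ ran as far as possible from everything the algorithm has seen; as usual, until the algorithm has explicitly probed $v_L$ together with enough of its neighborhood, the adversary retains the freedom to re‑route the still‑unseen tail of $P$, so no correct output is possible. Since each query exposes only a $\poly(m)$‑sized set of vertices of $O$, the algorithm needs $\Omega(L/\poly(m))$ queries, and it remains to produce a ``hideable'' walk of length $L=2^{\Omega(m')}$ in $O$. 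This is exactly where the promised new isoperimetric inequality for odd graphs is needed: one must show that $O$ contains a self‑avoiding walk of length $2^{\Omega(m')}$ whose pairs of vertices close in walk‑order are far in graph distance (so that ``decrease away from $P$'' is well defined and single‑peaked), which in turn rests on bounding the size of distance‑$r$ balls in the Kneser graph $O$ relative to $|V(O)|=\binom{m}{m'+1}$, and on bounding how many vertices of $P$ a single query can ``see''. Calibrating that ball‑size bound against $\binom{m}{m'+1}$ is precisely what produces the exponent $0.75m'$.

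I expect the isoperimetric inequality for $O_{m'+1}$, and its use in the adversary's bookkeeping, to be by far the hardest part: $O$ is an expander of small diameter ($\approx m'$), so it is far less hospitable to path‑hiding than the hypercube, and one needs both a clean bound on ball sizes (or on vertex‑boundaries of ``path‑like'' sets) and an adversary strategy that, after each of the first $\tfrac{2^{0.75m'-1}}{m'}$ queries, can still consistently steer the unrevealed part of the staircase away from everything seen so far. Once that inequality is in place, the staircase construction, the consistency/deferral argument, and the reduction of demand, XOS, and large‑bundle queries to $\poly(m)$ value revelations should be comparatively routine.
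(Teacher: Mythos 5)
Your parts I and II — the reduction to finding a local maximum of the hidden potential $k$ on the odd graph $O_{m'+1}$, the observation that the adversary's freedom in choosing the clause $M_{S,j}$ forces the algorithm to effectively certify $S$ against all of its neighbors, and the disposal of demand queries, XOS queries, and queries to bundles of size $\neq m'+1$ at a $\poly(m)$ cost — match the paper and are sound (your treatment of bundles with $m'+1<|R|<m'+10$ via $\binom{m'+9}{8}=\poly(m)$ is in fact more explicit than the paper's). The gap is in part III, which is the entire content of the lemma. You propose a ``hidden monotone staircase'' adversary: an exponentially long self-avoiding walk in $O_{m'+1}$ along which $k$ increases, with the tail re-routable. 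You do not construct such a walk, and you explicitly defer the properties it would need (vertices close in walk-order far in graph distance, a consistent ``decrease away from $P$'' potential, and a bound on how much of $P$ one query can see) to an unproven isoperimetric statement about ball sizes. This is not a routine omission: $O_{m'+1}$ is a degree-$(m'{+}1)$ expander of diameter about $m'$, so every vertex is within distance $m'$ of the entire walk, and it is far from clear that a single-peaked potential decreasing away from an exponentially long path can be defined consistently at all. As written, the proof has no complete adversary strategy and no quantitative bound, so the stated $\tfrac{2^{0.75m'-1}}{m'}$ is not derived.

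For contrast, the paper's adversary does not hide a path. It assigns values $m'+\tfrac14+x\epsilon$ with $x$ increasing in query order, always points the clause of a queried vertex $Q_i$ at a still-undetermined neighbor, and, whenever removing queried vertices cuts off a connected component of size less than $c=2^{0.75m'-1}$, pre-fills that component with values increasing toward the cut so it contains no local maximum the algorithm could exploit. The algorithm can therefore only terminate when the set of queried vertices forms a vertex separator isolating a component of size at least $c$; the isoperimetric inequality for odd graphs (every $k$-subset has at least $(n-\tfrac43\log k)\tfrac{k}{n}$ neighbors) is then applied to \emph{that component's vertex boundary}, yielding the bound directly. So the inequality plays a completely different role from the ball-size role you assign it, and the separator-based argument sidesteps the path-construction difficulties your plan runs into. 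To repair your write-up you would either need to actually build and maintain the hideable staircase (which I doubt goes through on this graph), or switch to the separator/boundary argument.
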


In the proof we use a graph $G$ that is composed of ${m \choose {m'+1}}$
vertices, each \hlhf{of which} is associated with a different bundle of size
$m'+1$. Two vertices $S$ and $S'$ of $G$ \hlhf{are connected by} an edge if and
only if $S=M-S'+\{j\}$, for some $j\in S'$. This graph is known as the \emph{odd graph} $O_{m'+1}$.  \hlhf{We
will refer to the subsets of~$M$ (of size~$m' + 1$) and vertices of~$O_{m'+1}$
interchangeably, and the meaning should be clear, for example, when we say a set
$S$ is a neighbor of another set~$S'$.}  We will need the following isoperimetric inequality on odd graphs (this inductive proof is inspired by \cite{H76}):

\begin{proposition}{\rm \sc (Isoperimetric inequalities for odd graphs.)}
Let $O_n=(V,E)$ be an odd graph. Then, for any $S\subseteq V$, $k=|S|$, at least
$(n- \tfrac 4 3 \log k)k$ edges have exactly one end in $S$. As a corollary,
since $O_n$ is $n$-regular, the number of neighbors of $S$ is at least $
\left(n - \tfrac 4 3 \log k \right)\frac{k}{n}$. 
\end{proposition}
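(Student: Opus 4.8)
The plan is to prove the edge-boundary bound $|\partial S| \ge (n - \tfrac43 \log k)k$ by induction on $n$, following the classical Kruskal--Katona / Harper style of argument adapted to the odd graph structure. Recall $O_n$ has vertex set $\binom{[2n-1]}{n-1}$ (the $(n-1)$-subsets of a ground set of size $2n-1$), and two sets are adjacent iff they are disjoint; equivalently, identifying a vertex $A$ with its complement's... more cleanly, a vertex is an $(n-1)$-set and its neighbors are the $(n-1)$-subsets of the complementary $n$-set. The base case is a small odd graph ($O_2$ is a 5-cycle, or one can start the induction at $n$ small enough that $n - \tfrac43\log k \le 0$ for all relevant $k \le \binom{2n-1}{n-1}$, making the bound vacuous), which is easy to check directly. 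For the inductive step, fix a distinguished ground element $x$ and partition $S$ into $S_x$ (vertices, i.e.\ $(n-1)$-sets, that contain $x$) and $S_{\bar x}$ (those that do not), with sizes $a = |S_x|$ and $b = |S_{\bar x}|$, so $k = a+b$.

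The key structural observation is how edges of $O_n$ behave under this split. The sets not containing $x$ form a copy of $O_n$ on ground set $[2n-1]\setminus\{x\}$ of size $2(n)-2$... one has to be careful here: restricting to the $(n-1)$-subsets of a $(2n-2)$-element ground set does not itself give an odd graph, but the sets containing $x$, after removing $x$, become $(n-2)$-subsets of a $(2n-2)$-element set, which is $O_{n-1}$; and the adjacency "disjoint" is preserved appropriately. The plan is: (i) edges within $S_x$ correspond to edges of an $O_{n-1}$ on $2n-2$ points, so the induction hypothesis bounds their boundary by $(n-1 - \tfrac43\log a)a$; (ii) edges within $S_{\bar x}$ and the cross-edges between $S_x$ and $S_{\bar x}$ need to be accounted for separately — a vertex containing $x$ and a vertex not containing $x$ are disjoint iff the $(n-1)$-set not containing $x$ avoids the $(n-2)$-set obtained by deleting $x$ from the first. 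Summing the contributions and using $\log(a+b) \ge$ a weighted combination, together with the fact that each vertex not in $S$ that neighbors $S$ is counted with multiplicity at most its degree, should yield $|\partial S| \ge (n - \tfrac43 \log k)k$ after optimizing over the split $k = a + b$. The corollary that the number of neighbors is at least $(n - \tfrac43\log k)\tfrac{k}{n}$ is then immediate from $n$-regularity: $|N(S)| \ge |\partial S|/n$ since each vertex outside $S$ absorbs at most $n$ boundary edges.

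The main obstacle I anticipate is the bookkeeping of the cross-edges and the choice of which coordinate $x$ to split on — in Harper-type arguments one typically needs a compression/shifting step to first reduce to a "canonical" set $S$ (e.g.\ an initial segment in colex order) for which the split behaves cleanly, or alternatively one averages the inductive bound over all choices of $x \in [2n-1]$ to avoid committing to a bad coordinate. The averaging route looks most promising here: pick $x$ uniformly, so that $\E[a] = k\cdot\frac{n-1}{2n-1}$ and $\E[b] = k\cdot\frac{n}{2n-1}$, apply the inductive bound conditionally, and use convexity of $-\log$ (Jensen) to push the expectation inside. The constant $\tfrac43$ is presumably exactly what makes this recursion close: one needs an inequality of the shape $\frac{n-1}{2n-1}(n-1-\tfrac43\log a) + (\text{cross term}) \ge n - \tfrac43\log(a+b)$ on average, and verifying that $\tfrac43$ (rather than, say, $1$ or $2$) is the right constant for the recursion to be self-sustaining is the delicate quantitative heart of the argument. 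I would also double-check the edge cases $k = 1$ (a single vertex has exactly $n$ boundary edges, and $n - \tfrac43\log 1 = n$, so the bound is tight) and $k$ close to $\binom{2n-1}{n-1}$ (where the bound should degrade gracefully, which it does since $\tfrac43\log k$ can exceed $n$, making the statement vacuous but still true).
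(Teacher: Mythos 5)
There is a genuine gap, and it sits exactly where you locate the ``delicate quantitative heart'': the decomposition you propose does not exist, so the recursion you hope to close is never set up. Fixing a ground element $x$ and splitting $S$ into $S_x$ (vertices containing $x$) and $S_{\bar x}$ does not yield smaller odd graphs. In the disjointness representation the $(n-1)$-sets containing $x$ form an \emph{independent} set (two sets sharing $x$ are never disjoint), so ``edges within $S_x$'' is empty and an inductive hypothesis applied to that part says nothing about the boundary of $S$ in $O_n$; and after deleting $x$ the ground set has $2n-2$ elements, so neither the $(n-2)$-subsets nor the $(n-1)$-subsets of it form $O_{n-1}$, which lives on a ground set of size $2n-3$. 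Unlike the hypercube, the odd graph has no coordinate-deletion self-similarity, which is why a Harper-style single-coordinate compression (or the averaged/Jensen variant you suggest) does not get off the ground here. Your treatment of the corollary (divide the edge bound by $n$ using regularity) and of the edge cases $k=1$ and large $k$ is correct, but the central inductive step is both unexecuted and, as designed, unworkable.

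For contrast, the paper inducts on $k=|S|$ rather than on $n$, and bounds the number of \emph{internal} edges: it shows $\EE(k)\le \tfrac{2}{3}k\log_2 k$, whence the boundary has at least $nk-2\EE(k)$ edges by $n$-regularity. The split is on \emph{two} coordinates $x_1,x_2$: since every vertex has Hamming weight $n$ out of $2n-1$, a random pair of coordinates disagrees on a given vertex with probability $\tfrac{n}{2n-1}\cdot\tfrac12\cdot 2\le\tfrac{2}{3}$, so some pair puts at most $\tfrac{2}{3}k$ of $S$ into the ``disagreeing'' part $S_0$. The decisive structural fact --- precisely the cross-edge control your sketch lacks --- is that each vertex of $S_0$ has at most one neighbor outside $S_0$ (a neighbor with pattern $(0,0)$ at $(x_1,x_2)$ is impossible, and the pattern-$(1,1)$ neighbor is unique), giving at most $|S_0|$ cross edges and the recursion $\EE(k)\le\max_{\ell\le 2k/3}\bigl(\ell+\EE(\ell)+\EE(k-\ell)\bigr)$, which is then verified to close with the stated constant. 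If you want to salvage your outline, this two-coordinate split with induction on $|S|$ is the idea to import.
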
 

\begin{proof}
We use the following binary representation of the vertices of $O_n$: the $t$-th index of $v\in V$ is $1$ if and only if $v$ corresponds to a set that contains the $t$-th element. Let $\EE(k)$ denote the maximum possible number of edges inside a set $S\subseteq V$ of size~$k$. By the regularity of~$O_n$, the number of edges
having exactly one end in~$S$ is at least $nk - 2\EE(k)$.  It therefore suffices
to show $\EE(k) \leq \frac {2 k \log_2 k}3$, and we prove this by induction on~$k$.  For the base case of $k=2$ observe that $\EE(2)=1$.  

We now assume correctness for $k-1$ and prove for $k$. Consider a set $S$ of
size $k$. We claim that there exist two distinct
indices $x_1,x_2$ such that for at most
two-thirds of the elements $v\in S$ it holds 
that $v_{x_1} \neq v_{x_2}$.
To see this, consider some $v$ and two randomly chosen 
distinct indices $x'_1$ and $x'_2$.  
The probability that $x'_1=1$ is exactly $\frac {n} {2n-1}$
since the Hamming weight of the representation of every
vertex is exactly $n$. The conditional probability that $x'_2=0$ given 
$x'_1=1$ is $\frac{n-1}{2n-2} = \frac 1 2$.
The probability of the event $x'_1=0, \, x'_2=1$ 
is the same, by symmetry, and therefore
the probability that $v_{x'_1} \neq v_{x'_2}$
is $2\cdot \frac {n} {2n-1} \cdot \frac 1 2 \leq \frac 2 3$. 
By linearity of expectation, for randomly chosen $x'_1$ and $x'_2$, the expected
number of vertices $v \in S$ such that $v_{x'_1} \neq v_{x'_2}$ is at
most $\frac {2|S|} 3$. Hence, there exist two indices $x_1,x_2$
such that at most two-thirds of the elements in~$S$ have 
exactly one~$1$ at the two coordinates.  
Call this set of elements $S_0$ and let $S_1=S-S_0$. Let
$\ell=|S_0| \leq \tfrac {2 k} 3$.

The edges in $S$ are those inside $S_0$, those inside $S_1$, and those
between $S_0$ and $S_1$. Crucially, there are at most
$\ell$ edges between $S_0$ and $S_1$: consider some $v\in S_0$.  The number
of neighbors of $v$ in $S$ is at most one since $v$ has no neighbor whose
coordinates at $x_1$ and~$x_2$ are both~$0$, and at most one neighbor
whose coordinates at $x_1$ and~$x_2$ are both~$1$.
We have:
\begin{align*}
\EE(k) &\leq \max_{1 \leq \ell\leq \frac {2 k} 3} (\ell+\EE(\ell)+\EE(k-\ell))
\\
       & \leq \max_{1 \leq \ell \leq \frac {2 k} 3} \left[\ell + \frac {2 \ell\log \ell}{3}  + \frac {2 (k-\ell)\log(k-\ell)}3 \right]\\
      &\leq \frac 2 3 \left( k + \max_{1 \leq \ell \leq \frac{2k}{3}} \left[
\ell \log \ell + (k-\ell) \log(k-\ell) \right] \right) \\
       & \leq \frac 2 3 \left( k + k \log \left( \frac{k}{2} \right) \right) = \frac {2 k \log k}{3}.
\end{align*}
\end{proof}

% Therefore, to find a local maximum we have to color a set of size $c$. The point
% is that $\QQ$ contains the set of neighbors of $\CC$. By the isoperimetric
% inequalities for odd graphs above, if $\CC$ is a set of size $c=2^{0.75 m'-1}$ then
% $|\QQ| \geq \frac {2^{m'-1}} {m'} $, as needed.

\subsection*{Proof of Lemma \ref{lemma-xos-part3}}
Denote \hlhf{by $Q_i$} the $i$-th query that the algorithm makes (to a bundle of
size $m'+1$). For the first query $Q_1$ we will \hlhf{return} a value of $m'+\frac 1 4 + \epsilon$ and fix 
% its maximizing clause to be 
% Hu: slightly reworded
the appropriate \hlhf{clause} \hlhf{$M_{Q_1,j}$}, for some $j$. At this point the
algorithm cannot be sure if $Q_1$ is a local maximum, since the value of
$M-S+\{j\}$ is still undetermined. In general, when the algorithm makes the
$l$-th query $Q_l$, \hlhf{tentatively} we would like to set  $v(Q_l)=m'+\frac 1 4 + l\cdot
\epsilon$, and set the XOS clause of $Q_l$ to $M_{Q_l,j}$, where $j$ is such that $Q_l$ was not \hlhf{queried} yet.

The problem with this is that all \hlhf{neighbors} of $Q_l$ may have been queried, so a local maximum may quickly be found by the algorithm. To fix this, 
% we will assume that 
% Hu: removed "we will assume that", since we are going to actually do it
whenever the algorithm queries $Q_l$, before determining $v(Q_l)$ and the
maximizing clause, \hlhf{we first consider ``small'' patches of the graph that
will be cut off by the removal of~$Q_l$; we will first fix the values for
vertices in these patches: in each small connected component that is cut off,
the values assigned is increasing from vertices furthest from~$Q_l$
to those closest.  At last, we give the largest value to $Q_l$.  We will mark
these vertices as ``colored'', and remove them from the graph, together with
$Q_l$, for future rounds.  Importantly, the XOS clause $M_{Q_l, j}$ for $Q_l$
is chosen such that $j$ is in an uncolored neighbor of~$Q_l$, and therefore the
algorithm cannot know whether it has found a local maximum or not.}

% \hlhf{we} first consider $G-\CC$ (where $\CC$ is the set of vertices already coloured) and colors all connected components of size at most $c=2^{m'-1}$. We then color the connected components one by one, starting from the vertices that are farthest away from $Q_l$ to the ones that are closest to $Q_l$. Finally we color $Q_l$.
More formally, the process \hlhf{is} as follows. Let $\epsilon>0$ be such that
\hlhf{$2^m \cdot \epsilon<\frac 1 4$}. 

% \HuNote{I made many changes below, and am not highlighting them.}
\begin{enumerate}
\setlength{\itemsep}{0pt}
\item Set $x = 1$. Let $\CC=\QQ=\emptyset$.
\item For each query $Q_i$:
	\begin{enumerate}
	\setlength{\itemsep}{0pt}
	\item If $Q_i \in \CC$, return the already fixed value and clause, and
go for the next query.  Otherwise let $\QQ=\QQ+\{Q_i\}$ and proceed.
	\item For each connected component $CC$ of $G-\CC-\QQ$ of size less than
$c = 2^{0.75 m'- 1}$:
	\begin{enumerate}
		\item \label{step:dist} Let $D$ be the diameter of~$CC$, let $x
= x + D + 1$.  For each $S \in CC$ that is of distance $d$ from
$Q_i$ let $v(S)=m'+\frac 1 4+ (x-d)\cdot \epsilon$.  Let $(S, S', \cdots,
Q_i)$ be a shortest path from $S$ to~$Q_i$ in~$CC$.  Let $S \cap S'$ be $\{j\}$.
Let the maximizing clause of~$S$ be $M_{S, j}$.
% Let the maximizing clause of $S$ be $M_{S,j}$, for some $j\in S'$ for $S'\in CC$, $S'=d-1$, $S'$ is connected to $S$.  
		% \item Let $d'$ be the diameter of $CC$. Let $x=x+\epsilon \cdot d'$.
		\item Add $CC$ to $\CC$.
	\end{enumerate}
	\item \label{step:Qi} Update $x = x + 1$.  Let $v(Q_i)=m'+\frac 1 4+ x \cdot \epsilon$.  If
there is an uncolored neighbor $S'$ of~$Q_i$, let $S' \cap Q_i$ be $\{j\}$ and let the maximizing clause of~$Q_i$ be $M_{Q_i, j}$.  Otherwise the process
terminates.
	\end{enumerate}
\end{enumerate}

Since the distance~$d$ in step~\ref{step:dist} is bounded by the diameter
of the connected component, we are guaranteed that the value we assign during
the process is always increasing.  Therefore until we cannot find an
uncolored neighbor in step~\eqref{step:Qi}, the algorithm does not
find a $j$-local maximum and therefore by Proposition~\ref{prop:eq-char} it does
not find a traditional equilibrium.  At the moment we cannot do this, the
connected components newly colored in that step together with~$Q_i$ forms one
connected component~$CC^*$ that was not colored in previous steps, and therefore
is of size at least $c$.  The crucial point is that $\QQ$ contains the set of
neighbors of~$CC^*$, and by the isoperimetric inequality, $|\QQ| \geq |CC^*|
\cdot (1 - \tfrac{4 \log |CC^*|}{3m'}) =  \tfrac{2^{0.75 m' - 1}}{m'}$, as needed.

\subsubsection*{Acknowledgments}
We thank Noam Nisan for pointing our attention to \cite{H76}.

\bibliographystyle{plain}
\bibliography{bib}

\appendix

\section{Missing Proofs of Section \ref{sec:submodular}}

\subsection{Pseudo-Polynomial Convergence Time}

\hlhf{In this subsection we assume that all valuations are rational numbers.} Let $v_{max}=\max_iv_i(M)$.

\begin{claim}
 Let $\Delta$ be such that $v_i(j|S)$ is a multiple of $\Delta$, for every bidder $i$, item $j$, and subset $S$. The number of steals that the iterative stealing procedure makes is at most $n\cdot \frac {v_{max}} {\Delta}$.
\end{claim}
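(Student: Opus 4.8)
The plan is to quantify the welfare increase that Claim~\ref{claim-increase-welfare} already established qualitatively. First I would revisit the computation in that proof: when bidder~$i$ steals item~$j$ from bidder~$i'$, the allocations of all other bidders are untouched, so the change in total welfare equals $v_i(j\mid S_i) - v_{i'}(j\mid S_{i'}\setminus\{j\})$. Both of these are marginal values of the form $v_\ell(j'\mid S')$, hence by the hypothesis on~$\Delta$ they are multiples of~$\Delta$; moreover the proof of Claim~\ref{claim-increase-welfare} shows their difference is strictly positive. A positive difference of two multiples of~$\Delta$ is at least~$\Delta$, so \emph{every steal raises the welfare by at least~$\Delta$}.

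Next I would bound the welfare from above throughout the execution. By normalization and monotonicity we have $0 \le v_i(S_i) \le v_i(M) \le v_{max}$ for every bidder~$i$ and every allocation reached, so the social welfare $\Sigma_i v_i(S_i)$ always lies in the interval $[0, n\cdot v_{max}]$. Combining the two facts: since the welfare starts out nonnegative and increases by at least~$\Delta$ on each steal, after $k$ steals it is at least $k\Delta$; on the other hand it never exceeds $n\cdot v_{max}$. Therefore $k\Delta \le n\cdot v_{max}$, i.e.\ $k \le n\cdot \frac{v_{max}}{\Delta}$, and this holds for any implementation of Step~(\ref{step-update}) and Step~(\ref{step-steal}).

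I do not expect a genuine obstacle here; the argument is essentially bookkeeping on top of Claim~\ref{claim-increase-welfare}. The only two points that require a moment of care are (i) that the exact welfare change is the difference $v_i(j\mid S_i) - v_{i'}(j\mid S_{i'}\setminus\{j\})$ of two quantities covered by the ``multiple of~$\Delta$'' hypothesis, so that a strictly positive change forces a change of size at least~$\Delta$; and (ii) that the global upper bound $n\cdot v_{max}$ on the welfare uses monotonicity of the valuations (to get $v_i(S_i)\le v_i(M)$) together with the definition $v_{max} = \max_i v_i(M)$.
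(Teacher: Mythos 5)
Your proposal is correct and follows essentially the same route as the paper: invoke Claim~\ref{claim-increase-welfare} to get a per-steal welfare increase of at least $\Delta$ (since the change is a strictly positive difference of two marginal values, each a multiple of $\Delta$), and combine with the trivial upper bound $n\cdot v_{max}$ on the welfare. Your write-up is in fact more careful than the paper's one-line proof, which even contains a typo in stating the welfare upper bound.
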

\begin{proof}
By Claim \ref{claim-increase-welfare} after every step the welfare increases by
\hlhf{at least} $\Delta$. A trivial upper bound on the welfare is $n\cdot \frac {v_{max}} {\Delta}$. Thus the procedure ends after $n\cdot \frac {v_{max}} {\Delta}$ steals in every implementation.
\end{proof}

Next we show a specific implementation that gives a slightly better bound
than this last result. 

\begin{claim}\label{claim-pseudo-better}
For each bidder $i$ and item $j$, let \hlhf{$\Delta_j^i=|\{v_i(j|S)\}_{S\in
2^M}|$}. There exists an implementation of the iterative stealing procedure that terminates after at most $\Sigma_{i\in N}\Sigma_{j\in M}\Delta^i_j$.
\end{claim}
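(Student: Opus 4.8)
The plan is to fix the free choices in the iterative stealing procedure by a \emph{last-in, ordered-last} rule: each bidder~$i$ keeps its held set~$S_i$ as a list, and whenever~$i$ steals an item~$j$ in Step~(\ref{step-steal}) we append~$j$ to the end of~$i$'s list (leaving the relative order of~$i$'s other items untouched) and simply delete~$j$ from the list of the bidder~$i'$ who lost it; all other bidders' lists are unchanged. The prices recomputed in Step~(\ref{step-update}) are then the prefix marginals along these lists, so in particular, immediately after~$i$ steals~$j$ we have $b_i(j)=v_i(j\mid S_i)$ where $S_i$ is~$i$'s bundle just before the steal. This is a complete specification, and I will charge each steal in which $i$ takes an item~$j$ to the pair $(i,j)$ and show that every pair $(i,j)$ is charged at most $\Delta_j^i$ times; summing over all pairs gives the bound.

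The engine of the argument is the observation that \emph{while~$i$ continues to hold an item~$j$, its bid $b_i(j)$ can only increase}. When~$i$ acquired~$j$ it was placed last in~$i$'s list, so at that moment the set~$P$ of items preceding~$j$ equaled~$i$'s entire bundle; afterwards $i$ can lose items of~$P$ (shrinking~$P$), lose items lying after~$j$ (irrelevant to $b_i(j)$), or acquire new items, which are appended \emph{after}~$j$ and hence never enter~$P$. Thus the conditioning set~$P$ of~$j$ only shrinks over time, and since $b_i(j)=v_i(j\mid P)$ and the valuations are submodular, $b_i(j)$ is non-decreasing for as long as~$i$ holds~$j$.

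Now fix an item~$j$ and write the sequence of its successive holders as $i_1,i_2,\dots,i_L$, where $i_1$ is the initial holder and each $i_{k+1}$ obtains~$j$ by stealing it from~$i_k$ (so~$j$ is involved in exactly $L-1$ steals). Let $\beta_k$ be~$i_k$'s bid on~$j$ at the moment it acquires~$j$; by the last-in rule $\beta_k=v_{i_k}(j\mid S^{(k)})$ for the appropriate bundle $S^{(k)}\subseteq M$. The steal by~$i_{k+1}$ is executed only because $v_{i_{k+1}}(j\mid S^{(k+1)})>b_{i_k}(j)$ at that instant, and by the monotonicity observation $b_{i_k}(j)\ge\beta_k$; hence $\beta_{k+1}>\beta_k$, so the numbers $\beta_1<\beta_2<\dots<\beta_L$ are strictly increasing. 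Since all the $\beta_k$ contributed by a fixed bidder~$i$ are of the form $v_i(j\mid S)$ and are pairwise distinct, bidder~$i$ can appear among $i_1,\dots,i_L$ at most $\Delta_j^i=|\{v_i(j\mid S):S\subseteq M\}|$ times. Therefore $L\le\sum_{i\in N}\Delta_j^i$, the number of steals involving~$j$ is at most $\sum_{i\in N}\Delta_j^i-1$, and summing over $j\in M$ bounds the total number of steals (and hence, up to the usual $\poly(m,n)$ factor, value queries) by $\sum_{j\in M}\sum_{i\in N}\Delta_j^i$. The one delicate point is the monotonicity observation: it relies crucially on the ``append at the end'' convention, which is exactly what forces the conditioning set of a held item to shrink monotonically; with an arbitrary re-ordering each round the bid on a held item could go down again and the telescoping $\beta_k<\beta_{k+1}$ would fail.
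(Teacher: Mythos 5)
Your proposal is correct and follows essentially the same route as the paper: fix the order so that a stolen item is appended last in the stealer's list, observe that the conditioning prefix of a held item can only shrink so its bid is non-decreasing by submodularity, and conclude that the successive acquisition bids on each item form a strictly increasing sequence, charging each occurrence of bidder $i$ to a distinct element of $\{v_i(j\mid S)\}_{S}$. The only cosmetic difference is that you track the holder's bid $b_i(j)$ directly while the paper phrases the monotonicity in terms of the price $p_j=\max_i b_i(j)$, which coincides with it here.
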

\begin{proof}
We start with some allocation $(S_1,\ldots ,S_n)$. Each bidder $i$ will maintain
some order $>_i$ on the items according to which Step \ref{step-update} is
implemented.  After each steal we possibly update $>_i$. We begin with some
order $>_i$ such that for each bidder $i$ it holds that each item $j\in S_i$
\hlhf{precedes} every item $j'\notin S_i$.  The internal order of items within
these two groups is arbitrary.  When bidder $i$ steals item $j$ from bidder $i'$
we change $>_i$ and $>_{i'}$ so that $j$ is the \hlhf{last} item in both $>_i$
\hlhf{(within the new $S_i$)} and in $>_{i'}$ \hlhf{(among all items)}. The order of the rest of the items remains the same.

\HuNote{We should consider taking this paragraph on particular ordering out, since the proofs of the different parts of the theorem reuse it.}

For every item $j$ let the price of item $j$ be $p_j=\max_ib_i(j)$. An easy observation is that using this implementation the price $p_j$ of an item can only go up. Consider bidder $i$ that holds item $j$. First, right after bidder $i$ steals item $j$ from some other bidder $p_j$ can only go up since the price is simply bidder $i$'s marginal value for $j$. If another item $j'$ was stolen from $i$ then the price does not change if $j>_i j'$ and cannot decrease by submodularity if $j>_i j'$. If bidder $i$ steals another item $j'$ then $j'$ is the smallest item in $>_i$ and the price of $j$ did not change as well.

Using the fact that the price of an item can only increase it is easy to get a bound on the number of steals. Recall that after bidder $i$ steals item $j$ the price $p_j$ increases to $v_i(j|S_i)$ and any other price $p_{j'}$ cannot decrease. Thus bidder $i$ may steal each item $j$ at most $\Delta_j^i$ times. Consequently, the total number of steals is at most $\Sigma_{i\in N}\Sigma_{j\in M}\Delta^i_j$.
\end{proof}

\subsection{Budget Additive Bidders}

Recall that a valuation $v$ is \hlhf{budget} additive if there exists some $b$
such that for every $S$ we have that $v(S)=\min(b,\Sigma_{j\in S}v(\{j\}))$.
\hlhf{This $b$ is known as the \emph{budget} for the valuation.  In the sequel we will denote by $b_i$ the budget of bidder~$i$.}

The procedure for \hlhf{budget} additive bidders is similar to the iterative stealing procedure
presented earlier. We start with some allocation $(S_1,\ldots ,S_n)$. Each
bidder $i$ will maintain some order $>_i$ on the items according to which Step
\ref{step-update} is implemented. After each steal we will possibly update
$>_i$. We begin with some order $>_i$ such that for each bidder $i$ it holds
that each item $j\in S_i$ \hlhf{precedes} every item $j'\notin S_i$.  The
internal order of items within these two groups is arbitrary. When bidder $i$
steals item $j$ from bidder $i'$ we change $>_i$ and $>_{i'}$ so that $j$ is the
last item in both $>_i$ \hlhf{(within the new $S_i$)} and $>_{i'}$ \hlhf{(among
all items)}. The order of the rest of the items remains the same. For every item $j$ let the price of item $j$ be $p_j=\max_ib_i(j)$. Similarly to the proof of Claim \ref{claim-pseudo-better}, using this implementation the price of an item can only go up.

Given some allocation $(S_1,\ldots, S_n)$ we call an item $j\in S_i$ 
% such that $p_j\neq v_i(\{j\})$ \emph{loose} for bidder $i$,
\hlhf{\emph{loose} for bidder~$i$ if $p_j < v_i(\{j\})$;}
otherwise the item is called \emph{tight}.  A loose item~$j$ is \emph{strongly loose} if $p_j = 0$, otherwise it is \emph{weakly loose}.  We need a few easy claims:

\begin{claim}\label{claim-single-loose}
For any allocation, every bidder $i$ has at most one weakly loose item.
\end{claim}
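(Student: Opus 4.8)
The plan is to reduce the statement to a fact about bidder~$i$'s own bids. Since $(S_1,\dots,S_n)$ is an allocation, every item $j\in S_i$ satisfies $j\notin S_{i'}$ for all $i'\neq i$, so by the specification of the procedure $b_{i'}(j)=0$ for all $i'\neq i$; hence $p_j=\max_{i'}b_{i'}(j)=b_i(j)$. Thus whether $j\in S_i$ is tight, weakly loose, or strongly loose is determined entirely by comparing $b_i(j)$ with $v_i(\{j\})$: the item is loose iff $b_i(j)<v_i(\{j\})$, and among loose items it is strongly loose iff $b_i(j)=0$ and weakly loose iff $0<b_i(j)<v_i(\{j\})$.

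Next I would compute $b_i(j)$ from the budget-additive structure. Write bidder $i$'s items in the order $>_i$ as $a_1>_i\cdots>_i a_\ell$, and set $\sigma_k=\sum_{t=1}^{k}v_i(\{a_t\})$ with $\sigma_0=0$. The procedure sets $b_i(a_k)=v_i(a_k\,|\,\{a_1,\dots,a_{k-1}\})$, and since $v_i$ is budget additive with budget $b_i$, this equals $\min(b_i,\sigma_k)-\min(b_i,\sigma_{k-1})$.

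Then I claim $a_k$ is weakly loose for $i$ exactly when $\sigma_{k-1}<b_i<\sigma_k$. Indeed, $b_i(a_k)>0$ forces $\min(b_i,\sigma_{k-1})<\min(b_i,\sigma_k)$; since $t\mapsto\min(b_i,t)$ is nondecreasing this gives $\min(b_i,\sigma_{k-1})=\sigma_{k-1}$, and in fact $\sigma_{k-1}<b_i$ strictly, since $\sigma_{k-1}=b_i$ would make both terms equal to $b_i$ and hence $b_i(a_k)=0$. Given $\sigma_{k-1}<b_i$, the inequality $b_i(a_k)<v_i(\{a_k\})=\sigma_k-\sigma_{k-1}$ becomes $\min(b_i,\sigma_k)<\sigma_k$, i.e. $b_i<\sigma_k$. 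Conversely, if $\sigma_{k-1}<b_i<\sigma_k$ then $b_i(a_k)=b_i-\sigma_{k-1}\in(0,\sigma_k-\sigma_{k-1})$, so $a_k$ is weakly loose.

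Finally, since $\sigma_0\le\sigma_1\le\cdots\le\sigma_\ell$ is nondecreasing, at most one index $k$ can satisfy $\sigma_{k-1}<b_i<\sigma_k$: if $k_1<k_2$ both did, then $b_i<\sigma_{k_1}\le\sigma_{k_2-1}<b_i$, a contradiction. Hence bidder $i$ has at most one weakly loose item. There is no real obstacle here; the only points worth care are the initial observation that $p_j=b_i(j)$ because the allocation is a partition, and the boundary case $\sigma_{k-1}=b_i$, which produces a tight item (with $b_i(a_k)=0$) rather than a weakly loose one.
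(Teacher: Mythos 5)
Your proof is correct and rests on the same underlying fact as the paper's: for a budget-additive valuation the prefix sums of the singleton values cross the budget at most once, and a weakly loose item is precisely one at which that crossing is strict, so there can be at most one. Your write-up is a more explicit version of the paper's one-line argument (which notes that a weakly loose item means the budget is saturated, forcing every remaining item to have marginal value zero), and your preliminary observation that $p_j=b_i(j)$ for $j\in S_i$ is a point the paper uses implicitly.
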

\begin{proof}
Let $j$ be \hlhf{a} loose item and $T$ be the set of tight items held by bidder $i$. We know \hlhf{that} $j$ is weakly loose and therefore $v_i(\{j\})+T)=b_i$. Thus the marginal value of any additional item is $0$, and therefore any other item $j'\notin \{j\}+T$ is strongly loose.
%Recall that in the beginning of the algorithm each bidder $i$ has at most one loose item. If bidder $i$ has one loose item then the marginal value of every additional item that is not in $S_i$ is $0$. Since the price of all items is always bigger than $0$ bidder $i$ will not steal any items. In particular, if $i$ steals item $j$ this implies that he does not have any loose items, and if item $j$ is now loose, $i$ will not steal any other items until $j$ becomes tight.
\end{proof}

\begin{claim}\label{claim-steal-tight}
If item $j$ that is tight for bidder $i$ (given the current allocation) is stolen then $i$ will not steal item $j$ later.
\end{claim}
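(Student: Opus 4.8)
The plan is to exploit two facts: once $j$ is tight for $i$, the price of $j$ has already risen to $v_i(\{j\})$, which is the most $i$ could ever bid on~$j$ anyway; and prices only go up along the run of the procedure.

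\textbf{Step 1 (tightness pins down the price).} I would first observe that while $j\in S_i$, no bidder other than~$i$ places a positive bid on~$j$, since in this procedure a bidder bids positively only on items it currently holds. Hence the price of~$j$ equals $p_j=b_i(j)=v_i(j\mid\{1,\dots,j-1\}\cap S_i)$, the marginal value of~$j$ over a subset of $S_i\setminus\{j\}$. Budget-additive valuations are submodular, so this quantity is at most $v_i(\{j\})$. Since ``$j$ is tight for~$i$'' means exactly $p_j\ge v_i(\{j\})$, we conclude $p_j=v_i(\{j\})$ at the allocation in which $j$ is tight for~$i$.

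\textbf{Step 2 (monotonicity of prices).} Next I would invoke the fact, noted above for this implementation (and proved exactly as in Claim~\ref{claim-pseudo-better}, using the rule that the stolen item is moved to the end of both relevant orders), that $p_j$ is non-decreasing along the run of the procedure. Combined with Step~1, this gives $\tilde p_j\ge v_i(\{j\})$ at the allocation in which $j$ is stolen from~$i$ and at every later allocation $(\tilde S_1,\dots,\tilde S_n)$.

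\textbf{Step 3 (no re-steal).} Finally I would argue by contradiction: if at some later allocation bidder~$i$ were to steal~$j$, then by the stealing rule $v_i(j\mid\tilde S_i)>\tilde p_j$; but $\tilde p_j\ge v_i(\{j\})\ge v_i(j\mid\tilde S_i)$ (the last inequality again by submodularity), a contradiction. Hence $i$ never steals~$j$ after the moment it was tight for~$i$, in particular not after $j$ has been stolen from~$i$. The only step that needs any thought is Step~1 — the realization that, because nobody else bids on an item held by~$i$, the price equals $i$'s own marginal bid and is therefore squeezed between $v_i(\{j\})$ (from tightness) and $v_i(\{j\})$ (from submodularity); the rest is routine bookkeeping on top of the price-monotonicity lemma.
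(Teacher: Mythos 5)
Your proof is correct and follows essentially the same route as the paper's: tightness forces the price of $j$ up to $v_i(\{j\})$, price monotonicity keeps it there, and submodularity ($v_i(j\mid S)\le v_i(\{j\})$) then rules out any future steal by~$i$. The only cosmetic difference is that you pin down $p_j=v_i(\{j\})$ exactly before the steal, whereas the paper notes directly that the steal pushes the price strictly above $v_i(\{j\})$; both suffice.
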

\begin{proof}
After item $j$ is stolen by $i'$, the new price of $j$ is greater than
$v_i(\{j\})$. Since by submodularity for every bundle $S$, $v_i(\{j\}) \geq v_i(j|S)$, and since the price of $j$ only goes up, $i$ never steals item $j$ from any other bidder later in the algorithm.
\end{proof}

\begin{claim}\label{claim-steal-loose}
If item $j$ that is weakly loose for $i$ is stolen then $i$ will not steal item $j$ again during the algorithm before some tight item $j'\in S_i$ was stolen from him.
\end{claim}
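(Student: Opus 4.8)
The plan is to freeze, at the moment just before $j$ is stolen from bidder~$i$, the set $B$ of items that precede~$j$ in $i$'s order~$>_i$, and to track the total single-item value $W=\sum_{j'\in B} v_i(\{j'\})$ of this \emph{fixed} set. Two preliminary observations set things up. First, the implementation always keeps the items of $S_i$ ahead of the items not in $S_i$ in~$>_i$, so $B\subseteq S_i$; and while $i$ holds an item, every other bidder bids~$0$ on it, so for $j'\in B\cup\{j\}$ the price is exactly $b_i(j')=v_i(j'\mid\{\text{items before }j'\text{ in }>_i\})$. Plugging in budget-additivity, $b_i(j)=\min\{b_i,W+v_i(\{j\})\}-\min\{b_i,W\}$, and the defining inequalities of weak looseness, $0<b_i(j)<v_i(\{j\})$, force $W<b_i$ and $b_i(j)=b_i-W$.

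Second, I would show every $j'\in B$ is tight for~$i$ and stays tight until it is (if ever) stolen from~$i$. The same budget-additive computation gives $b_i(j')=\min\{b_i-W',v_i(\{j'\})\}$, where $W'\le W<b_i$ is the single-item value of the items preceding~$j'$; hence $b_i(j')>0$ (the degenerate case $v_i(\{j'\})=0$ making $j'$ trivially tight), so if $j'$ were loose it would be weakly loose, contradicting Claim~\ref{claim-single-loose} because $j'\neq j$ and $j$ is already weakly loose. Since prices can only increase in this implementation, once $p_{j'}\ge v_i(\{j'\})$ this persists as long as $i$ holds~$j'$, so each item of~$B$ remains tight for~$i$ until it is stolen from him.

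The last step is the monotonicity contradiction. Suppose $i$ steals $j$ back at some later step while no item of~$B$ has yet been stolen from~$i$; since items leave a bidder only by being stolen, $B\subseteq S_i$ still, so $W'':=\sum_{j'\in S_i}v_i(\{j'\})\ge W$. For the steal we would need $v_i(j\mid S_i)>p_j$, but prices never decrease so $p_j\ge b_i-W$, while budget-additivity gives $v_i(j\mid S_i)\le\max\{0,b_i-W''\}\le\max\{0,b_i-W\}=b_i-W$, a contradiction. Therefore some item of~$B$ --- tight for~$i$ and still tight when taken --- must be stolen from~$i$ before $i$ can steal~$j$ again, which is exactly the claim. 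I expect the only care-requiring point to be the bookkeeping for~$>_i$ and the identity $p_{j'}=b_i(j')$ for items held by~$i$; the rest is a two-line use of ``prices only go up.''
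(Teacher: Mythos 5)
Your proposal is correct and takes essentially the same route as the paper's (much terser) argument: all items preceding $j$ in $>_i$ are tight (via Claim~\ref{claim-single-loose}), and $i$'s marginal value for $j$ cannot rise above the only-increasing price $p_j\geq b_i-W$ until one of those preceding items is stolen from him. Your write-up simply makes explicit the budget-additive arithmetic and the order-maintenance invariants that the paper leaves implicit.
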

\begin{proof}
The marginal value of $j$ for $i$ does not increase unless some item $j'\in S_i$, $j'>_ij$, is stolen from bidder $i$. Since $j$ is weakly loose, every other item $j'$ 
%such that $j'>_ij$, $j'\in S_i$, 
\hlhf{preceding $j$ by $>_i$} is tight.
\end{proof}

\begin{claim}\label{claim-steal-strongly-loose}
If item $j$ that is strongly loose for $i$ is stolen then $i$ will not steal item $j$ again during the algorithm before some tight or weakly loose item $j'\in S_i$ was stolen from bidder $i$.
\end{claim}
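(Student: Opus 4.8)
The plan is to follow the template of the proof of Claim~\ref{claim-steal-loose}, but with a finer analysis of which items of bidder~$i$'s bundle pin $v_i$ at its budget~$b_i$. Fix the allocation $(S_1,\ldots,S_n)$ and the order $>_i$ at the moment $j$ is strongly loose for~$i$, and let $P\subseteq S_i\setminus\{j\}$ be the set of items preceding~$j$ in~$>_i$, so that $b_i(j)=v_i(j\mid P)$. Since $i$ holds~$j$ we have $p_j=b_i(j)=0$, and since $j$ is loose, $v_i(\{j\})>0$; as $v_i$ is budget additive, $v_i(j\mid P)=0$ together with $v_i(\{j\})>0$ forces $\sum_{k\in P}v_i(k)\ge b_i$, hence $v_i(P)=b_i$.

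The first step I would carry out is to extract from~$P$ a witness set~$P'$ consisting only of tight and weakly loose items, with $v_i(P')=b_i$ and $j\notin P'$. Enumerate $P=\{k_1,k_2,\ldots\}$ along~$>_i$; since $P$ is an initial segment of~$S_i$, the predecessors of $k_\ell$ inside~$S_i$ are exactly $k_1,\ldots,k_{\ell-1}$, so $p_{k_\ell}=b_i(k_\ell)=v_i(k_\ell\mid\{k_1,\ldots,k_{\ell-1}\})$. Let $s$ be least with $\sum_{\ell\le s}v_i(k_\ell)\ge b_i$ and set $P'=\{k_1,\ldots,k_s\}$. For $\ell<s$ the partial sum is still below~$b_i$, so $p_{k_\ell}=v_i(k_\ell)$ and $k_\ell$ is tight; for $\ell=s$ we get $0<p_{k_s}=b_i-\sum_{\ell<s}v_i(k_\ell)\le v_i(k_s)$, so $k_s$ is tight or weakly loose. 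Moreover $v_i(P')=\min\bigl(b_i,\sum_{\ell\le s}v_i(k_\ell)\bigr)=b_i$, and $P'\subseteq P\subseteq S_i\setminus\{j\}$.

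The second step is the monotonicity argument, just as in Claim~\ref{claim-steal-loose}. An item leaves a bidder's bundle only by being stolen, so until some item of~$P'$ is stolen from~$i$, bidder~$i$'s current bundle~$S_i'$ satisfies $P'\subseteq S_i'$ (and $j\notin S_i'$, since $j$ was stolen). Then $v_i(S_i')\ge v_i(P')=b_i$, and since $v_i\le b_i$ everywhere we get $v_i(S_i')=b_i$ and $v_i(S_i'\cup\{j\})=b_i$, hence $v_i(j\mid S_i')=0$. As bids are always nonnegative, the condition $v_i(j\mid S_i')>p_j$ required for~$i$ to steal~$j$ is never satisfied, so $i$ cannot re-steal~$j$. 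Since every element of~$P'$ is a tight or weakly loose item of~$S_i$, this is exactly the assertion of the claim.

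I do not anticipate a real obstacle; the one point needing care is the first step --- checking that the shortest prefix of~$i$'s order reaching the budget is a block of tight items capped by at most one weakly loose item, so that the strongly loose items of~$S_i$ (in particular~$j$ itself) are irrelevant to~$v_i(S_i)$ and their removal cannot drop~$i$ below budget. One should also keep the degenerate cases in mind ($v_i(\{j\})>0$ is exactly what looseness provides, and ties in the second-price allocation cause no trouble since they occur only among zero bids).
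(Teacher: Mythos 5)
Your proof is correct and follows essentially the same route as the paper's: the budget is already saturated by the tight and weakly loose items of $S_i$, so as long as none of them is removed the marginal value of the strongly loose item $j$ stays at $0$ and can never exceed the (nonnegative, nondecreasing) price $p_j$. Your extraction of the explicit witness prefix $P'$ makes rigorous the paper's one-line observation that ``for the marginal value to change, item $j'$ cannot be strongly loose as well,'' and has the minor advantage of not needing to track how the order $>_i$ evolves after the snapshot.
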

\begin{proof}
The marginal value of $j$ for $i$ does not increase unless some item in $j'\in S_i$, $j'>_ij$, is stolen from him. Observe that since $j$ is strongly loose, for the marginal value to change, item $j'$ cannot be strongly loose as well.
\end{proof}

We are now ready to give an upper bound on the number of steals. An easy
corollary of Claim \ref{claim-steal-tight} is that there are at most $n\cdot m$
steals of tight items.  The next claim shows that at least one of every $O(nm)$ consecutive steals must be a steal of a tight item, and this gives us an upper bound of $(n^2\cdot m^2)$ on the total number of steals.

\begin{claim}
Every sequence of $O(nm)$ steals must contain at least one steal of an item that is
tight.
\end{claim}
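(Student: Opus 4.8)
The plan is to argue by contradiction: suppose we have a long run of consecutive steals, none of which is a steal of a tight item; I will show this run cannot be too long. The guiding idea is a potential/amortization argument built on the claims already proved (Claims~\ref{claim-steal-tight}, \ref{claim-steal-loose}, \ref{claim-steal-strongly-loose}, \ref{claim-single-loose}) together with the fact that prices only go up. First I would fix attention on a maximal run $R$ of consecutive steals in which no tight item is ever stolen. For each bidder $i$, by Claim~\ref{claim-single-loose} there is at most one weakly loose item at any moment, so I would track the behavior of weakly-loose steals and strongly-loose steals separately. Claim~\ref{claim-steal-loose} says that once a weakly loose item $j$ is stolen from $i$, bidder $i$ cannot re-steal $j$ until some tight item leaves $S_i$; but by hypothesis no tight item is stolen during $R$, so within $R$ each bidder $i$ can have item $j$ stolen from it as a weakly loose item at most once. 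Hence the number of weakly-loose steals in $R$ is at most $nm$ (at most one per (bidder, item) pair, counting the victim).

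The harder part is bounding the number of strongly-loose steals. Here Claim~\ref{claim-steal-strongly-loose} only guarantees that $i$ cannot re-steal a strongly-loose item $j$ until some tight \emph{or weakly loose} item leaves $S_i$. Since tight steals are forbidden in $R$, the only events that "unlock" a re-steal of a strongly loose item from $i$ are weakly-loose steals victimizing $i$. So I would set up a charging scheme: charge each strongly-loose steal of item $j$ from bidder $i$ (other than the first such steal within $R$) to the most recent weakly-loose steal victimizing $i$ that preceded it; and note that between two consecutive weakly-loose steals from the same bidder $i$, each fixed item $j$ can be strongly-loose-stolen from $i$ at most once (again by Claim~\ref{claim-steal-strongly-loose} and the fact that prices/marginals of $j$ for $i$ do not rise without an intervening weakly-loose or tight steal from $i$). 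Therefore the number of strongly-loose steals in $R$ is at most $nm$ (for the "first" ones, one per (bidder,item) pair) plus $m$ times the number of weakly-loose steals in $R$, which is $O(nm^2)$ — hmm, that gives $O(nm^2)$, not $O(nm)$.

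So I expect the main obstacle to be getting the bound down to $O(nm)$ rather than $O(nm^2)$: the naive charging above loses a factor of $m$. The resolution I would pursue is to observe that the run $R$ cannot even contain more than one weakly-loose steal victimizing a given bidder $i$ *before* some item is stolen from $i$ in a way that changes which item is weakly loose — and more carefully, that a bidder's budget $b_i$ is "filled" by its tight items, so between weakly-loose-steal events the set of strongly loose items of $i$ only shrinks (items become tight only by... no, items don't become tight without prices changing). The cleanest route is probably to bound the *number of steals bidder $i$ is involved in as a victim* during $R$: each such victimization removes an item from $S_i$, and since no tight items are removed, after $|S_i| \le m$ victimizations $S_i$ contains only tight items (which cannot be stolen in $R$), so $i$ is victimized at most $m$ times in $R$, giving at most $nm$ total victimizations in $R$, hence $|R| \le nm$. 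I would present this last observation as the key step: it directly bounds $|R|$ by $nm$ because every steal in $R$ has a victim, every victim loses a non-tight item, and a bidder holding only tight items cannot be victimized in $R$. The subtlety to check is whether an item that is loose (strongly or weakly) for $i$ could \emph{become} tight for $i$ without leaving $S_i$; since tightness means $p_j \ge v_i(\{j\})$ and prices only rise, a loose item of $i$ can indeed become tight, but that only \emph{helps} — it further restricts which items of $i$ can be stolen — and one must verify it does not allow a previously-counted item to be re-stolen, which follows because once $j$ is tight for $i$, Claim~\ref{claim-steal-tight} forbids $i$ re-stealing it, and if $j$ is stolen from $i$ while tight that is a tight steal, contradicting the definition of $R$.
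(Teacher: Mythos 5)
Your bound on the weakly loose steals is correct and is essentially the paper's own argument: by Claim~\ref{claim-steal-loose}, within a run $R$ containing no tight steal, a given pair (victim $i$, item $j$) can account for at most one weakly-loose steal, since $i$ cannot re-steal $j$ without first losing a tight item; this gives at most $nm$ weakly-loose steals in $R$. The gap is in your treatment of strongly loose steals. Your first charging scheme, as you yourself note, only yields $O(nm^2)$. Your proposed ``cleanest route'' --- every victim loses a non-tight item, so after $|S_i|\le m$ victimizations $S_i$ consists only of tight items --- does not work, because $S_i$ is not monotonically shrinking during $R$: whenever $i$ steals an item $j$, the item is placed last in $>_i$ and its new price is $v_i(j\,|\,S_i)$, which by submodularity is in general strictly below $v_i(\{j\})$, so $i$ acquires a fresh \emph{loose} item with every steal it performs. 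Patching the count to ``initial loose items of $i$ plus steals performed by $i$ during $R$'' and summing over bidders gives only $|R|\le nm+|R|$, which is vacuous.

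The missing idea is simpler than what you attempted and is the one the paper uses: a strongly loose item has price $p_j=0$, and the moment it is stolen its price becomes the thief's marginal value, which is strictly positive (the steal condition forces it to exceed the current bid of $0$); since under this implementation prices never decrease, that item can never again be strongly loose, for any bidder. Hence there are at most $m$ strongly-loose steals in the \emph{entire execution}, not merely per window, and no charging against weakly-loose events is needed. Combining this global $m$ bound with your $nm$ bound on weakly-loose steals in a tight-steal-free run yields the claim. (The paper phrases the weakly-loose part as a pigeonhole over $nm+1$ consecutive weakly-loose-or-tight steals, but it is the same argument as yours.)
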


\begin{proof}
The price of each item only increases, so a strongly loose item, once stolen,
will never be a strongly loose item again.  Therefore altogether there may be at
most $m$ steals of strongly loose items. We will consider from now on only
steals of weakly loose and tight items.  Among every $nm + 1$ such steals, there
is at least one bidder who lost the same item twice; if in both occasions the
item was weakly loose, by Claim~\ref{claim-steal-loose}, the bidder has
stolen this item back only because she has lost a tight item.  This shows
that at least one of these $nm+1$ steals was one of a tight item.
\end{proof}

% \subsection*{Proof of Lemma \ref{lemma-topsteal-last}}

% \section{Missing Proofs of Section \ref{sec:subadd}}

% \subsection*{Proof of Claim \ref{claim-subadditive-exists-eq}}

\section{Missing Proofs of Section \ref{sec:xos}}

\subsection{The Example: Exponential Convergence of the XOS Algorithm}\label{app-xos-example}

Let $m$ be odd and let $m'=\lfloor m/2 \rfloor$. We now present a family of
submodular valuations that is parametrized by a family of \hlhf{positive} constants $k^i_S$, for every bidder $i$ and $S$ such that $|S|=m'+1$ and some $\epsilon>0$. 
% We will makes sure that for every $i$ and $S$, $k^i_S\cdot \epsilon<\frac 1 2$. 
% Hu: moved the sentence after the valuation
Each member of this family is defined as follows:
$$
v_i(S)=
\begin{cases}
|S|& \hbox{if }|S| \leq m' ;\\
m'+\frac 1 2 +k^i_S \cdot \epsilon & \hbox{if }|S|=m'+1;\\
m'+1& \hbox{if }|S| \geq m'+2.
\end{cases}
$$

Notice that $v_i$ is indeed a submodular function as long as $k_S^i \cdot
\eps < \tfrac 1 2$, which is easy to guarantee.
% for choices of $k^i_S$'s as above. 
Obviously, the key to getting an exponential path is choosing the $k^i_S$'s
carefully. To simplify the presentation, we will choose the $k^i_S$'s
\hlhf{as we proceed}.

We start with an allocation $(S_1,S_2)$ such that $|S_1|=m'$ and
$|S_2|=m-m'=m'+1$. Set $k^i_{S_1}=0$ for $i=1,2$. 
Bidder $1$ now bids $b_1(j)=1$ for every
item $j\in S_1$, except a single item $j'\in S_1$ where $b_1(j')=1/2$. The bid
on the rest of the items is $0$. Notice that this is a valid XOS clause of
$S_1$, and hence a \hlhf{legitimate} step in the algorithm. Now set $k^2_{S_2+j'}=1$. Notice that the best response of bidder $2$ is to take $S_2+j'$ (regardless of the choice of the other $k^2_S$'s).

The allocation now is \hlhf{$(S'_1,S'_2)=(S_1-j', S_2+j')$}. Choose some
$j''\neq j'$ in $S'_2$. Set $k^2_{S'_2}=2$. Let $b_2(j)=1$ for every item $j\in
S_2$ other than $j''\in S'_2$ and $b_2(j'')=1/2+2\epsilon$ for $j''$. The bid on
the rest of the items is $0$ again. Observe that, \hlhf{as long as we set
$k_{S'_1 + j''}^1 = 3$,} the best response of player~$1$ is to take $S'_1+j''$.
It should be clear that we can iterate this construction
to produce an exponential path such that the two
players keep exchanging a single item in each best response.  \hlhf{To be
specific,} we can use a monotone Gray code~\cite{SW95} to explicitly construct such a path (we only use the numbers that have hamming weight between $m'$ and $m'+1$). The zeros in every codeword represent items allocated to player $1$, and the ones represent items allocated to player $2$.

\subsection*{Proof of Proposition \ref{prop-xos-equiv}}
We divide the analysis into cases, according to the size of $S$:
\begin{enumerate}
\item $|S|\leq m'-20$: $C_{j}(S)=m'-20$, while for every bundle $S'$ and item $j$, $M_{S',j}(S)=A_{S'}(S)\leq |S|\leq m'-20$ and $B_{S'}(S)=\frac {m'+1} {m'+10}\cdot |S|<m'-20$.

\item $m'-20 < |S|\leq m'$: $A_{S}(S)=|S|$. On the other hand, for every other bundle $S'$ and item $j'$, $C_{j'}(S)=m'-20$, $B_{S'}(S)=\frac {m'+1} {m'+10}\cdot |S|<|S|$, and $M_{S',j'}(S)\leq A_S(S)$.

\item $m'+1 \leq |S| \leq m'+9$: let $T\subseteq S$ be a maximum value bundle of
size $m'+1$. Observe that $v_i(T)=v'_i(T)$. There exists some $j\in T$
such that $M_{T,j}(S)=m'+\frac 1 4+k^i_S$. However, for every other bundle $S'$
and item $j'$, $A_{S'}(S)\leq M_{S',j'}(S)\leq v_i(T)$, $C_{j'}(S)=m'-20$ and
\hlhf{$B_{S'}(S) \leq \tfrac{m' + 1}{m' + 10} \cdot (m' + 9) = m' + 1 -
\tfrac{m' + 1}{m' + 10} \leq m' + \tfrac 1 4 \leq v_i(S)$}.
% =\frac {m'+1} {m'+10}\cdot (m'+1)<m'+\frac 1 4\leq v_i(S)$.

\item $|S|\geq m'+10$: for some bundle $S'\subseteq S$, $|S'|=m'+10$ we have that $B_{S'}(S)=m'+1$, and this value is larger than the maximum value of every other additive valuation in the support of $v'_i$.
\end{enumerate}

\subsection*{Proof of Proposition \ref{prop:eq-char}}

We divide again into several cases. We only consider cases where $|S_1|\leq m'$, otherwise we have that $|S_2|\leq m'$ and the proof is symmetric.
\begin{enumerate}
\item $|S_1|\leq m'-20$: in this case, \hlhf{$v_1(S_1) = m' - 20$ and} $|S_2|=m-|S_1| \geq m-m'+20 > m'+10$.
Hence, $v(S_2)=m'+1$. In particular, bidder $2$ bids $0$ on $m-m'-10$ items
\hlhf{(since we are at a \emph{traditional} equilibrium and in the standard XOS representation at most $m'+10$ items get a positive value)}. In this case bidder $1$ is better off bidding $\epsilon>0$ on each of these items and $0$ on all other items, since his value will increase to at least $m-m'-10$ and his payment will be $0$.

\item $m'-20 < |S_1|\leq m'$: here bidder $1$ bids $1$ on every item $j\in S_1$.
Bidder $2$ is using either a $B_S$ clause or an $M_{S,j}$ clause. If bidder $2$
uses a $B_S$ clause then the price of every item in that clause is less than
$1$. Furthermore, $|S_2|\geq m'+10$ and therefore $|S_1|\leq m-m'-10$. This
implies that if bidder $1$ increases his bid to $1$ for any single item in
$S_2$: his value will increase by $1$ while his payment will increase by a
smaller amount. Notice that this is still a no-overbidding strategy for bidder
$1$. 
% \HuNote{Again we don't really have to argue that a deviation is no-overbidding.}

Hence, \hlhf{the only possibility is} that bidder $2$ uses an $M_{S,j}$ clause.
Without loss of generality $|S_2|=m-m'$ (otherwise there are some items in $S_2$
that bidder $2$ bids $0$ on). Assume that \hlhf{$S_2$} is not a $j$-local maximum. Denote $S'_1=M-S_2+\{j\}$ and observe that $v(S'_1)>v(S_2)$. Notice that Bidder $1$ can increase his profit by taking $S'_1$: his profit in this case is $v(S'_1)-p_j=v(S'_1)-(v(S_2)-m')> m'$, and it is easy to see that $v(S_1)\leq m'$ since $|S_1|\leq m'$.
\end{enumerate}

\subsection*{Proof of Lemma \ref{lemma-bounded-cover}}
Let $S$ be a bundle covered by $\dq$, and let $v$ be a sensitive
valuation such that $\dq(v)=S$. Denote the price of every item $j$ in $\dq$ by $p_j$. Let $t\in\argmin_j p_j$. Notice that $p_t\leq \frac 1 2$, otherwise any bundle of size $m'$ is more profitable than $S$.  Since the profit from $\{t\}$  is $m'-20-p_t$ and yet the demand query returns $S$ it holds that: 
\begin{eqnarray}
m'+\frac 1 4 + k_S - \sum_{j \in S} p_j &\geq& m'-20-p_t \nonumber\\ 
m'+\frac 1 4 + k_S - \sum_{j\in S} p_j &\geq& m'-20-\frac 1 2   \nonumber\\
21 & > &  \sum_{j\in S}p_j \label{eqn-profit}
\end{eqnarray}
\HuNote{removed the superscript of $k^i_S$, since we don't talk about bidders
here.}
%In particular, this gives us a bound on the profit of $S$:
%\begin{equation}
%v(S)-\Sigma_{j\in S}p_j \geq m'+ \frac 1 4 - 20
%\end{equation}

Next we observe that there is no set $S'\subseteq M-S$, \hlhf{$|S'| = 9$}, such that $\Sigma_{j\in S'}p_j<\frac 1 2$. This is true since if such $S'$ exists then $S\cup S'$ is more profitable than $S$ (observe that $v(S\cup S')=m'+1$ since $|S\cup S'|=m'+10$):
\begin{align*}
v(S\cup S')-\Sigma_{j\in S\cup S'}p_j & > m'+1 - \Sigma_{j\in S}p_j - \frac 1 2
\\
& \geq v(S)-\Sigma_{j\in S} p_j
\end{align*}

Therefore, for all items $j\in M-S$ but \hlhf{at most nine} of them, it holds that $p_j\geq \frac 1 {18}$. Denote these expensive items by $E$ and let $C=M-S-E$ (i.e., the set of cheap items). 

The crux of the proof is that if some bundle $G$ is covered by $\dq$, then it
cannot contain more than $18\cdot 21=378$ items from $E$, otherwise $21 <
\Sigma_{j\in G}p_j$, a contradiction to \eqref{eqn-profit}. Thus, each
$G$ covered by $\dq$ is composed of up to $378$ items from~$E$
(of which there are at most $(|E|+1)^{378}$ possible choices, where we use
$|E|+1$ and not just $|E|$ since less than 378 items might be chosen from $E$),
up to nine items from $C$ (of which the number of possible choices is $2^9$), and additional
items from $S$ (we use the bound $(|S|+1)^{387}$ since we have to choose up to
$387 = 378+9$ items to exclude from $S$). A bound on the number of bundles that can be
constructed this way --- and hence also a bound on the number of bundles covered
by $\dq$ --- is therefore:
\begin{align*}
(|E|+1)^{378}\cdot 2^9 \cdot {(|S|+1)^{387}  }& \leq m^{378}\cdot 2^9 \cdot
m^{387} \\
& < 1000\cdot m^{765}
\end{align*}

\section{Local Maximum vs.\@ Nash Equilibrium}\label{app-PLS}

Consider combinatorial auction with submodular bidders. A \emph{local maximum} is an allocation of the items $(S_1,\ldots, S_n)$ such that for every $j\in S_i$ and player $i'$ we have that $\Sigma_kv_k(S_k)\geq v_i(S_i-\{j\})+v_{i'}(S_{i'}+\{j\})+\Sigma_{k\neq i, i'}v_k(S_k)$. Already in \cite{CKS10} it was observed that if the valuations are submodular then for every allocation $(S_1,\ldots, S_n)$ that is a local maximum there are bids that together constitute a no overbidding equilibrium (this is the reasons why the price of stability in these games is $1$: the welfare maximizing allocation is obviously a local maximum). 

One could have hoped that to find an equilibrium with a good approximation one could just find a local maximum with a good approximation. However, finding a local maximum turns out to be PLS complete. To see this, we start with the PLS complete problem of finding a maximal cut and convert it to a combinatorial auction with two submodular bidders with identical valuations. Given a weighted graph, the set of items will be the set of items and the value of a set of vertices $S$ is the weight of all edges that have at least one end point at $S$. Observe that a local maximum in the combinatorial auction corresponds to a maximal cut, and vice versa.

Interestingly, in our reduction it is easy to find a nash equilibrium in
polynomial time: this is a direct consequence of our algorithm for a constant
number of submodular bidders (the number of bidders is two in our reduction).
This in particular proves that the set of equilibria \hlhf{strictly contains} the set of local maxima (hence it is easier to find an equilibrium than a local maximum).

\end{document}